\definecolor{sonnengelb}{rgb}{0.9,0.5,0} 
\definecolor{brown}{rgb}{0.5,1,0}
\begin{document}
\mainmatter              
\title{Broder's Chain Is Not Rapidly Mixing\thanks{This work
    was supported by the DFG Focus Program Algorithm Engineering,
    grant MU 1482/4-3.}}
\titlerunning{Sampling of Perfect Matchings}
\author{Annabell Berger and Steffen Rechner}
\institute{Department of Computer Science\\
  Martin-Luther-Universit\"at Halle-Wittenberg\\
\email{\{berger, rechner\}@informatik.uni-halle.de}}
\authorrunning{A. Berger and S. Rechner}
\maketitle

\begin{abstract}
We prove that Broder's Markov chain for approximate sampling near-perfect and perfect matchings is not rapidly mixing for Hamiltonian, regular, threshold and planar bipartite graphs, filling a gap in the literature. In the second part we experimentally compare Broder's chain with the Markov chain by Jerrum, Sinclair and Vigoda from 2004. For the first time, we provide a systematic experimental investigation of mixing time bounds for these Markov chains. We observe that the exact total mixing time is in many cases significantly lower than known upper bounds using canonical path or multicommodity flow methods, even if the structure of an underlying state graph is known. In contrast we observe comparatively tighter upper bounds using spectral gaps. \\

\textbf{Keywords:} sampling of matchings $\bullet$ rapidly mixing Markov chains $\bullet$ permanent of a matrix $\bullet$ random generation $\bullet$ monomer-dimer systems $\bullet$ Markov chain Monte Carlo
\end{abstract}
 
\section{Introduction}\label{Section:Introduction}

\paragraph{Uniformly Generating Perfect Matchings}
The uniform generation of matchings or perfect matchings in bipartite graphs are classical and well-studied problems in combinatorial optimization. Sampling matchings is also an important tool for statistical physics (there called \emph{monomer-dimer systems}, see \cite{Heilmann04}). The problem demonstrates advantages but also symptomatic difficulties of current sampling techniques. Let us briefly summarize open questions and typical difficulties. For a more general overview consider the publications of Jerrum, Sinclair and Vigoda \cite{Jerrum:1989:AP:76071.76077,JerrumSinclairVigoda04}. The problems of \emph{counting matchings} and \emph{counting perfect matchings} belong to the class of~$\#P$-complete problems \cite{Valiant79}. On the contrary they are \emph{self-reducible}, i.e., the solution set~$S$ can be expressed in terms of a polynomially bounded number of solution sets~$S_i,$ such that each~$S_i$ belongs to a smaller problem instance. Self-reducible problems are interesting with respect to a result of Jerrum, Vazirani and Valiant \cite{JerrumVV86}, proving the equivalence of the existence of a \emph{fully-polynomial randomized approximation scheme} (FPRAS) and a \emph{fully polynomial almost uniform sampler} (FPAUS). Jerrum and Sinclair \cite{Jerrum:1989:AP:76071.76077} proved in 1989 the existence of an FPAUS for matchings in bipartite graphs. Jerrum, Sinclair and Vigoda \cite{JerrumSinclairVigoda04} constructed in 2004 an FPAUS for perfect matchings. Hence, the associated counting problems have an FPRAS, because they are self-reducible. Note that exact counting and uniform sampling perfect matchings in planar graphs is efficiently possible \cite{Kasteleyn61,Edmonds1965a}. In contrast exact counting all matchings is~$\#P$-complete in planar graphs \cite{Jerrum87}.
\paragraph{Metropolis Markov chains for Sampling Matchings}\label{paragraph:MetropolisChains}
One important tool for approximate sampling are \emph{Metropolis Markov chains} which can be seen as a random walk on a set $\Omega$ of combinatorial objects (for example matchings), the so-called \emph{states}, which are connected by a given \emph{neighborhood-structure}, i.e., two states $x,y \in \Omega$ are \emph{neighbored} if they differ by a local change. This definition induces a so-called \emph{state graph} $\Gamma=(\Omega,\Psi)$ representing the objects and their adjacencies. Furthermore, we define for each neighbored vertex pair~$x,y$ in~$\Gamma$ the so-called \emph{proposal probability} $\kappa(x,y)$ and a \emph{weight function} $w(x)$ for each $x \in \Omega$. A step from~$x$ to~$y$ is done with \emph{transition probability} 
\begin{eqnarray}
P(x,y)=\kappa(x,y)\min{ \left(1,\frac{w(y)}{w(x)} \right)}.
\label{eqn:transition_probability}
\end{eqnarray}
The matrix $P:=(P(x,y))_{x,y \in \Omega}$ is called \emph{transition matrix}. \emph{The fundamental theorem for Markov chains} says that the chain in Algorithm~\ref{alg:Metropolis} converges for $t \to \infty$ to the unique, \emph{stationary distribution}~$\pi$, if state graph $\Gamma$ is non-bipartite, connected and \emph{reversible} with respect to~$\pi,$
i.e., $\pi(x)P(x,y)=\pi(y)P(y,x).$

\begin{algorithm}[H]
    \caption{Metropolis Markov chain} \label{alg:Metropolis}
    \begin{algorithmic}[1]
      \REQUIRE~$x \in \Omega$,~$\Psi \subset \Omega \times \Omega$,~$t$,~$\kappa$,~$w$ \qquad  //$t$ denotes the number of steps.
      \ENSURE~$y \in \Omega$ with probability~$\pi(y)=\frac{w(y)}{\sum_{x \in \Omega}w(x)}$.
      \FOR{i=1 to t}
         \STATE (i) Pick a neighbor~$y$ of~$x$ with probability~$\kappa(x,y).$
        	\STATE (ii)~$x \leftarrow y$ with probability~$\min{ \left( 1, \frac{w(y)}{w(x)} \right)}.$
      \ENDFOR
      \end{algorithmic}
      \label{alg:mmc}
  \end{algorithm}

We investigate three Metropolis Markov chains, the \emph{monomer-dimer-chain} \cite{Heilmann04,Jerrum:1989:AP:76071.76077}, \emph{Broder's chain} \cite{broder86,Jerrum:1989:AP:76071.76077} and the \emph{JSV-chain} \cite{JerrumSinclairVigoda04}. The monomer-dimer chain is used for uniformly sampling matchings $\mathfrak{M}(G)$ in a bipartite graph $G=(U \cup V,E)$ with $2n$ vertices. The other two chains are for uniformly sampling near-perfect matchings $N(G)$ and perfect matchings $M(G)$ in~$G$. Let $N_{uv}(G) \subset N(G)$ be the subset of near-perfect matchings such that the  vertices~$u,v$ are unmatched. For simplicity, we often set $N_{uv}:=N_{u,v}(G)$, when graph~$G$ is unambiguous. In all three chains two matchings~$M,M'$ are neighbored if they differ by (a) exactly one edge, (b) by two adjacent edges, i.e., the symmetric difference is~$M \triangle M'=\{e=(u,v),e'=(u',v)\}$ or (c)~$M=M'.$ When choosing an edge~$e'=(u',v)$ one has to decide if adding or deleting of~$e'$ in~$M$ leads to a neighbor~$M'$. In the JSV-chain the idea is mainly the same, the set of possible~$e'$ differs from the first two steps. For the exact definition of the chains consider \cite{Jerrum:1989:AP:76071.76077,JerrumSinclairVigoda04}. Much more interesting is the setting of $w(x)=1$ for all $x \in \Omega$ for the monomer-dimer chain and for Broder's chain which simplifies equation~(\ref{eqn:transition_probability}), but leads to a major disadvantage. The problems arise if the fraction~$|N(G)|/|M(G)|$ is too large, e.g. $|N(G)|/|M(G)| > 2^n$, because the chain samples all near-perfect and perfect matchings with the same probability so the expected number of trials to find a perfect matching is in $\mathcal{O}(2^n)$. This problem has been overcome by Jerrum, Sinclair and Vigoda in  the JSV-chain defining 
\begin{equation} w(M):=\begin{cases}1,&M \in M(G),\\ \frac{|M(G)|}{|N_{uv}(G)|},&M \in N_{uv}(G), \end{cases} \label{eqn:weight}\end{equation}
leading to $w(N(G))/w(M(G))=n^2$. Note that all perfect matchings are uniformly distributed, but~$\pi$ is not the uniform distribution. Unfortunately, computing the exact value of~$w$ is~$\#P$-complete as we proved (see Proposition~\ref{prop1} in the appendix). Jerrum et al. \cite{JerrumSinclairVigoda04}  propose a simulated annealing approach with asymptotic running time $O(n^{11}\log^2(n) \log(1/\eta))$ for a correctness probability $1-\eta$. Bez{\'a}kov{\'a} et al. \cite{Bezakova06acceleratingsimulated} improved it to $O(n^7 \log^4(n)).$ This time dominates the running time for the JSV-chain and is for many applications orders of magnitude too large.

\paragraph{Mixing Time and Upper Bounds}
The main question about Markov chains is their efficiency. How fast does a Metropolis chain converge to its stationary distribution~$\pi$? We denote the probability distribution of a chain at time~$t$ with initial state~$x \in \Omega$ by~$p^t_x.$ The \emph{variation distance} is given by~$d_{tv}(\pi,p_x^t):=\frac{1}{2}\sum_{y \in \Omega}|\pi(y)-p_x^t(y)|$. The \emph{total variation distance}~$d(\pi,t):= \max_{x \in \Omega}d_{tv}(\pi,p^t_x) \label{eqn:total_variation_distance}$ is a certain kind of ``worst case'' variation distance not depending on the initial state in~$\Gamma$. The \emph{total mixing time} is defined by $\tau(\epsilon) = \min \{~t~|~d(\pi,t) \leq \epsilon\}$.
Let~$1 = \lambda_1 > \lambda_2 \geq \ldots \geq \lambda_{|\Omega|} > -1$ be the eigenvalues of transition matrix~$P$ and $\lambda_{\max}:= \max \{ |\lambda_2|, |\lambda_{\Omega}| \}$. The total mixing time can be bounded by the \emph{spectral bound} \cite{Sinclair92}, i.e.,
\begin{eqnarray}
\tau(\epsilon) \leq \left(1-\lambda_{\max}\right)^{-1} \cdot \left( \ln(\epsilon^{-1}) + \ln(\pi_{\min}^{-1}) \right).
\label{eqn:spectral_bound}
\end{eqnarray}
Note that~$\pi_{min}$ denotes the smallest component of~$\pi.$ Sinclair's \emph{multicommodity flow method} is often used for bounding the mixing time. Let~$\mathcal{P} = \bigcup_{x\not=y} \mathcal{P}_{xy}$ be a family of simple paths in~$\Gamma$, each~$\mathcal{P}_{xy}$ consisting of simple paths between~$x$ and~$y \in \Omega$. Following \cite{Sinclair92}, a \emph{flow} is a function~$f \colon \mathcal{P} \to \mathbb{R_+}$. We define two flow functions~$f_1$ and~$f_2$ satisfying for all~$x,y \in \Omega, x \not= y$,
\[
\sum_{p \in \mathcal{P}_{xy}} f_1(p) = \pi(x)\pi(y) \quad  \text{ resp. } \quad \sum_{p \in \mathcal{P}_{xy}} f_2(p) = \pi(x)\pi(y)|p|.\] 
The \emph{maximum loading} $\rho_i$ ($i \in \{1,2\}$) for an arc~$a$ in~$\Gamma$ with respect to~$\mathcal{P}$ is then defined as 
\begin{equation}
\rho_i(\mathcal{P}):=\max_{a \in \Psi} (f_i(a)/Q(a))\label{eqn:maximumLoading}
\end{equation}
with~$Q(a) := Q(u,v) = \pi(u) P(u,v)$ for an arc~$a=(u,v) \in \Psi$ in~$\Gamma$ and $f_i(a): = \sum_{p \in \mathcal{P} \colon a \in p} f_i(p).$ 
By \cite[Theorem~3', Proposition~1~(i)]{Sinclair92} and \cite[Theorem~5', Proposition~1~(i)]{Sinclair92} for any system of paths~$\mathcal{P}$  the mixing time of a reversible Markov chain where $\lambda_{\max} = |\lambda_2|$ can be bounded by the \emph{multicommodity bound}, i.e.,
\begin{eqnarray}
\tau(\epsilon) &\leq& \rho_2(\mathcal{P}) \cdot \left( \ln(\epsilon^{-1}) + \ln(\pi_{\min}^{-1}) \right)\label{eqn:flow2}.
\end{eqnarray}
The quality of the multicommodity bounds depend on $\mathcal{P}$ and will be discussed in Section~\ref{sec:experiments}. The multicommodity bound (\ref{eqn:flow2}) using the maximum loading $\rho_2$ is always better than an analogous result with $\rho_1$. We need the definition of $\rho_1$ for lower bounding the mixing times in our proofs. If $\tau(\epsilon) \leq p(n)( \ln(\epsilon^{-1}) + \ln(\pi_{\min}^{-1}))$ for each input instance of size $n$, where $p$ is a polynomial which depends on~$n$, a Metropolis Markov chain is called \emph{rapidly mixing}. In Table~\ref{table:mixingtimes} an overview of the best known upper bounds for mixing times. We observe that the mixing time of Broder's chain is way too large to be practicable and depends on the fraction $|N(G)|/|M(G)|$. The mixing time of the JSV-chain is not that large, but the computation of weights~$w$ becomes the bottleneck and limits its practical applicability as explained above. 

\begin{table}[t]
	\centering
    \begin{tabular}{ | l | l | l | l |}
    \hline
    chain & graph class &rapidly mixing & best known bound of $\tau(\epsilon)$\\ \hline
    Broder's & bipartite & no (this paper) & $16^2|E|^2\left(\frac{|N(G)|}{|M(G)|}\right)^4\ln(|\Omega|\cdot\epsilon^{-1})$ \cite{Sinclair:1993:ARG:140552} \\
     & dense bipartite & yes \cite{Jerrum:1989:AP:76071.76077} & $6 \cdot n^7 \ln(|\Omega|\cdot\epsilon^{-1})$ \cite{Diaconis99statisticalproblems} \\
    JSV& bipartite &yes \cite{JerrumSinclairVigoda04}& $n^4 \ln((\pi_{min}\cdot\epsilon)^{-1})$ \cite{Bezakova06acceleratingsimulated}\\ \hline
    \end{tabular}
    \caption{Upper bounds of $\tau(\epsilon)$.}
    \label{table:mixingtimes}
\end{table}

\paragraph{Motivation and Contribution}\label{Paragraph:MotivationAndContribution}
The main cause for the impracticality of rapidly mixing Markov chains are the high degree polynomial bounds of the mixing time.
Figure~\ref{figure:MixingboundSinclair} shows the exact total mixing time of Broder's chain for a small example of a bipartite graph with six vertices. Note that this graph is dense in the sense that its minimal vertex degree is not less than $n/2$, so we can apply $6n^7\cdot\ln\left( |\Omega| \epsilon^{-1} \right)$ from Table~\ref{table:mixingtimes} as an upper bound. Clearly, the upper bound differs from the exact total mixing time by several orders of magnitude. We identify two possible explanations for the large difference.

\begin{figure}[t]
	\centering
	\begin{subfigure}[b]{0.25\textwidth}
		\centering
		\includegraphics[width=0.3\textwidth]{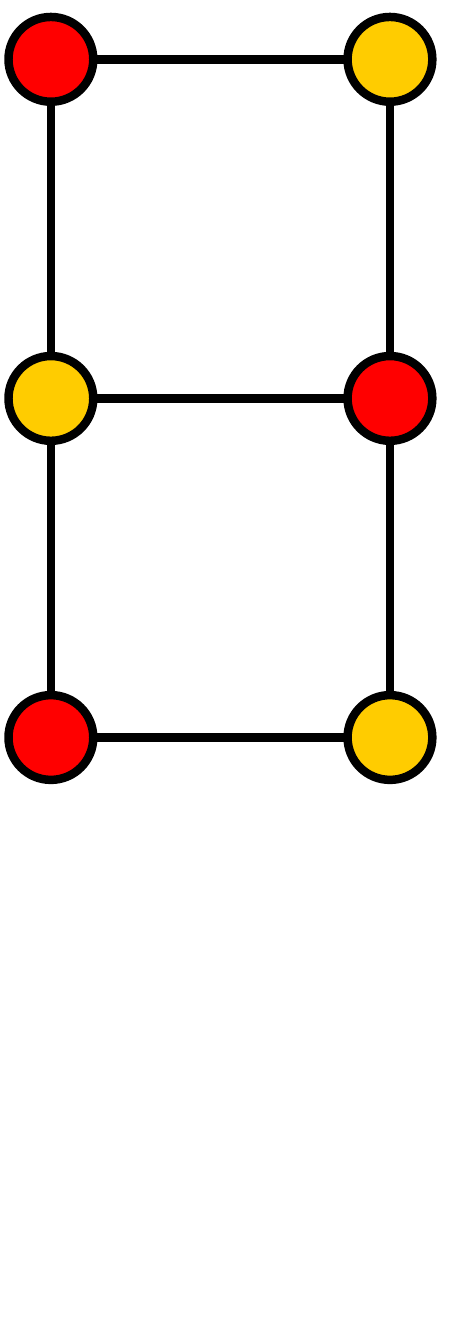}
		\caption{Bipartite graph~$G$}
		\label{figure:bipExample}
	\end{subfigure}
	~
	\begin{subfigure}[b]{0.7\textwidth}
		\centering
		\includegraphics[width=0.65\textwidth]{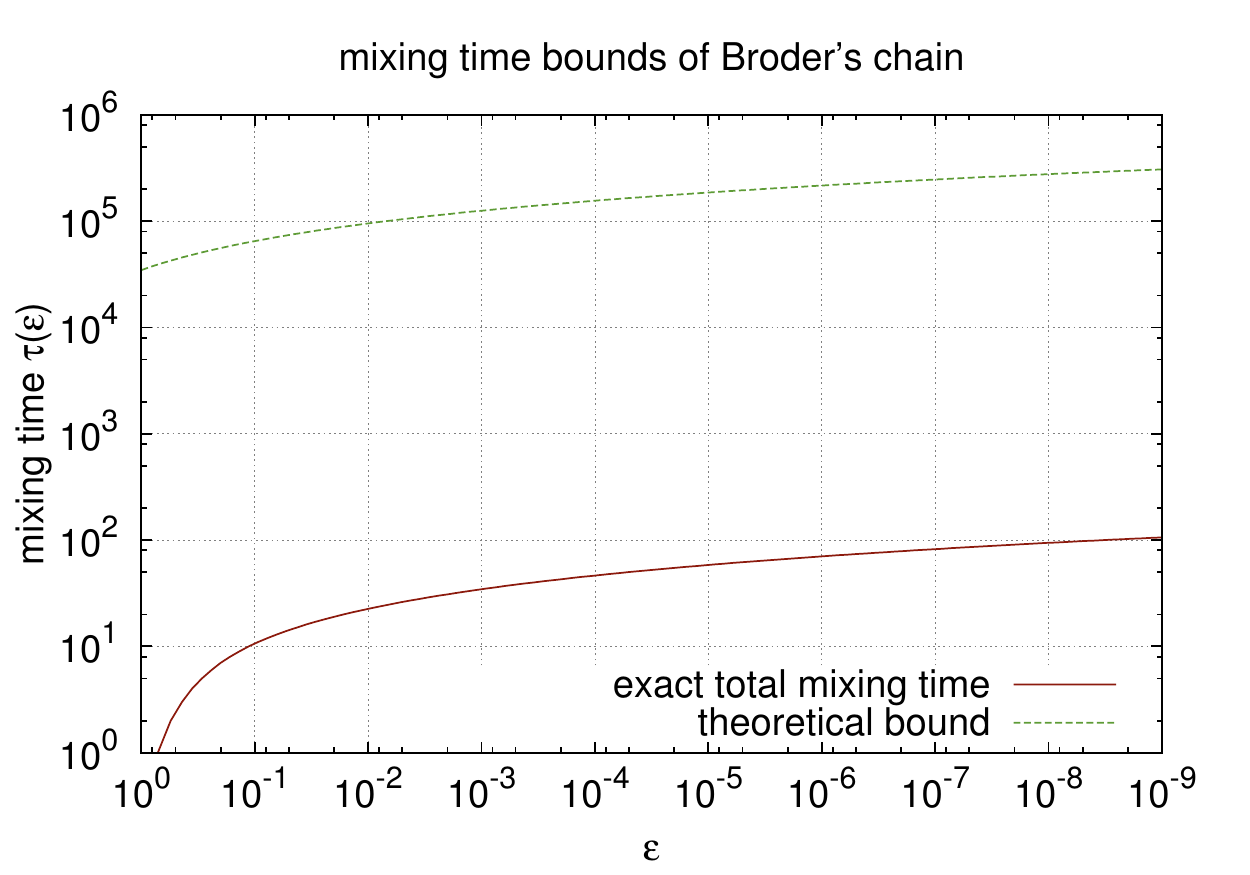}
		\caption{Mixing time versus theoretical bound}
	\end{subfigure}
	\caption{Example graph and corresponding mixing time of Broder's chain}
	\label{figure:MixingboundSinclair}
\end{figure}

\begin{figure}[t]
\centering
\includegraphics[width=\textwidth]{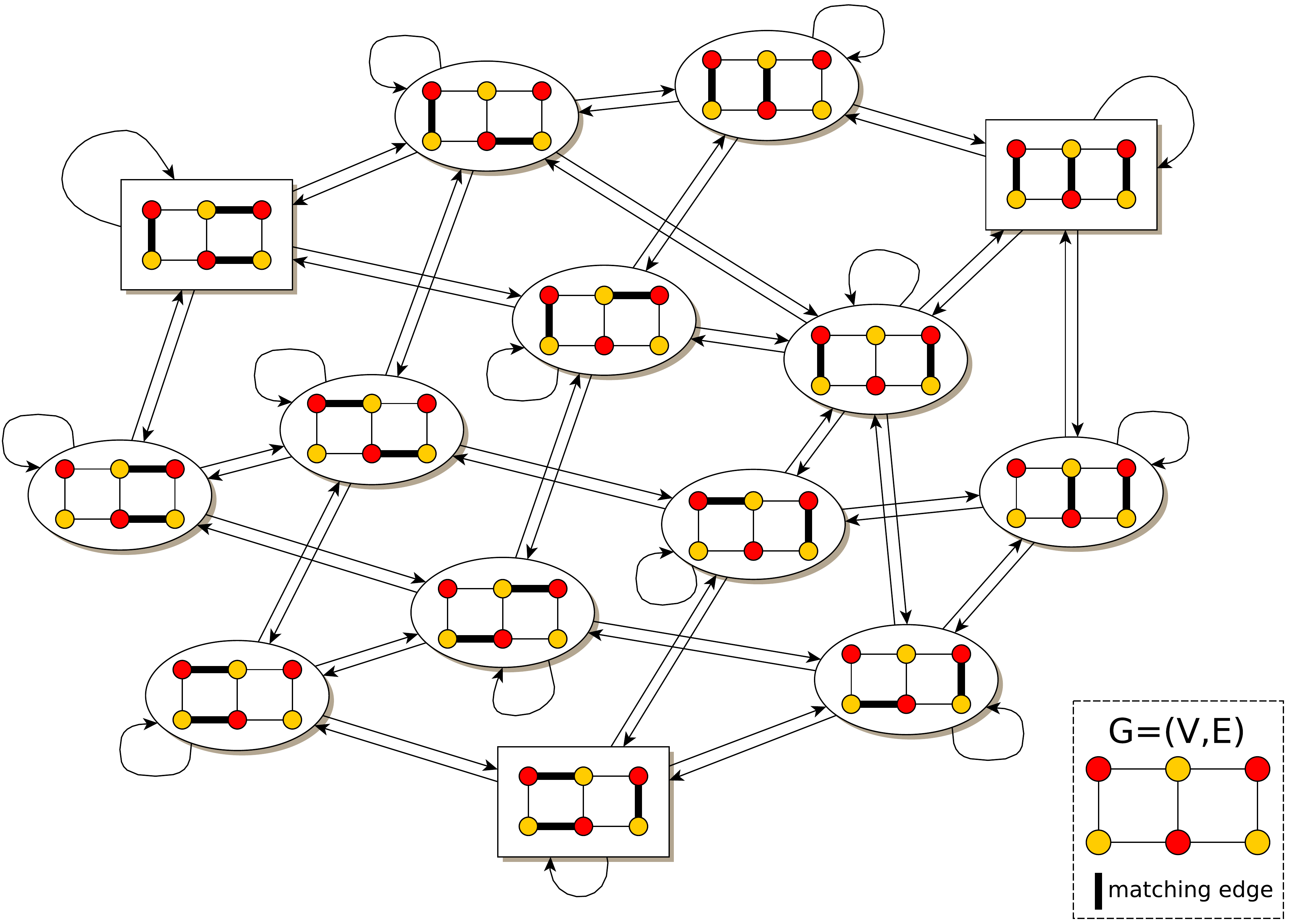}
\caption{Corresponding state graph to bipartite graph in Figure~\ref{figure:bipExample}. Three perfect matchings (rectangular), and 11 near-perfect matchings (elliptic) connected by transition rules of Broder's chain. Transition probabilities omitted}
\label{figure:stategraph}
\end{figure}

\begin{enumerate}
\item Does the difference result from a lack of structural insights about state graphs while applying the bounding methods? Would applying the methods to a known state graph lead to tighter bounds?
\item Does the difference result from a possible weakness of the bounding methods and still occurs when the structure of~$\Gamma$ is given explicitly?
\end{enumerate}

We try to give experimental answers to these questions. For all bipartite graphs with up to~$12$ vertices we computed the corresponding state graphs and determined exact total mixing times and several upper bounds as defined in the previous paragraph.
Note that in practice a state graph is never constructed explicitly because one only needs to know the current state of~$\Omega$ and how to construct a neighbor. In theoretical considerations, one also does not know the explicit structure of a state graph and would possibly try to prove common properties for a better application of bounding methods. In contrast, in our experiments the state graphs are constructed explicitly. We observe that the multicommodity bound depends on the set of paths~$\mathcal{P}$ and is significantly larger than the total mixing time, whereas the spectral bound seems to be much tighter.  We recommend for future work to investigate the structure of state graphs combined with additional research in spectral graph theory.\\
Up to this paper it has been open, whether Broder's chain is rapidly mixing or not. Sinclair proved in \cite{Sinclair:1993:ARG:140552} (see Table~\ref{table:mixingtimes}) that the mixing time of a graph~$G$ corresponds to the fraction~$|N(G)|/|M(G)|$. For dense graphs Broder showed that~$|N(G)|/|M(G)| \leq n^2.$ We prove $\#P$-complete\-ness for the computation of this value (see Theorem~\ref{SharpComplete} in the appendix). Moreover, for the first time we prove that Broder's chain is not rapidly mixing in general. Especially, we analyze several classes of graphs (regular, planar, threshold, Hamiltonian) for this property. Interestingly, planar graphs \cite{Edmonds1965a} and threshold graphs can be sampled efficiently using other methods. For threshold graphs we propose a simple and efficient approach based on ideas of Brualdi and Ryser \cite[Corollary 7.2.6.]{brualdi:91}. It is a remarkable weakness of Broder's chain that it cannot rapidly sample a (near)-perfect matching in planar or threshold graphs. Future research should be concentrated on the construction of more efficient techniques to estimate the weights $w$ for the JSV-chain, especially for special graph classes.

\paragraph{Overview}   
 In Section \ref{sec:Broder'sChain} we prove that Broder's chain is not rapidly mixing. In Section~\ref{sec:experiments} we make experiments on the total mixing times and several bounds of Broder's chain and the JSV-chain. In the appendix we prove the~$\#P$-completeness for the computation of~$\frac{|N(G)|}{|M(G)|}.$

\section{Large Mixing Times for Broder's Chain}
\label{sec:Broder'sChain}

In this section we show that Broder's chain is not rapidly mixing. For a long time it has been known that this chain cannot be used for efficiently sampling \emph{perfect} matchings in graph~$G$ with $2n$ vertices if the fraction~$|N(G)|/|M(G)|$ is too large (e.g. $|N(G)|/|M(G)| > 2^n$). The reason is simply that one needs too many trials~$l$ to get a perfect matching whether or not the chain is rapidly mixing. Note that formally Broder's chain does not solve the problem of sampling perfect matchings but of sampling \emph{near-perfect and perfect} matchings. Hence, asking whether this chain has efficient mixing time is different from the problem of efficiently sampling a perfect matching with Broder's chain. A positive answer could potentially improve the total running time for sampling a perfect matching for some graph class $\mathfrak{G}$ where~$|N(G)|/|M(G)|\leq n^{k}$ for $G \in \mathfrak{G}$ is bounded polynomially for some constant $k$. We explain the idea for sampling perfect matchings assuming Broder's chain were rapidly mixing.

\begin{enumerate}
\item Apply $l=n^k$ times the monomer-dimer-chain using Algorithm~\ref{alg:Metropolis} with~$t:=\tau(\epsilon)= \lceil~8n \cdot |E| (\ln(|\Omega|)+\ln(\epsilon^{-1}))\rceil$ (best bound on mixing time in Table~\ref{table:mixingtimes}) and a given~$\epsilon$.
\item Stop sampling for trial $l' \leq l$ if the $t$-th matching is a perfect matching.
\end{enumerate}
Clearly, if this algorithm creates in one of the~$l$ trials in the~$t$-th step a perfect matching we are done. This approach has constant success probability when fraction~$|\mathfrak{M}(G)|/|M(G)| \leq n^{k}$. We cannot compute this fraction, but the efficient mixing time guarantees the correctness of an uniform perfect matching if we find one. This approach makes only practical sense if the running time is better than the worst case running time of the JSV-chain (mixing time plus running time of simulated annealing see the second paragraph in Section~\ref{Section:Introduction}), but in this case we find a more efficient method. Since~$\tau(\epsilon) \in O(nm\ln(|\Omega|\cdot\epsilon^{-1}))$ (see Table~\ref{table:mixingtimes}) and $|\Omega| \leq 2^{n^2}$ we get $\tau(\epsilon) \in O(n^3m)$ and would set~$l:= \frac{n^4}{m} \log^4(n).$ The disadvantage here is the fraction $|\mathfrak{M}(G)|/|M(G)|\leq n^{k}$ which is more improbably than fraction~$\frac{|N(G)|}{|M(G)|}\leq n^{k}$ for Broder's chain. Sinclair \cite{Sinclair:1993:ARG:140552} proved for such cases that Broder's chain is rapidly mixing (see Table~\ref{table:mixingtimes}). However, computing this ratio is~$\#P$-complete (see Theorem~\ref{SharpComplete} in the appendix) for a given bipartite graph. Hence, for a given graph we cannot use the result of Sinclair. This approach makes only practical sense for Broder's chain if the running time $n^{k'+k}(\ln(|\Omega|)+\ln(\epsilon^{-1}))$ is better than the worst case running time $O(n^7 \log^4(n))$ of the JSV-chain (mixing time plus running time of simulated annealing, see Section~\ref{Section:Introduction}). Note, for monomer-dimer-chains it can happen that the fraction~$\mathfrak{M}(G)/M(G)$ cannot be bounded polynomially but the chain will still be rapidly mixing. To the best of our knowledge there does not exist a general result whether Broder's chain is rapidly mixing or not. 

\begin{theorem}
Broder's chain is not rapidly mixing for planar, regular, Hamiltonian and threshold graphs.
\label{theorem:broder_not_rapidly}
\end{theorem}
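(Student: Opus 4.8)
I would prove non-rapid mixing by exhibiting, for a family of graphs in each of the four classes, a \emph{conductance bottleneck} in the state graph $\Gamma$: a set $S\subseteq\Omega$ with $\pi(S),\pi(\Omega\setminus S)$ bounded away from $0$ and $1$ but with ergodic flow $Q(S,\Omega\setminus S)=\sum_{x\in S,\,y\notin S}\pi(x)P(x,y)$ exponentially small. Since the chains are reversible, Cheeger's inequality gives $1-\lambda_2\le 2\Phi$ with $\Phi=Q(S,\Omega\setminus S)/\min\{\pi(S),\pi(\Omega\setminus S)\}$, so the spectral gap is exponentially small; by the converse to the spectral bound~(\ref{eqn:spectral_bound}) (equivalently, by the flow lower bound using $\rho_1$), $\tau(\epsilon)\ge 2^{\Omega(n)}$. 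As $\ln\pi_{\min}^{-1}\le\ln|\Omega|=O(n^{2})$ by $|\Omega|\le 2^{n^{2}}$, this contradicts the definition of rapid mixing.

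\textbf{The prototype: a dumbbell.} Let $G=G_L\sqcup G_R$ be a disjoint union of two balanced bipartite graphs, each on $k+k$ vertices, each with a \emph{unique} perfect matching but with $2^{\Omega(k)}$ near-perfect matchings; the staircase graph $T_k$ with $N(u_i)=\{v_1,\dots,v_i\}$ works, since $|M(T_k)|=1$ and $|N(T_k)|=2^{k}-1$. Write $n=2k$ for the number of vertices per side. Every near-perfect matching of $G$ perfectly matches one component and leaves its unmatched pair (its ``hole'') in the other; call the resulting sets $S_L$ and $S_R$, so $|S_L|=|S_R|=2^{k}-1$, $|M(G)|=1$, and $|\Omega|=2^{k+1}-1=2^{\Theta(n)}$. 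A type-(b) Broder move changes the hole only \emph{within} one component; a type-(a) move is always between a near-perfect and a perfect matching; and there are no moves between $S_L$ and $S_R$ at all (both are local and $G$ is disconnected). Hence the unique perfect matching $M^{*}$ is a cut vertex of $\Gamma$ separating $S_L$ from $S_R$ (and, being adjacent to near-perfect matchings on both sides, keeps $\Gamma$ connected, each lobe being internally connected under type-(b) moves). Taking $S=S_L\cup\{M^{*}\}$ gives $\pi(S)\approx\pi(\Omega\setminus S)\approx\tfrac12$ and $Q(S,\Omega\setminus S)\le\pi(M^{*})=|\Omega|^{-1}=2^{-\Theta(n)}$, so $\Phi\le 2^{-\Theta(n)}$ and Broder's chain is not rapidly mixing on $G$.

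\textbf{Realising the dumbbell in each class, and the main obstacle.} It then remains to find such a dumbbell inside each class. For \emph{planar} graphs one replaces each $T_k$ by a sparse planar gadget---a caterpillar- or ladder-type graph---with only polynomially many perfect but exponentially many near-perfect matchings, the disjoint union staying planar, and rechecks that $|\Omega|/|M(G)|$ remains super-polynomial so the same cut works. For \emph{Hamiltonian} graphs one adds a spanning cycle to the dumbbell and re-verifies $Q(S,\Omega\setminus S)\le\mathrm{poly}(n)\,|\Omega|^{-1}$, choosing the cycle so its edges create only polynomially many extra perfect matchings. For \emph{threshold} graphs a disjoint union is not available (it is not $2K_2$-free), so the dumbbell must be built as a \emph{single} staircase graph whose near-perfect matchings split according to which block of the staircase receives the hole; here the Brualdi--Ryser product formula $|M|=\prod_i(c_i-i+1)$ for staircase permanents is the bookkeeping tool that keeps the perfect matchings few while forcing the imbalance. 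The \emph{regular} case is the genuine obstacle, because every $d$-regular bipartite graph has exponentially many perfect matchings (van der Waerden / Schrijver), so the cut can no longer funnel through ``the'' perfect matching: one must instead arrange, in a \emph{connected} $d$-regular graph, that the perfect matchings themselves partition into two exponential families joined only through a polynomial-size interface of matchings---e.g.\ by attaching degree-correcting but ``matching-inert'' gadgets to a staircase so that the whole graph becomes $d$-regular while the exponential imbalance, restricted to one side of the cut, survives. I expect this last construction---simultaneously $d$-regular, connected, and still exhibiting the super-polynomial conductance gap---to be the main technical difficulty; once the right combinatorial family is pinned down, the planar, Hamiltonian and threshold cases reduce to routine re-verifications of the same conductance estimate.
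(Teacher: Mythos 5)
Your overall strategy -- exhibit an exponentially small bottleneck in the state graph and convert it into an exponential lower bound on $\tau(\epsilon)$ via conductance/congestion -- is exactly the spirit of the paper's argument (the paper phrases it through a lower bound on the maximum loading $\rho_1$ rather than Cheeger, but these are interchangeable here). The dumbbell prototype is also correct as far as it goes: for a disjoint union of two balanced graphs with unique perfect matchings, the unique perfect matching of the union is indeed a cut vertex of $\Gamma$ and the cut you describe has conductance $O(|\Omega|^{-1})$.

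The genuine gap is that the theorem's content is precisely the existence of witness families \emph{inside each of the four classes}, and your proposal does not deliver these. The disconnected dumbbell is not Hamiltonian, not threshold, and (with the staircase gadget $T_k$, which contains $K_{3,3}$ once $k\geq 5$) not even planar, so every single case still requires a new construction and a new estimate. For the regular case you explicitly concede that you do not have a construction; this is where the paper invests most of its work (two explicit regular families, one planar non-Hamiltonian and one Hamiltonian, with exponentially many perfect matchings, where the bottleneck is no longer a cut vertex but the set of states whose hole lies in a ``middle'' block, bounded by a multi-case count involving Fibonacci numbers). For the Hamiltonian case, ``add a spanning cycle and re-verify'' is not routine: the added cycle edges create new perfect and near-perfect matchings throughout, the unique-perfect-matching/cut-vertex structure is destroyed, and one must redo the bottleneck counting from scratch, which is exactly the lengthy Proposition on the ladder-augmented regular graph in the paper. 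For the threshold case you do name essentially the right family (the staircase is the paper's odd triangle threshold graph), but there the separator is not a single state; one must partition $N_{u_n v_n}$ ($2^{n-2}$ states on each side, according to whether the middle edge $\{u_{n'},v_{n'}\}$ is used) and prove that every connecting path passes through a set of states of size $O(n\,2^{n/2})$, an estimate you never carry out. So what you have is a correct template plus an easy degenerate instance, while the constructions and counting arguments that make the statement true for planar, regular, Hamiltonian and threshold graphs -- the actual substance of the theorem -- are missing or deferred.
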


We start with the planar \emph{hexagon graph} class (see Figure~\ref{figure:hexagon}) introduced by Jerrum et al. \cite{JerrumSinclairVigoda04}. Such a graph consists of~$k$ hexagons~$H_1,\ldots,H_k$ connected in a chain by adding a single edge between adjacent hexagons~$H_i$ and~$H_{i+1}$ for~$i \in \{ 1, \ldots, k-1\}$. The left most hexagon $H_1$ and right most hexagon $H_k$ are connected to vertices~$u$ and~$v$. The number of vertices of a hexagon graph is~$2n=6k+2$. A bipartite graph of this class possesses exactly one perfect matching. Moreover, we find that $|N_{uv}|=2^k$. To prove that the chain is not rapidly mixing we need the following result on lower bounds which easily follows from the results in \cite[Proposition~1 (ii), Corollary~9]{Sinclair92} by some elementary transformations. We apply Corollary~9 in this paper on Proposition~1 which is given there for the second largest eigenvalue. (Clearly, the lower bound of the second largest eigenvalue~$\lambda_1$ is also a lower bound for~$\lambda_{max}$.) 

\begin{figure}[tbp]
\centering
\includegraphics[width=8cm]{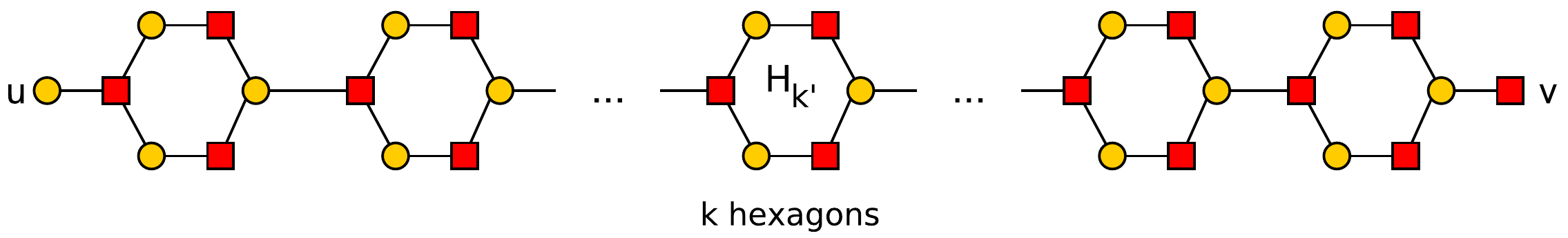}
\caption{Hexagon graph class (planar)}
\label{figure:hexagon}
\end{figure}

\begin{proposition}\label{Proposition:lowerBoundMixingtime}
For any reversible Markov chain with stationary distribution~$\pi$ we have \[\tau(\epsilon) \in \Omega\left(\frac{\rho_1(\mathcal{P}) \ln(2\epsilon^{-1})}{\ln{\pi_{\min}^{-1}}}\right).\]
\end{proposition}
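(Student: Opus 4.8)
The plan is to obtain the bound purely by chaining two of Sinclair's results, exactly as the preceding sentences promise: his lower bound on the second eigenvalue in terms of the loading of a multicommodity flow (\cite[Proposition~1~(ii)]{Sinclair92}) and his lower bound on the total mixing time in terms of $\lambda_{\max}$ (\cite[Corollary~9]{Sinclair92}). No combinatorics enters; everything is rearrangement of inequalities while keeping track of the $\pi_{\min}$-dependence.

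First I would record the eigenvalue estimate. In the form we need it, Sinclair's Proposition~1~(ii) says that for the flow $\mathcal{P}$ the spectral gap is controlled by the loading, $1-\lambda_2 \le \frac{c\,\ln(\pi_{\min}^{-1})}{\rho_1(\mathcal{P})}$ for an absolute constant $c$, the logarithmic factor being the familiar price of translating a fractional multicommodity flow into a cut. Since $\lambda_{\max} = \max\{|\lambda_2|,|\lambda_{|\Omega|}|\} \ge \lambda_2$, the same inequality holds with $\lambda_{\max}$ in place of $\lambda_2$, and as $\lambda_{\max} < 1$ this rearranges to
\[
\frac{1}{1-\lambda_{\max}} \;\ge\; \frac{\rho_1(\mathcal{P})}{c\,\ln(\pi_{\min}^{-1})}.
\]

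Next I would invoke the spectral lower bound on mixing \cite[Corollary~9]{Sinclair92}, which for a reversible chain gives $\tau(\epsilon) \ge c_0\,\frac{\lambda_{\max}}{1-\lambda_{\max}}\,\ln(2\epsilon^{-1})$ for an absolute constant $c_0 > 0$, and substitute the displayed inequality into it. This yields $\tau(\epsilon) \ge \frac{c_0}{c}\,\lambda_{\max}\,\frac{\rho_1(\mathcal{P})}{\ln(\pi_{\min}^{-1})}\,\ln(2\epsilon^{-1})$. The statement is vacuous unless $\rho_1(\mathcal{P})$ is large, and in that regime the eigenvalue estimate already forces $\lambda_{\max} \ge \tfrac12$ (it suffices that $\rho_1(\mathcal{P}) \ge 2c\,\ln(\pi_{\min}^{-1})$), so the factor $\lambda_{\max}$ is absorbed into the $\Omega(\cdot)$, and we are left with $\tau(\epsilon) \in \Omega\!\bigl(\rho_1(\mathcal{P})\,\ln(2\epsilon^{-1})/\ln(\pi_{\min}^{-1})\bigr)$, as claimed.

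The one point that needs care — and the reason I would go back and check Sinclair's statements rather than quote them from memory — is the exact shape of Proposition~1~(ii): it must be applied with a flow $\mathcal{P}$ whose loading is essentially optimal, since a deliberately wasteful routing can inflate $\rho_1(\mathcal{P})$ without the chain actually being slow. This is harmless for the intended application to the hexagon graph: there the state graph has a bottleneck at the unique perfect matching $M(G)$, and every arc incident to $M(G)$ has $Q$-value at most $\pi(M(G)) = 1/|\Omega|$, which is exponentially small because $\pi$ is uniform, so a natural flow already has loading $\rho_1(\mathcal{P})$ exponential in $n$ while $\ln(\pi_{\min}^{-1}) = \ln|\Omega|$ is only polynomial; the proposition then delivers a super-polynomial lower bound on $\tau(\epsilon)$.
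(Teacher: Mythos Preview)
Your approach is exactly the paper's: chain Sinclair's eigenvalue--flow inequality with his spectral lower bound on mixing time and absorb the factor $\lambda_{\max}$ into the $\Omega(\cdot)$. You have, however, interchanged the two citations. In \cite{Sinclair92}, Proposition~1(ii) is the mixing-time bound $\tau(\epsilon)\ge \tfrac12\,\lambda_2(1-\lambda_2)^{-1}\ln(2\epsilon)^{-1}$, while Corollary~9 (derived from the approximate max-flow/min-cut Theorem~8) is the estimate $1-\lambda_2 \le O(\ln\pi_{\min}^{-1})/\rho_1$. The paper ``applies Corollary~9 on Proposition~1'', i.e., feeds the former into the latter, just as you do once the labels are corrected. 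Your caveat about optimal versus arbitrary flows is apt and matches the paper's usage: every subsequent application lower-bounds $\rho_1(\mathcal P)$ over \emph{all} canonical-path systems $\mathcal P$, hence in particular over the one realising the minimum.

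Your closing paragraph drifts into the hexagon-graph application and there the sketch is not what the paper does. The argument is \emph{not} that the unique perfect matching forms a bottleneck; many paths in $\Gamma$ between near-perfect matchings need not visit $M(G)$ at all, so that cut does not force large loading. Instead the paper partitions $N_{uv}$ according to which of the two perfect matchings the \emph{middle} hexagon $H_{k'}$ carries, and shows every simple path between the two halves must pass through the small set $S$ of near-perfect matchings with an unmatched vertex in $H_{k'}$; a counting argument then yields the exponential lower bound on $\rho_1(\mathcal P)$ uniformly in $\mathcal P$. That paragraph is extraneous to the proof of the proposition itself, but if you retain it you should replace the bottleneck description.
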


Set~$\mathcal{P}$ here denotes an arbitrary set of simple paths in state graph~$\Gamma$ with exactly one path~$\mathcal{P}_{MM'}$ between each vertex pair~$M$ and~$M'$. Assume $a=(\bar{M},\bar{M}') \in \Psi$ with~$\bar{M} \neq \bar{M}'$ is an arc for which $\rho_1(\mathcal{P})=f_1(a)/Q(a)$ in equation (\ref{eqn:maximumLoading}). Then each path~$\mathcal{P}_{MM'} \in \mathcal{P}$ using arc $a$ in Broder's chain contributes to $\rho_1$ a quantity of 
\begin{equation}
\frac{f_1(\mathcal{P}_{MM'})}{Q((\bar{M},\bar{M}'))} = \frac{\pi(M)\pi(M')}{\pi(\bar{M})P(\bar{M},\bar{M}')} = \frac{m}{|\Omega|}.
\label{eqn:ContributionMaximumLoading}
\end{equation}
The idea of our proofs works as follows. We show that for each choice of a set~$\mathcal{P}$ the maximum loading~$\rho_1(\mathcal{P})$ cannot be bounded from below by a polynomial. Note that a near-perfect matching in~$N_{uv}$ consists of~$k$ perfectly matched hexagons, where each hexagon possesses two possibilities for a perfect matching which we denote by~$M_1$ and~$M_2$. We now partition the set~$N_{uv}$ into the subsets~$N_{uv}^1$ and~$N_{uv}^2$, such that the ``middle'' hexagon subgraph~$H_{k'}$ with~$k' = \left\lceil k/2 \right\rceil$ contains in each near-perfect matching of~$N_{uv}^1$ a perfect matching of type~$M_1$ and in each near-perfect matching in~$N_{uv}^2$ a perfect matching of type~$M_2$. Clearly,~$|N_{uv}^1|=|N_{uv}^2|=2^{k-1}.$ Now, we consider a possible simple path $\mathcal{P}_{MM'}$ in the corresponding state graph~$\Gamma$ of the hexagon graph between an arbitrary vertex pair~$(M,M')\in N_{uv}^1\times N_{uv}^2.$ Since~$H_{k'}$ contains~$M_1$ in~$M$ and~$M_2$ in~$M'$, there has to be on $\mathcal{P}_{MM'}$ at least one matching~$M \in N_{u^*v'} \cup N_{u'v^*}$ ($u' \in U_{k'}, v' \in V_{k'}, u^* \in U, v^* \in V$), respectively. Otherwise, the transition rules of Broder's chain could not convert the matching~$M_1$ into~$M_2.$ On the other hand it is not difficult to see that~$|N_{u^*v'}|\leq 2^{k-k'}\leq 2^{\frac{k}{2}}$ for all~$u^* \in V$ and~$v' \in V_{k'}.$ For symmetry reasons the same is true for~$N_{u'v^*}$.
If we now consider an arbitrary set~$\mathcal{P}'$ of simple paths between all vertex pairs~$(M,M')$ of~$N_{uv}^1\times N_{uv}^2$ (exactly one between each pair), then we find that~$|\mathcal{P}'|=2^{2k-2}$ paths use an arc in~$\Gamma$ which is adjacent to the vertex set~$S:=(\bigcup_{u^* \in U, v'\in V_{k'}} N_{u^*,v'}) \cup (\bigcup_{v^* \in V, u'\in U_{k'}} N_{u'v^*}).$
We find that~$|S|\leq 6n 2^{\frac{k}{2}}.$ Furthermore, each vertex in~$S$ is adjacent to~$m$ arcs in~$\Gamma$ (some of them can be loops). Hence, there exists an arc~$a$ in~$\Gamma$ which has to be used from at least~$(2^{2k-2})/(6nm2^{\frac{k}{2}}) = (2^{\frac{3}{2}k-2}) / (6mn)$ paths of~$\mathcal{P}'$. Note that~$|N_{uv}|\geq |N_{u^*v^*}|$ for all~$u^* \in U$ and~$v^* \in V$ and there exist~$n^2$ sets~$N_{u^*v^*}.$  With (\ref{eqn:ContributionMaximumLoading}) we get for an arbitrary $\mathcal{P}$ a maximum loading of $\rho_1(\mathcal{P})\geq \rho_1(\mathcal{P'})\geq$
\footnotesize
\[\frac{2^{\frac{3}{2}k-2}}{6mn} \cdot \frac{m}{|\Omega|} = \frac{2^{\frac{3}{2}k-2} }{6n\cdot \left( \sum_{u^* \in U, v^* \in V} |N_{u^*, v^*}| + |M| \right)} \geq \frac{2^{\frac{3}{2}k-2}}{6n\left( n^2 + 1 \right) 2^k} = \frac{2^{\frac{n}{6}-\frac{13}{6}}}{6n\left( n^2 + 1 \right).}
\]
\normalsize
With Proposition~\ref{Proposition:lowerBoundMixingtime} we find the following result.

\begin{proposition}\label{Proposition:hexagonGraph}
Broder's chain is not rapidly mixing for hexagon graphs.
\end{proposition}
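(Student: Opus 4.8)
The plan is to close the chain of inequalities begun in the discussion above. That discussion already shows that for \emph{every} admissible system $\mathcal{P}$ of simple paths in the state graph $\Gamma$ of a hexagon graph, the maximum loading obeys $\rho_1(\mathcal{P})\ge 2^{n/6-13/6}/\bigl(6n(n^2+1)\bigr)$, a quantity exponential in $n$. Since Proposition~\ref{Proposition:lowerBoundMixingtime} bounds $\tau(\epsilon)$ from below in terms of $\rho_1(\mathcal{P})$ for an arbitrary fixed $\mathcal{P}$, and our estimate is uniform over all such $\mathcal{P}$, it remains only to supply a polynomial upper bound on $\ln\pi_{\min}^{-1}$ and substitute.

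For that bound, I would use that Broder's chain sets $w(x)=1$ for every $x$, so its stationary distribution is uniform on $\Omega=N(G)\cup M(G)$ and hence $\pi_{\min}^{-1}=|\Omega|$. A hexagon graph with $2n=6k+2$ vertices has a single perfect matching, and by the counting already carried out $|N_{u^*v^*}|\le|N_{uv}|=2^k$ for each of the at most $n^2$ choices of the unmatched pair; therefore $|\Omega|\le n^2 2^k+1$, giving $\ln\pi_{\min}^{-1}=O(k)=O(n)$.

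Plugging both estimates into Proposition~\ref{Proposition:lowerBoundMixingtime} yields
\[
\tau(\epsilon)\in\Omega\!\left(\frac{2^{n/6-13/6}\,\ln(2\epsilon^{-1})}{6n(n^2+1)\cdot O(n)}\right)=\Omega\!\left(\frac{2^{n/6}\,\ln(2\epsilon^{-1})}{\mathrm{poly}(n)}\right),
\]
which for fixed $\epsilon$ grows faster than any polynomial in $n$. Since being rapidly mixing would require $\tau(\epsilon)\le p(n)\bigl(\ln\epsilon^{-1}+\ln\pi_{\min}^{-1}\bigr)$ for a fixed polynomial $p$, while the right-hand side here is only of order $\mathrm{poly}(n)\ln(2\epsilon^{-1})$, this contradiction proves the proposition.

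I do not expect real difficulty at this final stage: the genuine content — that every simple path from $N_{uv}^1$ to $N_{uv}^2$ must cross a matching in the bottleneck set $S$ where some vertex of the middle hexagon $H_{k'}$ is unmatched, together with the estimate $|S|\le 6n\,2^{k/2}$ — is established in the paragraph above. The only points I would re-examine are that the pigeonhole step is applied to \emph{arcs} incident to $S$ (at most $6nm\,2^{k/2}$ of them, loops included), as required by the arc-wise definition of $\rho_1$ in~(\ref{eqn:maximumLoading}), and that each path through the critical arc contributes $m/|\Omega|$ as in~(\ref{eqn:ContributionMaximumLoading}).
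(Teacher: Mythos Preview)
Your proposal is correct and follows exactly the paper's approach: the exponential lower bound on $\rho_1(\mathcal{P})$ is established in the preceding discussion, and the paper then simply writes ``With Proposition~\ref{Proposition:lowerBoundMixingtime} we find the following result'' before stating the proposition. You have merely made explicit the bound $\ln\pi_{\min}^{-1}=\ln|\Omega|\le\ln(n^2 2^k+1)=O(n)$ that the paper leaves implicit, and your substitution into Proposition~\ref{Proposition:lowerBoundMixingtime} and comparison with the rapid-mixing definition are both sound.
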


In the following we give three additional graph classes not possessing a polynomial mixing time. The first is the class of bipartite threshold graphs, i.e. graphs such that there does not exist a further graph possessing the same vertex degrees. For further characterizations consider the book of Mahadev and Peled \cite{MaPe:95}. The bi-adjacency matrix of such a graph has the special property that each row starts with a number of ``one'' entries corresponding to its vertex degree. The rest of each row is filled by zeroes. (A matrix of this form is called \emph{Ferrers matrix} introduced by Ferrers in the $19$th century, see \cite{Syl:1882}.) Consider the matrix $B$ in Figure~\ref{figure:thresholdGraph} as example for the bi-adjacency matrix of a threshold graph~$G=(U \cup V,E)$ where the vertices of $U$ correspond to the rows and the vertices of $V$ to the columns.

\begin{figure}
\begin{subfigure}[b]{0.47\textwidth}
\centering
\includegraphics[width=0.4\textwidth]{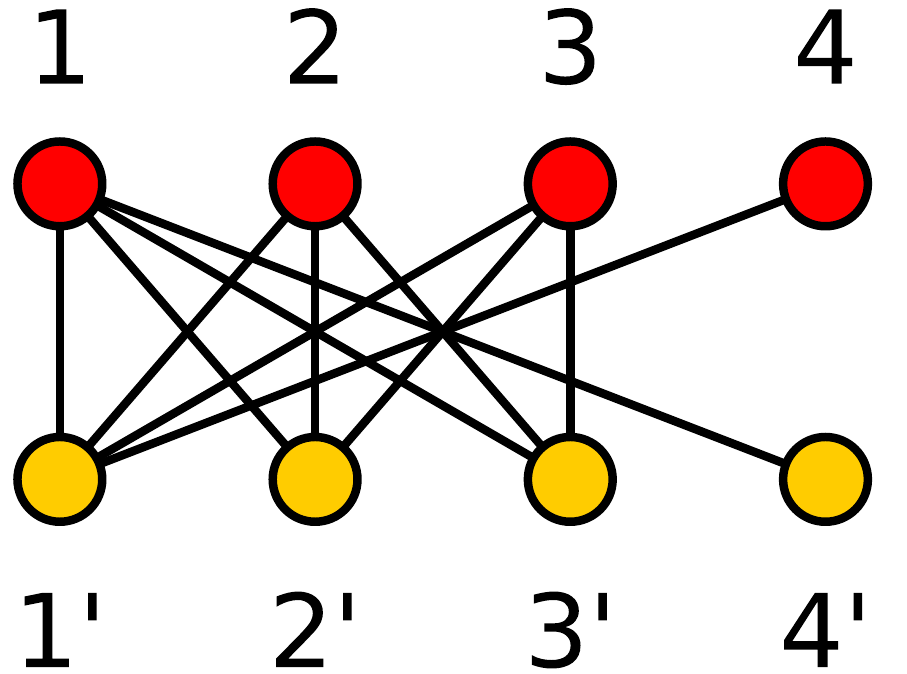}
\caption{Example for a threshold graph}
\end{subfigure}
~
\begin{subfigure}[b]{0.47\textwidth}
\centering
$B:=\left(\begin{matrix}
1&1&1&1\\
1&1&1&0\\
1&1&1&0\\
1&0&0&0\\
\end{matrix}\right)$
\caption{Corresponding biadjacency matrix}
\end{subfigure}
\caption{Example for a threshold graph}
\label{figure:thresholdGraph}
\end{figure}

The number of perfect matchings in such a graph can easily be counted by ordering the row sums $r_i$ for row $i$ in non-decreasing order, i.e., $r_1\geq r_2 \geq \dots \geq r_{n-1}\geq r_n$. Then $|M(G)|=r_n\cdot (r_{n-1}-1) \cdots (r_2-(n-2))(r_1-(n-1)).$ This formula was given by Brualdi and Ryser  \cite[Corollary 7.2.6.]{brualdi:91} and can easily be extended to an efficient approach for sampling. In each row $r_i$ - starting with~$r_n$ to~$r_1$ - we choose a column index $j$ with $B_{ij} = 1$ uniformly at random which was not used in a previous row $r_{i'}$ with~$i' > i$. We remove index $j$ in the set of all column indices and go on with the next row. Clearly, this simple approach is an exact uniform sampler for perfect matchings in threshold graphs. We consider the class of \emph{odd triangle threshold graphs} with $(n \times n)$-bi-adjacency matrices~$A$ with odd~$n \in \mathbb{N}$ and
 \[
A_{ij}=\begin{cases}
  1,  & \text{for }i\in \mathbb{N}_n \text{ and } j \in \{1,\dots,n-(i-1)\}\\
  0 & \text{else }
\end{cases}.
\]

\begin{proposition}\label{Proposition:thresholdGraph}
Broder's chain is not rapidly mixing for the class of odd triangle threshold graphs.
\end{proposition}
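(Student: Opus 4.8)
The plan is to mimic the argument already carried out for the hexagon graphs, locating a ``bottleneck cut'' in the state graph~$\Gamma$ of an odd triangle threshold graph whose two sides are each exponentially large, while the cut itself is only polynomially wide. First I would count perfect matchings: by the Brualdi--Ryser formula with row sums $r_i = n-(i-1)$, one gets $|M(G)| = 1\cdot 2\cdots n = \text{(a single product)}$ — actually $r_n\cdot(r_{n-1}-1)\cdots(r_1-(n-1)) = 1\cdot 1\cdots 1 = 1$, so there is a \emph{unique} perfect matching, exactly as with the hexagon class. Next I would estimate $|N_{uv}|$ for suitable unmatched pairs $u\in U$, $v\in V$; the key point is to exhibit two families $N_{uv}^1, N_{uv}^2$ of near-perfect matchings, each of size $2^{\Omega(n)}$, distinguished by some ``parity'' or ``position'' feature (this is where the hypothesis that $n$ is odd should be used), such that any path in~$\Gamma$ between a matching in $N_{uv}^1$ and one in $N_{uv}^2$ must pass through a near-perfect matching lying in one of the ``small'' sets $N_{u^*v'}$ or $N_{u'v^*}$ for indices $u',v'$ ranging over a bounded-size ``middle'' region, with $|N_{u^*v'}| \le 2^{o(\text{same exponent})}$.

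With that combinatorial skeleton in place, the rest is the same pigeonhole computation as in the hexagon case: if $\mathcal{P}'$ is any family of simple paths, one per pair in $N_{uv}^1\times N_{uv}^2$, then all $|N_{uv}^1|\cdot|N_{uv}^2|$ of them must traverse an arc incident to the vertex set $S := \bigl(\bigcup_{u^*\in U,\, v'} N_{u^*v'}\bigr) \cup \bigl(\bigcup_{v^*\in V,\, u'} N_{u'v^*}\bigr)$, which has size at most $\text{poly}(n)\cdot 2^{(\text{small exponent})}$; since each vertex of~$S$ meets at most $m$ arcs, some single arc~$a$ carries at least $|N_{uv}^1|\,|N_{uv}^2|/(|S|\cdot m)$ paths, an exponential quantity. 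Combining this with equation~(\ref{eqn:ContributionMaximumLoading}), namely $f_1(\mathcal{P}_{MM'})/Q(a) = m/|\Omega|$, together with the bound $|\Omega| \le (n^2+1)\,|N_{uv}|$ coming from $|N_{uv}| \ge |N_{u^*v^*}|$ for all $u^*,v^*$ and there being $n^2$ such sets plus the one perfect matching, yields $\rho_1(\mathcal{P}) \ge \rho_1(\mathcal{P}') \ge 2^{\Omega(n)}/\text{poly}(n)$ for \emph{every} choice of~$\mathcal{P}$. Proposition~\ref{Proposition:lowerBoundMixingtime} then forces $\tau(\epsilon)$ to be superpolynomial, since $\ln\pi_{\min}^{-1}$ is only polynomial in~$n$.

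The main obstacle is the first part: correctly identifying the bipartition $N_{uv}^1 \cup N_{uv}^2$ and the ``middle'' set of index pairs through which every connecting path is compelled to pass, and then honestly verifying the two opposing size bounds — that each half is $2^{\Omega(n)}$ but each $N_{u^*v'}$ in the cut is $2^{o(n)}$ (or at least exponentially smaller, by a fixed constant factor in the exponent, which is all that is needed). For the odd triangle matrix, a natural candidate is to track which column the unmatched structure ``forces'' into the long first row or short last row; fixing $u,v$ appropriately, a near-perfect matching of $N_{uv}$ decomposes into a perfect matching of a smaller triangle plus free choices, and one half corresponds to the forced column index lying in the ``upper'' half of $\{1,\dots,n\}$, the other to the ``lower'' half, with the $2^{k-1}$-type split mirroring the hexagon count. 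I would need to check that Broder's local moves (add an edge, delete an edge, or swap two adjacent edges) genuinely cannot cross between the halves without visiting a state in $S$, which is the analogue of the ``transition rules of Broder's chain could not convert $M_1$ into $M_2$'' observation; this is plausible because changing the forced column requires temporarily unmatching a middle vertex, but it must be argued from the explicit shape of~$A$ rather than asserted.
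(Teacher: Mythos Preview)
Your outline is the paper's proof: same template, same use of equation~(\ref{eqn:ContributionMaximumLoading}), same bound $|\Omega|\le(n^2+1)|N_{uv}|$, same appeal to Proposition~\ref{Proposition:lowerBoundMixingtime}. The only place you are vague is exactly where you say you are --- the choice of $u,v$ and of the bipartition --- and there your tentative ``forced column index'' idea is more elaborate than what is needed. The paper takes $u=u_n$, $v=v_n$ (the last row and last column), for which $|N_{u_nv_n}|=2^{n-2}$ by the Brualdi--Ryser count applied to the $(n-1)\times(n-1)$ submatrix. The partition is not by any positional index but simply by the presence or absence of the single edge $e=\{u_{n'},v_{n'}\}$ with $n'=\lceil n/2\rceil$: set $N_{u_nv_n}^1$ to be those near-perfect matchings containing $e$, and $N_{u_nv_n}^2$ those not containing it; each half has size $2^{n-3}$. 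The cut argument is then immediate rather than something ``plausible'' needing further verification: to delete or insert $e$, Broder's chain must pass through a state in which $u_{n'}$ or $v_{n'}$ is unmatched, so $S=\bigl(\bigcup_i N_{u_{n'},v_i}\bigr)\cup\bigl(\bigcup_i N_{u_i,v_{n'}}\bigr)$, and a direct look at the triangular matrix gives $|N_{u_{n'},v_i}|\le 2^{n'-2}$ for every $i$ (and symmetrically for $N_{u_i,v_{n'}}$), hence $|S|<n\cdot 2^{n'}$. From there your pigeonhole computation goes through verbatim and yields $\rho_1(\mathcal{P})\ge 2^{\lfloor n/2\rfloor-4}/\bigl(n(n^2+1)\bigr)$. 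Incidentally, the oddness of $n$ plays no structural role in the argument; it merely fixes $n'=(n+1)/2$ exactly.
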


\begin{proof}
We consider the~$(n \times n)$ bi-adjacency matrix~$A$ for an odd triangle threshold graph $G = (U \cup V,E).$ We identify the~$i$-th row of~$A$  with the labeled vertex~$u_i \in U$ and the~$j$-th column with the labeled vertex~$v_j\in V.$ Clearly, this graph possesses exactly one unique perfect matching. Moreover, we find~$|N_{u_n v_n}|=2^{n-2}.$ We partition the set of near-perfect matchings in~$N_{u_n v_n}^1,$ the subset of all near-perfect matchings of~$N_{u_n v_n}$ containing the edge~$e = \{u_{n'},v_{n'}\}$ and~$N_{u_n,v_n}^2 := N_{u_n v_n} \setminus N_{u_n v_n}^1$, the subset containing the remaining near-perfect matchings, where~$n' := \left\lceil\frac{n}{2}\right\rceil$. It is easy to see that~$|N_{u_n v_n}^1|=2^{n-3}$ and so~$|N_{u_n,v_n}^2|=2^{n-3}.$\\
If we consider an arbitrary simple path~$P_{MM'}$ from~$M \in N_{u_n v_n}^1$ to~$M' \in N_{u_n v_n}^2$ in the corresponding state graph~$\Gamma$, such a path contains always a matching of the set~$S:=(\bigcup_{i \in \mathbb{N}_n}N_{u_{n'},v_i}) \cup (\bigcup_{i \in \mathbb{N}_n}N_{u_i,v_{n'}}),$ because~$M$ contains edge~$e$,~$M'$ not, and Broder's chain can only ``delete'' this edge by using a matching of~$S$. Moreover, we find that 
\[ 0= \left| N_{u_{n'},v_1} \right| = \left| N_{u_{n'},v_2} \right| = \dots <  \left| N_{u_{n'},v_{n'}} \right| = \left| N_{u_{n'},v_{n'+1}} \right| \leq \ldots \leq \left| N_{u_{n'},v_{n}} \right| = 2^{n'-2}. \]
(This can be seen by considering the bi-adjacency matrix~$A$.) For symmetry reasons we find that $|N_{u_{i},v_{n'}}|\leq2^{n'-2}$ for all~$i \in \mathbb{N}_n$. Hence,~$|S| \leq 2n(2^{n'-2}) <  n2^{n'}.$ If we now consider an arbitrary set~$\mathcal{P}'$ of simple paths between all vertex pairs~$(M,M')$ of~$N_{uv}^1\times N_{uv}^2$ (exactly one between each pair), then we find that~$|\mathcal{P}'|\geq 2^{2n-6}$ paths make use of a matching in~$S$. Each matching in~$S$ is adjacent to~$m$ arcs in~$\Psi$ (some of them can be loops). Hence, there exists an arc in~$\Gamma$ which has to be used from at least~$\left(2^{2n-6} \right)/ \left(mn2^{n'} \right)= \left(2^{\left\lfloor\frac{3n}{2}\right\rfloor-6} \right)/\left(mn \right)$ paths of~$\mathcal{P}'$. 
With equation~(\ref{eqn:ContributionMaximumLoading}) and since $|N_{u_nv_n}|\geq |N_{u_iv_j}|$ for $i,j \in \mathbb{N}_n$ we find
\[
\rho_1(\mathcal{P})\geq \rho(\mathcal{P}') \geq \frac{2^{\left\lfloor\frac{3n}{2}\right\rfloor-6}}{mn} \cdot \frac{m}{|\Omega|} \geq \frac{2^{\left\lfloor\frac{3n}{2}\right\rfloor-6}}{n|N_{u_n,v_n}|\left( n^2+1 \right)} = \frac{2^{\left\lfloor\frac{n}{2}\right\rfloor-4}}{n\left( n^2+1 \right)}
\]
for an arbitrary set~$\mathcal{P}$ of paths in~$\Gamma.$ Together with Proposition~\ref{Proposition:lowerBoundMixingtime} we find that the class of odd triangle threshold graphs is not rapidly mixing.\qed
\end{proof}

Note that we use similar techniques for proving it as for the hexagon graph class. More technical details but the same ideas will be used for showing that the two graph classes in Figure~\ref{figure:regularGraph} and Figure~\ref{figure:regularWithLadder} do also not possess a polynomial mixing time. The graph class in Figure~\ref{figure:regularGraph} is regular and planar but not Hamiltonian. The graph class in Figure~\ref{figure:regularWithLadder} is Hamiltonian and regular. Both classes can be scaled by inserting additional blocks to each of the three tracks and additional rungs in case of the graph class in Figure~\ref{figure:regularWithLadder}. Both graph classes possess an exponential number of perfect matchings and the fraction~$\frac{|N(G)|}{|M(G)|}$ is not bounded polynomially. The proof is again based on the idea to partition the set~$N_{uv}$ in two subsets which differ in the setting of perfect matchings in a middle block~$b_{k'}$ of the upper  subgraph starting with edge $\{u,v_1\}$ and ending with edge $\{u_k,v\}$ or $\{u'_l,v\},$ respectively.

\begin{figure}[tbp]
\centering
\includegraphics[width=8cm]{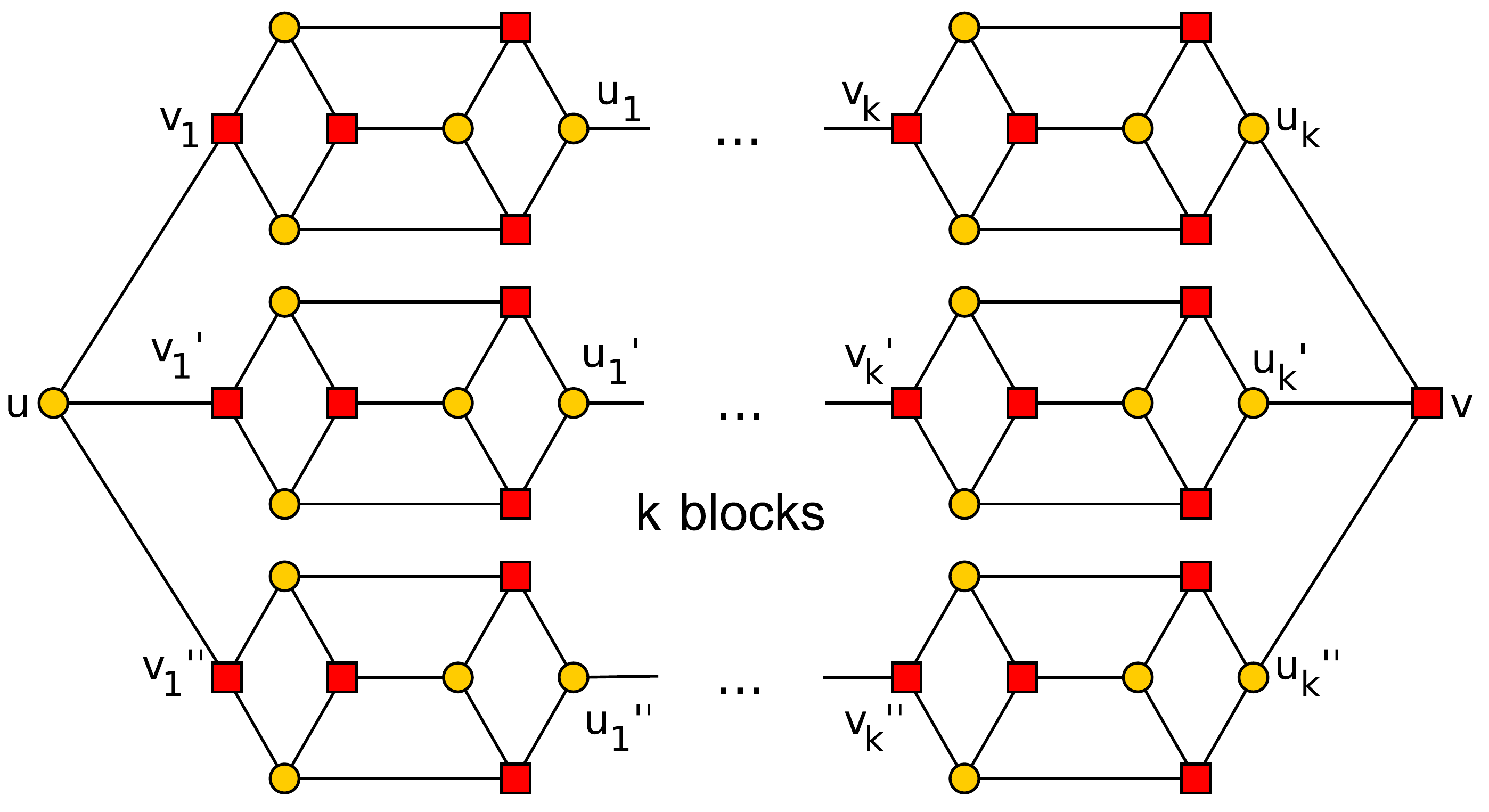}
\caption{Regular bipartite graph class which is not Hamiltonian and not rapidly mixing}
\label{figure:regularGraph}
\end{figure}

\begin{figure}[tbp]
\centering
\includegraphics[width=8cm]{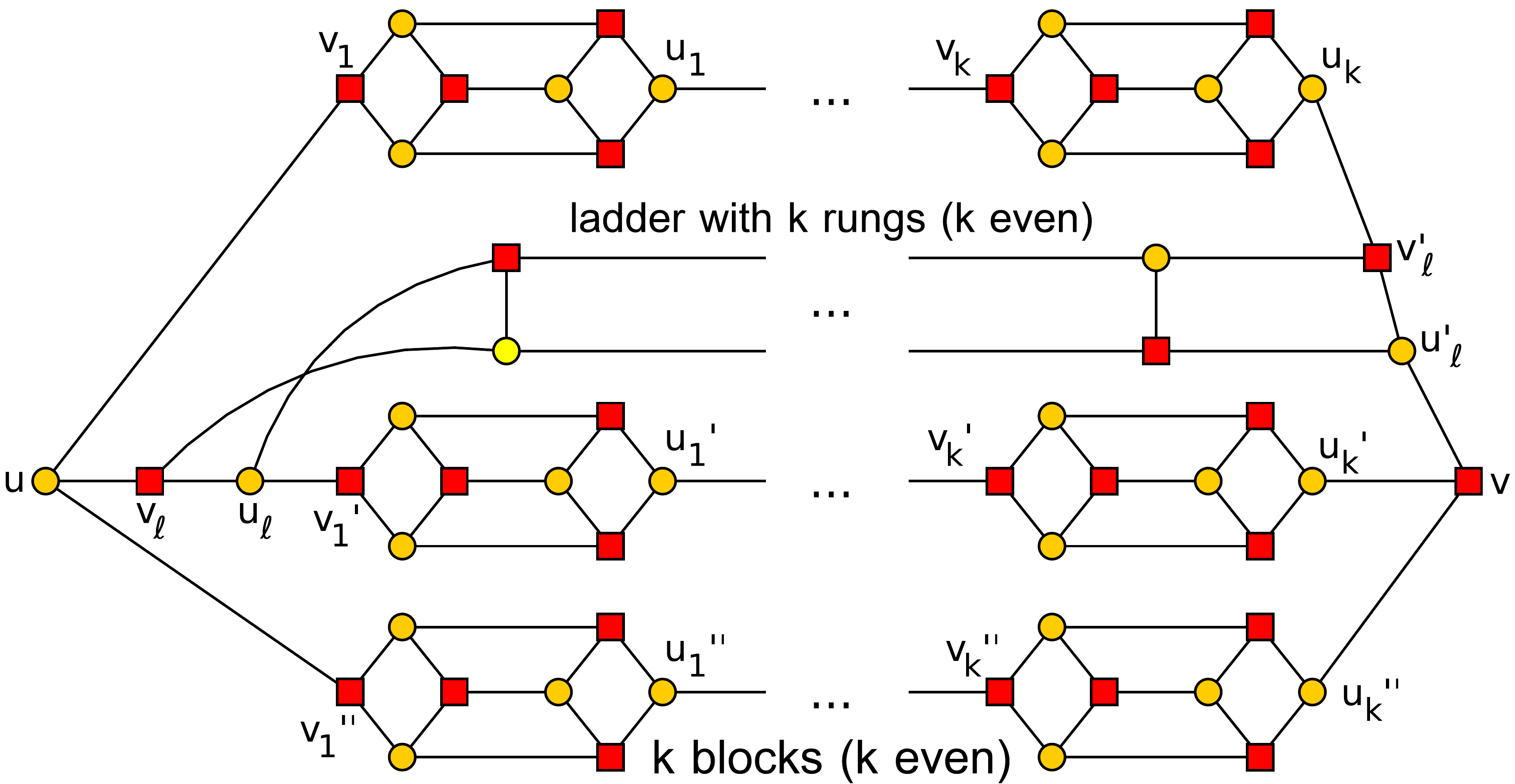}
\caption{Regular bipartite graph class which is Hamiltonian and not rapidly mixing}
\label{figure:regularWithLadder}
\end{figure}

\begin{proposition}\label{Proposition:regularGraph}
Broder's chain is not rapidly mixing for graphs in Figure~\ref{figure:regularGraph}.
\end{proposition}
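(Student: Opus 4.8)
The plan is to mirror the argument used for the hexagon graph and for the odd triangle threshold graph class in Proposition~\ref{Proposition:thresholdGraph}, transferred to the regular, planar, non-Hamiltonian graph class of Figure~\ref{figure:regularGraph}. As indicated in the paragraph preceding the statement, this graph is built by concatenating $k$ identical blocks along an ``upper'' track that starts with an edge $\{u,v_1\}$ and ends with an edge $\{u_k,v\}$, together with two further tracks so that the whole graph is regular of fixed degree. Fix $u$ and $v$ as the two distinguished vertices, so that near-perfect matchings in $N_{uv}$ are exactly the ones leaving $u$ and $v$ unmatched. First I would verify the two combinatorial facts that make the scheme work: (i) $G$ has an exponential number $|M(G)|$ of perfect matchings and, more to the point, $|N_{uv}| \ge c^{k}$ for some constant $c>1$, because each of the $k$ blocks independently admits at least two local completions $M_1$, $M_2$ once $u$ and $v$ are declared free; and (ii) $|N_{uv}| \ge |N_{u^*v^*}|$ for all $u^*\in U$, $v^*\in V$, together with the bound $|\Omega| = \sum_{u^*,v^*}|N_{u^*v^*}| + |M(G)| \le (n^2+1)|N_{uv}|$, exactly as in the hexagon case. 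These are ``by inspection of the block structure'' statements.

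Next I would locate a middle block $b_{k'}$ with $k'=\lceil k/2\rceil$ and partition $N_{uv}$ into $N_{uv}^1$ and $N_{uv}^2$ according to whether the restriction of the near-perfect matching to $b_{k'}$ is of type $M_1$ or of type $M_2$; by the independence of the blocks, $|N_{uv}^1| = |N_{uv}^2| = \tfrac12 |N_{uv}| \ge \tfrac12 c^{k}$. For an arbitrary simple path $\mathcal{P}_{MM'}$ in the state graph $\Gamma$ between $M\in N_{uv}^1$ and $M'\in N_{uv}^2$, Broder's transition rules can only switch the local configuration in $b_{k'}$ from $M_1$ to $M_2$ by passing, at some step, through a matching that leaves unmatched some vertex of $b_{k'}$; i.e.\ the path must meet the ``cut set''
\[
S := \Bigl(\bigcup_{u^*\in U,\; v'\in V(b_{k'})} N_{u^*v'}\Bigr) \;\cup\; \Bigl(\bigcup_{v^*\in V,\; u'\in U(b_{k'})} N_{u'v^*}\Bigr).
\]
The point is then to bound $|S|$: a vertex of $b_{k'}$ can be free only if the blocks on one side of $b_{k'}$ are ``consumed'' to free it, so $|N_{u^*v'}| \le c^{k-k'} \le c^{k/2}$ (and symmetrically for the other union), giving $|S| \le 2n\,c^{k/2}$ up to a fixed constant factor depending on the block size. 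Since each vertex of $S$ is adjacent to at most $m$ arcs in $\Gamma$, and there are $|N_{uv}^1|\cdot|N_{uv}^2| \ge \tfrac14 c^{2k}$ pairs, pigeonhole forces some arc $a$ of $\Gamma$ to be used by at least $\tfrac14 c^{2k} / (2nm\,c^{k/2}) = \Omega\bigl(c^{3k/2}/(nm)\bigr)$ of these paths.

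Finally, plugging this into equation~(\ref{eqn:ContributionMaximumLoading}), which says each path through $a$ contributes $m/|\Omega|$ to $\rho_1$, and using $|\Omega| \le (n^2+1)|N_{uv}| \le (n^2+1)\cdot(\text{const})\cdot c^{k}$, I get
\[
\rho_1(\mathcal{P}) \;\ge\; \frac{\Omega\bigl(c^{3k/2}/(nm)\bigr)\cdot m}{(n^2+1)\cdot(\text{const})\cdot c^{k}} \;=\; \Omega\!\left(\frac{c^{k/2}}{n(n^2+1)}\right)
\]
for \emph{every} choice of path system $\mathcal{P}$. Since $2n$ is linear in $k$, this is exponential in $n$ up to a polynomial factor, so by Proposition~\ref{Proposition:lowerBoundMixingtime} the mixing time $\tau(\epsilon)$ cannot be bounded by any polynomial in $n$, which is the claim. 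The main obstacle I anticipate is purely bookkeeping rather than conceptual: one has to pin down the precise block gadget of Figure~\ref{figure:regularGraph}, confirm that the degree really is constant (so that ``$m$'' and the per-vertex arc count behave as in the hexagon argument), and check carefully that the local switch in the middle block genuinely cannot be realized without vacating a vertex of $b_{k'}$ — i.e.\ that the two extra tracks do not provide an ``escape route'' that would let a path avoid $S$. Once the gadget is fixed, the counting $|N_{u^*v'}| \le c^{k/2}$ and $|S| \le 2n\,c^{k/2}$ are the only estimates that need genuine (but elementary) verification, and everything else is a line-by-line transcription of the hexagon and threshold proofs.
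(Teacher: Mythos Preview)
Your overall strategy---partition $N_{uv}$ by the configuration of a middle block, show every path in $\Gamma$ must cross a cut set $S$ of states with a free vertex in that block, pigeonhole, then invoke Proposition~\ref{Proposition:lowerBoundMixingtime}---is exactly the paper's. The gap is your bound on the cut: the claim $|N_{u^*v'}| \le c^{k-k'} \le c^{k/2}$ is false for this graph, and this is precisely where the three-track structure bites. In the hexagon graph, freeing a vertex in the middle hexagon rigidifies all hexagons on one side, giving $|N_{u^*v'}|\le 2^{k/2}$ against $|N_{uv}|=2^k$. Here $|N_{uv}|=6^{3k}$ (all $3k$ blocks admit six local perfect matchings), but freeing a vertex $v'$ in the middle block $b_{k'}$ of the upper track $G_1$ does nothing whatsoever to the $2k$ blocks in the other two tracks: they still contribute a factor $6^{2k}$ to $|N_{u^*v'}|$. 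The paper carries out a three-case analysis (according to which track $u^*$ lies in) and obtains $|N_{u^*v'}|\le 3^{k-k'}6^{2k+k'}$, so $|S|$ is smaller than $|N_{uv}|$ only by a factor of order $2^{k/2}$ (coming from $6^{k-k'}/3^{k-k'}$), not by $c^{k/2}$. The final bound is $\rho_1(\mathcal{P})\ge 2^{\lfloor k/2\rfloor}/(288\,n(n^2+1))$---still exponential in $n$, but far weaker than the $\Omega(c^{k/2}/n^3)$ your estimate would yield.

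So the obstacle you flag as ``purely bookkeeping'' is the substantive step. The extra tracks do not create an escape route around $S$---your cut is valid---but they inflate $|S|$ enormously, and showing that the gain inside $G_1$ (each constrained block drops from six to three choices) still outruns this inflation is exactly what the case analysis is for. Your sketch would go through essentially verbatim for a single-track variant; for the actual graph of Figure~\ref{figure:regularGraph} you need to replace ``$|N_{u^*v'}|\le c^{k/2}$'' by a bound of the shape ``$|N_{u^*v'}|\le (\text{const})\cdot |N_{uv}|/2^{k/2}$'' and justify it track by track.
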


\begin{proof}
The regular bipartite graph~$G=(U \cup V,E)$ in Figure~\ref{Proposition:regularGraph} consists of~$2n=24k+2$ vertices and~$3k$ blocks. The vertices~$u$ and~$v$ are connected via three rows of blocks. We consider each of the rows of~$G$ as a subgraph~$G_i$ where~$i \in \{1,2,3\}$. Each subgraph ends with the edges~$(u,v_i)$ and~$(u_k,v)$, respectively ($(u,v_i')$ and~$(u_k',v)$) or ($(u,v_i'')$ and~$(u_k'',v)$). We start with the computation of~$|M(G)|$ and~$|N_{uv}|.$ Let us consider one block in this graph. It possesses~$6$ possible perfect matchings~$M_i$,~$i \in \{1,\dots,6\}$. For a near-perfect matching in~$N_{uv}$ all blocks have one of the six possible perfect matchings~$M_i$. Hence,~$|N_{uv}|=6^{3k}.$ We consider the set of perfect matchings. Note that each perfect matching in~$G$ consists of exactly one perfect matching in a~$G_i$. For simplicity we set~$i:=1$. In the other two subgraphs~$G_2$ and~$G_3$ we have near-perfect~$N_{uv}$-matchings. Furthermore, each block in~$G_1$ possesses a near-perfect~$N_{v_j,u_j}$-matching ($j \in \{1,\dots,k\}$) where three possibilities per block exist. We get~$|M(G)|=6^{2k}3^{k+1}.$ Furthermore, we find that~$|N_{uv}|\geq |N_{u^*,v^*}|$ for all~$u^* \in U, v^* \in V,$ because the number of blocks where~$6$ possibilities for a perfect matching occur is smaller in a near-perfect matching of~$N_{u^*v^*}$ where at least one statement:~$u^* \neq u$ or~$v^* \neq v$ is true. We partition the set~$N_{uv}$ in the subset of matchings~$N_{uv}^1$ containing in the~$k' = \left\lceil k/2 \right\rceil$th block of~$G_1$ one of the perfect matchings~$M_1,M_2,$ or~$M_3$. The remaining set which contains all near-perfect matchings of~$N_{uv}$ with one of the other three possible perfect matchings~$M_4,M_5,M_6$ we denote by~$N_{uv}^2.$ Hence,~$|N_{uv}^1|=|N_{uv}^2|=3 \cdot 6^{3k-1}$. We consider a simple path in the corresponding state graph~$\Gamma$ between an arbitrary vertex pair~$(M,M')\in N_{uv}^1\times N_{uv}^2$. Such a path has to change a perfect matching in the~$k'$th block. Using Broder's chain this is only possible by using a path in~$\Gamma$ which has a vertex~$N_{u^*v^*}$ such that~$u^*$ or~$v^*$ is a vertex in the~$k'$th block of~$G_1$. We denote the set of vertices in the~$k'$th block by~$U'$ or~$V'$, respectively. Then at least one vertex of the set~$S:=(\bigcup_{u^* \in U,v'\in V'} N_{u^*,v'}) \cup (\bigcup_{u'\in U',v^* \in V} N_{u'v^*})$ occurs in a simple path between~$M$ and~$M'$. We consider the cardinality of~$S$ in upper bounding~$N_{u^*v'}$ and distinguish between three possible cases:
\begin{description}
\item[case~1:]~$u^* \in U'$ (see Figure~\ref{figure:CaseOne}). Note that this case is only possible when~$u_{k'}$ and~$v_{k'}$ are both unmatched in the~$k'$th induced block or both are matched in this induced block. Otherwise we cannot find a suitable near-perfect matching because such a setting enforces that exactly one of the edges~$(u,v_1)$ and~$(u_k,v)$ is matched, which is not possible. Assume (a) that~$u^*=u_{k'}$ and~$v'=v_{k'},$ then all blocks in~$G_1$ (without block~$b_{k'}$) contain one of the six perfect matchings~$M_i$ and~$(u,v_1)$ and~$(u_k,v)$ are unmatched. This enforces that either~$G_2$ or~$G_3$ contains a perfect matching. Hence,~$|N_{u^*,v'}|\leq 6^{2k}3^k \cdot 2.$ If~$u^*\neq u_{k'}$ or~$v'\neq v_{k'}$ then edge pair~$(u_{k'-1},v_{k'}),(u_{k'},v_{k'+1})$ is either (b) matched or (c) not. In case (b) we get that~$(u,v_1)$ and~$(u_k,v)$ are matched and all blocks in~$G_1$ (without~$b_{k'}$) possess one of three near-perfect~$N_{v_j,u_j}$-matchings. The~$k'$th block has at most two matchings. In case (c),~$(u,v_1)$ and~$(u_k,v)$ are unmatched and all blocks in~$G_1$ (without~$k'$) possess one of six perfect matchings~$M_i$. This enforces that~$G_2$ or~$G_3$ contain a perfect matching. Case (b) and (c) can occur together when~$u^*\neq u_{k'}$ and~$v'\neq v_{k'}$ and yield~$|N_{u^*,v'}|\leq 3 \cdot 6^{2k}3^k.$ In summary, case~1 leads to the bound~$|N_{u^*,v'}|\leq 3 \cdot 6^{2k}3^k.$
\begin{figure}[tbp]
\centering
\includegraphics[width=16cm]{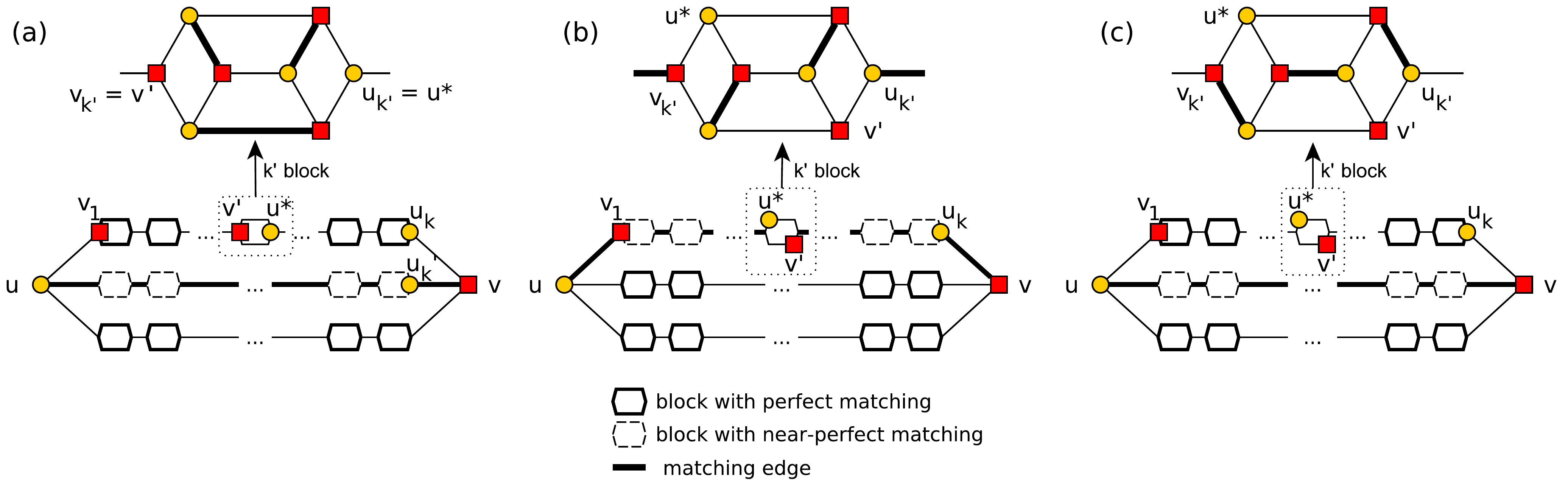}
\caption{Schematic figure of case distinctions in case~1}
\label{figure:CaseOne}
\end{figure}
\item[case~2:]~$u^* \in U(G_1)\setminus (U'\cup \{u\})$ (see Figure~\ref{figure:CaseTwo}). Since the number of vertices in each block is even, this case is only possible if exactly one of the edges~$(u_{k'-1},v_{k'})$ and~$(u_{k'},v_{k'+1})$ is in~$N_{u'v^*}$. Since~$v'$ is not matched, it remains the case that~$(u_{k'},v_{k'+1})$ is matched. We have to distinguish the cases that~$u^*$ is in block~$b_{k^*}$ with (a)~$k^*<k'$ or (b)~$k^*>k'.$ If~$u^*$ occurs in block~$k^*$ we find for the adjacent edges of this block that~$(u_{k^*-1},v_{k^*})$ is matched and~$(u_{k^*},v_{k^*+1})$ is not. For~$k^*<k'$ we get that~$(u,v_1)$ and~$(u_k,v)$ are both matched. Especially all blocks~$b_{k'+1},\dots,b_k$ possess one of three~$N_{v_j,u_j}$-near-perfect matchings. Hence,~$|N_{u^*v'}|\leq 3^{k-k'}6^{2k+k'}$, because the blocks between~$k^*$ and~$k'$ have~$6$ possibilities for a perfect matching~$M_i$. For~$k^*>k'$ we find that~$(u,v_1)$ and~$(u_k,v)$ are both unmatched. Therefore, either~$G_2$ or~$G_3$ has a perfect matching. Hence,~$|N_{u^*v'}|\leq 2 \cdot 3^{k}6^{2k}$. In summary case~2 leads to~$|N_{u^*v'}|\leq 3^{k-k'}6^{2k+k'}$.
\begin{figure}[tbp]
\centering
\includegraphics[width=14cm]{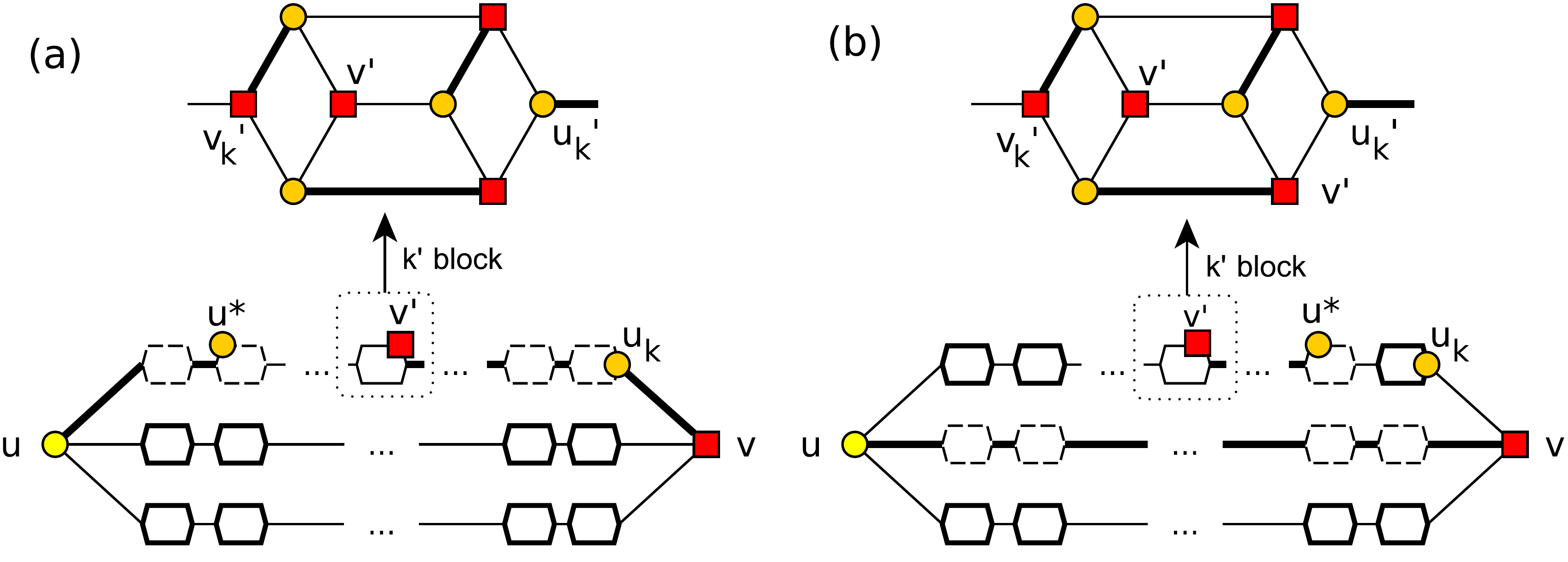}
\caption{Schematic figure of case distinctions in case~2}
\label{figure:CaseTwo}
\end{figure}
\item[case~3:]~$u^* \in U(G_2) \cup U(G_3)$ (see Figure~\ref{figure:CaseThree}). For simplicity we assume that~$u^* \in U(G_2).$ If (a)~$u^*$ occurs in the~$k^*$th block of~$G_2,$ then this enforces that the edge~$(u_{k^*-1}',v_{k^*}')$ is matched and the edge~$(u_{k^*}',v_{k^*+1}')$ is not. Therefore, all blocks~$b_1,\dots,b_{k^*-1}$ in~$G_2$ have a near-perfect~$N_{v_j,u_j}$-matching and all blocks~$b_{k^*+1},\dots,b_k$ a perfect matching~$M_i$. Graph~$G_3$ contains  for each block one of six perfect matchings. The case (b)~$u=u^*$ enforces that ~$(u,v_1)$ is unmatched and~$(u_k,v)$ is matched. It follows~$|N_{u^*v'}|\leq 3^{k-k'}6^{2k+k'}$ for cases (a) and (b). 
\begin{figure}[tbp]
\centering
\includegraphics[width=14cm]{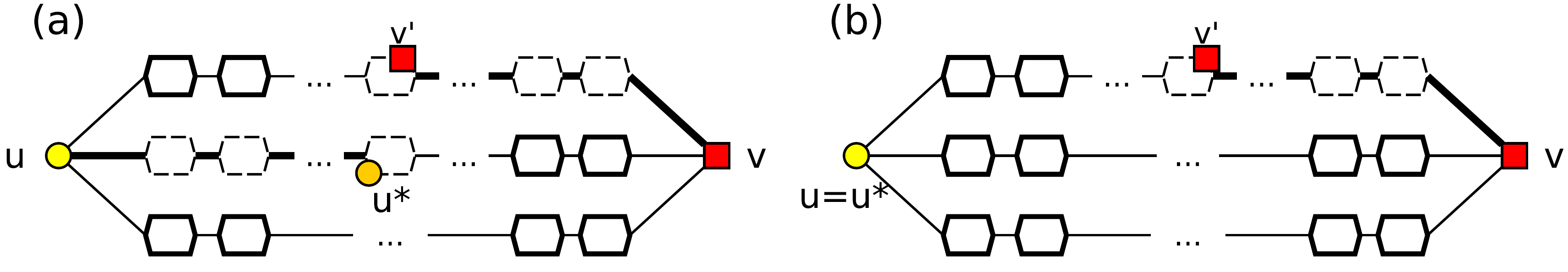}
\caption{Schematic figure of case distinctions in case~3}
\label{figure:CaseThree}
\end{figure}
\end{description}

We get the upper bound~$|S|\leq 8n 3^{k-k'}6^{2k+k'}$ if we take into account that there exist~$4n$ sets~$N_{u^*v'}$ and all sets~$N_{u'v^*}$ can be handled analogously. If we consider a set~$\mathcal{P}'$ of simple paths between all vertex pairs~$(M,M')$ of~$N_{uv}^1\times N_{uv}^2$ (exactly one between each pair), then we find that~$|\mathcal{P}'|=6^{6k-2}$ paths use an arc in~$\Gamma$ which is adjacent to vertex set~$S$. Note that each vertex in~$S$ is adjacent to at most~$m$ arcs (loops are possible). Hence, there exists an arc~$a$ in~$\Gamma$ which has to be used from at least~$( 6^{6k-2} ) /  ( 8nm 3^{k-k'}6^{2k+k'} ) = ( 6^{3k-2}2^{\left\lfloor\frac{k}{2}\right\rfloor} ) / ( 8mn )$ paths of~$\mathcal{P}'$.
With equation~(\ref{eqn:ContributionMaximumLoading}) we get for the maximum loading of an arbitrary set~$\mathcal{P}$ of paths in~$\Gamma,$
\footnotesize
$$\rho_1(\mathcal{P})\geq \rho_1(\mathcal{P}') \geq \frac{6^{3k-2}2^{\left\lfloor\frac{k}{2}\right\rfloor}}{8mn} \cdot \frac{m}{|\Omega|} =  \frac{6^{3k-2}2^{\left\lfloor\frac{k}{2}\right\rfloor}}{8n|N_{uv}|(n^2+1)} \geq \frac{2^{\left\lfloor\frac{k}{2}\right\rfloor}}{288n(n^2+1)}=\frac{2^{\left\lfloor\frac{n-1}{24}\right\rfloor}}{288n(n^2+1)}.$$
\normalsize
Together with Proposition~\ref{Proposition:lowerBoundMixingtime} we find that the graph class does not possess a polynomial mixing time.\qed
\end{proof}

\begin{proposition}\label{Proposition:regularWithLadder}
Broder's chain is not rapidly mixing for graphs in Figure~\ref{figure:regularWithLadder}.
\end{proposition}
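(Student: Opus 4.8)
The plan is to run exactly the same bottleneck argument used for Proposition~\ref{Proposition:regularGraph}, adapting the case analysis to account for the extra ladder rungs. First I would fix the bipartite graph $G=(U\cup V,E)$ from Figure~\ref{figure:regularWithLadder}, with $2n$ vertices and $3k$ blocks distributed over the three tracks, plus the rungs of the ladder. As a preliminary step I would verify that $G$ really is regular and Hamiltonian (the Hamiltonian cycle runs through the three tracks and closes via $u$ and $v$, using the rungs as needed) and compute the relevant counts: $|N_{uv}|$ as a product over the blocks of the number of local perfect matchings of a single block (again exponential in $k$, hence in $n$), and $|M(G)|$ by observing that a perfect matching of $G$ consists of one perfect matching on one track together with $N_{uv}$-type configurations on the other two tracks and $N_{v_ju_j}$-type configurations inside the blocks of the chosen track, the rungs contributing only a bounded constant factor. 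From these formulas I would read off that $|N(G)|/|M(G)|$ is not polynomially bounded and that $|N_{uv}|\geq |N_{u^*v^*}|$ for every $u^*\in U$, $v^*\in V$, since pushing an endpoint of the free edge away from $\{u,v\}$ only removes local freedom from the blocks.

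Next I would partition $N_{uv}$ into $N_{uv}^1$ and $N_{uv}^2$ according to which ``half'' of the local perfect matchings occurs in the middle block $b_{k'}$ of the upper track, $k'=\lceil k/2\rceil$, so that $|N_{uv}^1|=|N_{uv}^2|$ and each is a constant fraction of $|N_{uv}|$. The key structural observation is the same as before: since Broder's chain can change the configuration of block $b_{k'}$ only by first ``opening'' one of its edges, every simple path in the state graph $\Gamma$ between an $M\in N_{uv}^1$ and an $M'\in N_{uv}^2$ must pass through a near-perfect matching lying in $S:=\bigl(\bigcup_{u^*\in U,\,v'\in V'}N_{u^*,v'}\bigr)\cup\bigl(\bigcup_{u'\in U',\,v^*\in V}N_{u',v^*}\bigr)$, where $U',V'$ denote the vertices of the middle block $b_{k'}$. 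I would then bound $|S|$ by the same three-case distinction as in the proof of Proposition~\ref{Proposition:regularGraph} — $u^*$ inside $b_{k'}$; $u^*$ elsewhere on the upper track; $u^*$ on one of the other two tracks — each case forcing the matched status of the track endpoints $(u,v_1)$ and $(u_k,v)$, and hence that one of the other two tracks must carry a perfect matching. The only novelty is that a rung of the ladder may now also be matched, so I must check that this contributes at most a bounded factor and that the resulting estimate still keeps $|N_{u^*v'}|$ an exponential factor (of order $2^{k/2}$) below $|N_{uv}|$, so that $|S|\le \mathrm{poly}(n)\cdot 2^{-\Omega(n)}|N_{uv}|$.

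With that bound in hand I would finish by the usual pigeonhole: the $|N_{uv}^1|\cdot|N_{uv}^2|=\Theta(|N_{uv}|^2)$ paths of an arbitrary single-path system $\mathcal{P}'$ between $N_{uv}^1\times N_{uv}^2$ all meet $S$, each vertex of $S$ is incident to at most $m$ arcs of $\Gamma$, so some arc $a$ carries at least $\Theta(|N_{uv}|^2)/(m|S|)$ of them; plugging this into equation~(\ref{eqn:ContributionMaximumLoading}) and using $|N_{uv}|\ge |N_{u^*v^*}|$ together with $|\Omega|=\sum_{u^*,v^*}|N_{u^*v^*}|+|M(G)|\le (n^2+1)|N_{uv}|$ yields $\rho_1(\mathcal{P})\ge\rho_1(\mathcal{P}')\ge 2^{\Omega(n)}/\mathrm{poly}(n)$ for every path system $\mathcal{P}$, and Proposition~\ref{Proposition:lowerBoundMixingtime} then gives a super-polynomial lower bound on $\tau(\epsilon)$. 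I expect the main obstacle to be exactly the book-keeping for $|S|$: one has to check carefully that the ladder rungs, which are precisely the feature that makes the graph Hamiltonian, do not create so many additional near-perfect matchings with a prescribed hole in $b_{k'}$ that the exponential gap between $|S|$ and $|N_{uv}|$ collapses; showing that each rung multiplies only a bounded number of local counts by at most a constant is the crux of the argument.
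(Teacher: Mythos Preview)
Your overall plan---partition $N_{uv}$ by the configuration of the middle block $b_{k'}$, bound the separator set $S$, pigeonhole onto a heavily-loaded arc, and invoke Proposition~\ref{Proposition:lowerBoundMixingtime}---is exactly the paper's strategy. But there is a genuine gap in how you treat the ladder.

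You repeatedly assume the ladder contributes ``only a bounded constant factor.'' It does not: the ladder $L_{k+2}$ has $k+2$ rungs (scaling with $k$), and the number of its perfect matchings is the Fibonacci number $F_{k+3}$. Consequently $|N_{uv}(G)| = 6^{3k}F_{k+3}$, and Fibonacci factors appear throughout the case analysis for $|N_{u^*v'}|$. This is not cosmetic. In the case analysis (the paper needs five cases, not three: $u^*$ may lie in either of the two block-tracks inside $G_1$, in the block-track $G_2$, \emph{or in the ladder itself}) there arise configurations in which the ladder is forced into a unique near-perfect matching, producing terms of size $\Theta(6^{3k})$ in the bound on $|N_{u^*v'}|$. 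If you had $|N_{uv}|\sim 6^{3k}$ as your description suggests, such terms would make $|S|$ comparable to $|N_{uv}|$ and the bottleneck argument would collapse. What actually saves the proof is precisely the extra Fibonacci factor in $|N_{uv}|$: one obtains $|S|\le c\,n\,6^{3k}2^{k/2}$ and hence
\[
\rho_1(\mathcal P)\ \ge\ \frac{F_{k+3}}{c'\,n(n^2+1)\,2^{k/2}},
\]
and the proof concludes by checking that $F_{k+3}/2^{k/2}$ grows exponentially, i.e.\ that the golden ratio satisfies $\Phi>\sqrt{2}$. This last inequality is close ($1.618\ldots$ versus $1.414\ldots$), so the argument is genuinely delicate and cannot be waved through by saying each rung perturbs things by a constant. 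You need to track the Fibonacci contributions explicitly in every case and in the final estimate.
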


\begin{proof}
The regular bipartite graph~$G=(U \cup V,E)$ in Figure~\ref{figure:regularWithLadder} consists of~$2n=26k+6$ vertices,~$3k$ blocks and a ladder~$L_{k+2}$ with~$k+2$ rungs. Clearly, this is the graph of Figure~\ref{figure:regularGraph} with an additional ladder. Note that~$k$ has to be even to get a bipartite graph. We consider the upper part of~$G$ as subgraph~$G_1$, starting and ending with the edges~$(u,v_1), (u,v_l),(u_{k}',v)$. The lower part of~$G$ is called~$G_2$ with~$(u,v_1''),(u_k'',v)$ on its ends. We start with the computation of~$|M(G)|$ and~$|N_{uv}(G)|.$ Let us consider a block in this graph. It possesses~$6$ possible perfect matchings~$M_i$,~$i \in \{1,\dots,6\}$.
We find that~$|N_{uv}(G)|=|N_{uv}(G_1)||N_{uv}(G_2)|.$ Note that when we consider a matching~$M \in N_{uv}(G)$ and the edges~$\{u,v_1\}$ and~$\{u_{k}',v\}$ do not belong to~$M$, this implies that~$\{u_l,v_{1}'\}$ and~$\{u_k,v_{l}'\}$ also do not belong to~$M$. Hence, the ladder~$L_{k+2}$ contains a perfect matching in each~$N_{uv}(G).$ We find that~$|N_{uv}(G_1)|=6^{2k}|M(L_{k+2})|$. The number of perfect matchings~$|M(L_k)|$ in a ladder~$L_k$ with~$k$ rungs can be computed in the following way. $|M(L_1)| = 1$, $|M(L_2)|=2$, $|M(L_3)| = |M(L_1)| + |M(L_2)|$ and in general, $|M(L_k)|  = |M(L_{k-1})| + |M(L_{k-2})|$. Hence,~$|M(L_k)|$ with~$k \in \mathbb{N}$ is the famous \emph{Fibonacci sequence} and~$|M(L_{k-1})|=F_k$ denotes the~$k$th Fibonacci number with~$F_k=\frac{1}{\sqrt{5}}(\Phi^k-\hat{\Phi})$,~$\Phi=\frac{1+\sqrt{5}}{2}$ is the \emph{golden ratio} and~$\hat{\Phi}=\frac{1-\sqrt{5}}{2}$. (Note that a ladder with~$0$ rungs has one perfect matching and a ladder with~$1$ rung, too.) We find~$|N_{uv}(G)|=6^{3k}F_{k+3}$. Let us consider the set of perfect matchings~$M(G)$ in this graph. Clearly, we have~$|M(G)|=|N_{uv}(G_1)||M(G_2)|+|M(G_1)||N_{uv}(G_2)|=6^{2k}F_{k+3}3^{k}+6^k|M(G_1)|$. Let us consider the set~$M(G_1).$ Note that~$(u,v_1) \in M$ enforces that edge~$(u_k,v_{l}')$ is also contained in this matching~$M$. The same is true for the edges~$(u_{k}',v)$ and~$(u_l,v_{1}').$ We distinguish between four cases, (a)~$(u,v_1),(u_{k}',v)$ are matched edges, (b)~$(u,v_1),(u_{l}',v)$ are matched edges, (c)~$(u,v_l),(u_l',v)$ are matched, and (d)~$(u,v_l),(u_k',v)$ are matched. Consider Figure~\ref{figure:LadderCases}. Case (b) and (d) are analogous for symmetry reasons. 
\begin{figure}[tbp]
\centering
\includegraphics[width=14cm]{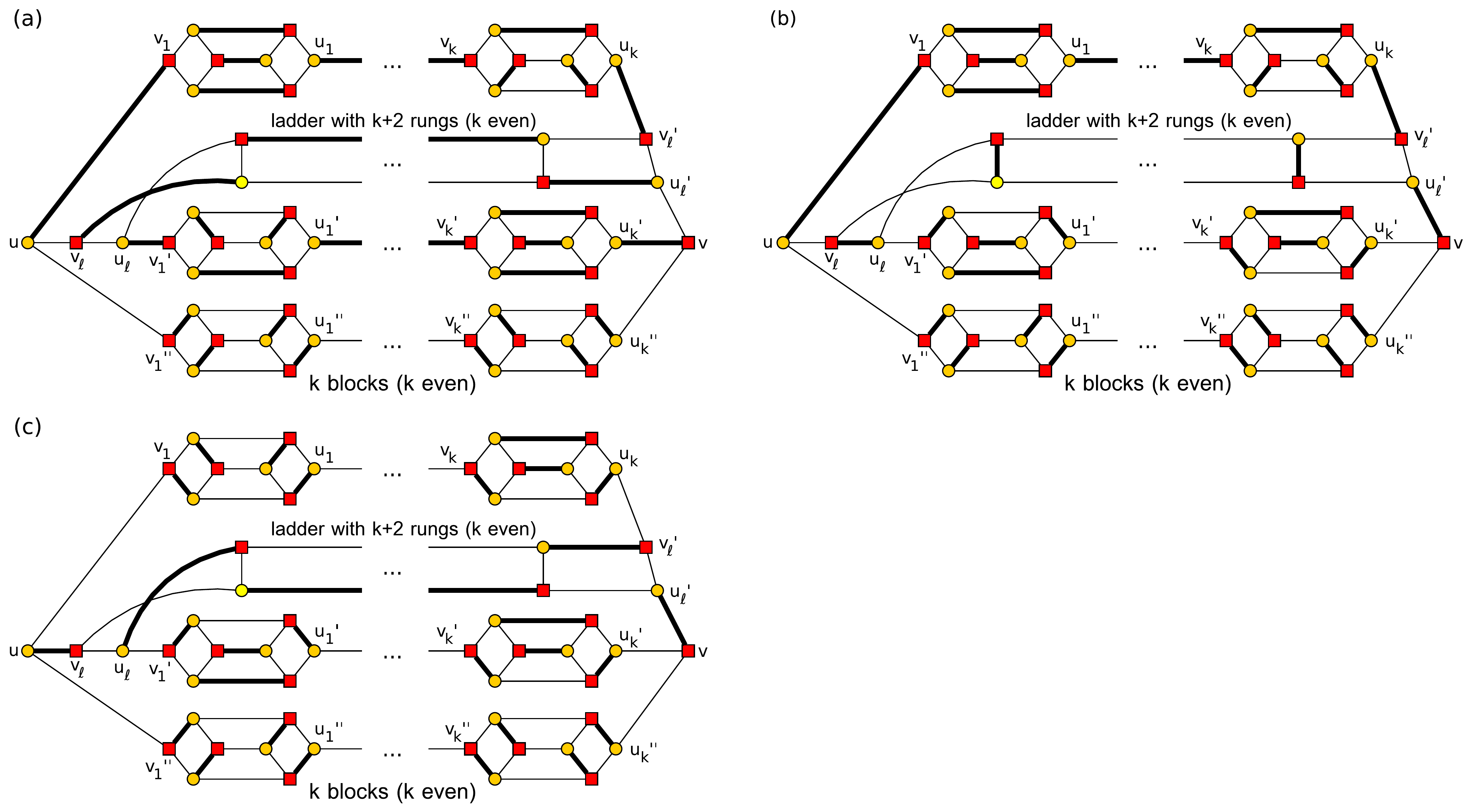}
\caption{Distinct perfect matching types in~$G_1$ of the regular graph class with ladder}
\label{figure:LadderCases}
\end{figure}
Case (a) enforces that~$(u_k,v_{l}')$ and~$(u_l,v_{1}')$ are also matching edges. It only remains one near-perfect~$N_{u_lv_{l'}}$-matching for~$L_{k+2}$. For all blocks~$b_i$ with~$i = \{1, \ldots, k \}$ in~$G_1$ we find~$3$ possibilities for a near-perfect~$N_{u_i,v_i}$-matching. We find~$3^{2k}$ perfect matchings. Case (b) enforces that~$(u_{k},v_{l}')$ is a matched edge. It remains to find a perfect matching in a sub-ladder~$L_{k+1}$ with~$k+1$ rungs. We find for each block~$b_1,\dots,b_k$ in~$G_1$ three possibilities for a near-perfect~$N_{u_i,v_i}$-matching ($i \in \{1, \ldots k\}$) and for each block~$b_1',\dots,b_k'$ six possibilities for a perfect matching. We get~$F_{k+2}3^k6^k$ perfect matchings. Case (c) enforces that edges~$(u_l,v_1'),(u_k,v_l')$ are not matched. We find for each block in~$G_1$ six possibilities for a perfect matching and one~$N_{u_l',v_l}$-near perfect matching in~$L_{k+1}$. We find~$6^{2k}$ perfect matchings in~$G_1$. In summary we get for all four cases,~$|M(G_1)|=3^{2k}+2F_{k+2}3^k6^k+6^{2k}$ and so 
$$|M(G)|=6^{2k}3^{k}(F_{k+3}+2F_{k+2})+6^k3^{2k}(1+2^{2k}) \leq 6^{2k} 3^{k}(2 F_{k+4})+6^k 3^{2k}(1+2^{2k}) .$$
Since~$2 >\Phi >1$ we find 
\begin{equation}
F_k = \frac{\Phi^k}{\sqrt{5}}-\frac{1-\sqrt{5}}{2 \sqrt{5}}\leq \frac{2^{k+1}}{\sqrt{5}} \leq 2^{k+1}.
\end{equation}
and 
\begin{equation}
F_k = \frac{\Phi^k}{\sqrt{5}}-\frac{1-\sqrt{5}}{2 \sqrt{5}}\geq \frac{\Phi^k}{\sqrt{5}}=\frac{1}{\sqrt{5}}\left(\frac{1+\sqrt{5}}{2}\right)^k.
\label{eqn:Fibonacci}
\end{equation}
Furthermore, we get~$\frac{N_{uv}(G)}{M(G)}\geq \frac{6^{3k}F_{k+3}}{6^{2k}3^{k}2F_{k+4}+6^k3^{2k}(1+2^{2k})} \geq \frac{6^{3k}F_{k+3}}{6^{2k} 3^k 2^{k+6}+6^{k}3^{2k}(2^{2k+1})}=\frac{6^{3k}F_{k+3}}{6^{3k}2^6 +6^{3k}2}=\frac{F_{k+3}}{66}$ is exponentially large.\\
We partition the set~$N_{uv}(G)$ in the subset of matchings~$N_{uv}^1$ containing in the~$k'$th block ($k'=\frac{k}{2}$) of~$G_1$ one of the perfect matchings~$M_1,M_2$ or~$M_3$. The remaining set which contains all near-perfect matchings of~$N_{uv}$ with one of the other three possible perfect matchings~$M_4,M_5,M_6$ we denote by~$N_{uv}^2.$ Hence,~$|N_{uv}^1|=|N_{uv}^2|=6^{3k-1}F_{k+3}\cdot 3$. We consider an arbitrary simple path in the corresponding state graph~$\Gamma$ between an arbitrary vertex pair~$(M,M')\in N_{uv}^1\times N_{uv}^2$. Such a path has to change a perfect matching in the~$k'$th block. Using Broder's chain this is only possible by using a path in~$\Gamma$ which has a vertex~$N_{u^*v^*}$ such that~$u^*$ or~$v^*$ is a vertex in the~$k'$th block of~$G_1$. We denote the set of vertices in the~$k'$th block by~$U'$ or~$V'$, respectively. Then at least one vertex of set~$S:=(\bigcup_{u^* \in V,v'\in V'} N_{u^*,v'}) \cup (\bigcup_{v^* \in V,u'\in V'} N_{u'v^*})$ occurs in a simple path between~$M$ and~$M'$. We consider the cardinality of~$S$ in upper bounding~$|N_{u^*v'}|$ and distinguish between four possible cases:
\begin{description}
\item[case~1:]~$u^* \in U'$ (see Figure~\ref{figure:CaseOneLadder}). Note that this case is only possible when~$u_{k'}$ and~$v_{k'}$ are both unmatched in the~$k'$th induced block or both are matched in this induced block, because the number of vertices in a block is even. Assume~$u^*=u_{k'}$ and~$v'=v_{k'}.$ Then all blocks in~$G_1$ (without~$k'$) contain one of the six perfect matchings~$M_i$ and~$(u,v_1)$ and~$(u_k,v_{l}')$ are unmatched. This enforces that either (a) the edge pair~$(u,v_l),(u_{k'},v)$ is matched, (b) the edge pair~$(u,v_l),(u_{l'},v)$ is matched or (c)~$G_3$ contains a perfect matching. Case (a) enforces that each block~$b_1,\dots,b_k$ and~$b_1'',\dots, b_k''$ (without block~$b_{k'}$) possesses one of six possibilities for a perfect matching~$M_i.$ The blocks~$b_1',\dots,b_k'$ have three possibilities for a near-perfect~$N_{u_{k}',v_{k}'}$-matching. There are~$F_{k+2}$ perfect matchings in a ladder with~$k+1$ rungs. We get an upper bound of~$6^{2k}3^{k}F_{k+2}$ matchings. Case (b) enforces that all blocks (without~$b_{k'})$ have six possibilities for a perfect matching. The ladder possesses only one unique near-perfect matching. We get an upper bound of~$6^{3k}$ matchings. In case (c) we find an upper bound of~$6^{2k}3^{k}F_{k+3}$ matchings. If~$u^*\neq u_{k'}$ or~$v'\neq v_{k'}$ then either both edges~$(u_{k'-1},v_{k'}),(u_{k'},v_{k'+1})$ are unmatched or they are matched edges. In the first case we find the same three situations as in (a)-(c). In the second case, we find that~$(u,v_1)$ and~$(u_k,v_{l}')$ are matched and all blocks~$b_1,\dots,b_k$ (without~$b_{k'}$) possess one of three near-perfect~$N_{v_j,u_j}$-matching. This case has the subcase (d) that edge~$(u_{l}',v)$ is a matching edge or subcase (e) that edge~$(u_{k}',v)$ is a matching edge. In (d) we find~$3^k6^{2k}F_{k+2}$ matchings. In (e) we have~$3^{2k}6^k$ matchings, because the ladder has only one unique near-perfect matching. Since it can happen that edge pair~$(u_{k'-1},v_{k'}),(u_{k'},v_{k'+1})$ is matched in one matching of~$N_{u^*,v'}$ and is unmatched for another near-perfect matching in this set, we put all cases (a)--(e) together to upper bound~$|N_{u^*,v'}|$ in case~1 and get
$$|N_{u^*,v'}|\leq 2 \cdot 6^{2k}3^kF_{k+2}+6^{3k}+6^{2k}3^kF_{k+3}+3^{2k}6^k \leq 3 \cdot 3^k6^{2k}F_{k+3}+2 \cdot 6^{3k}.$$
\begin{figure}[tbp]
\centering
\includegraphics[width=14cm]{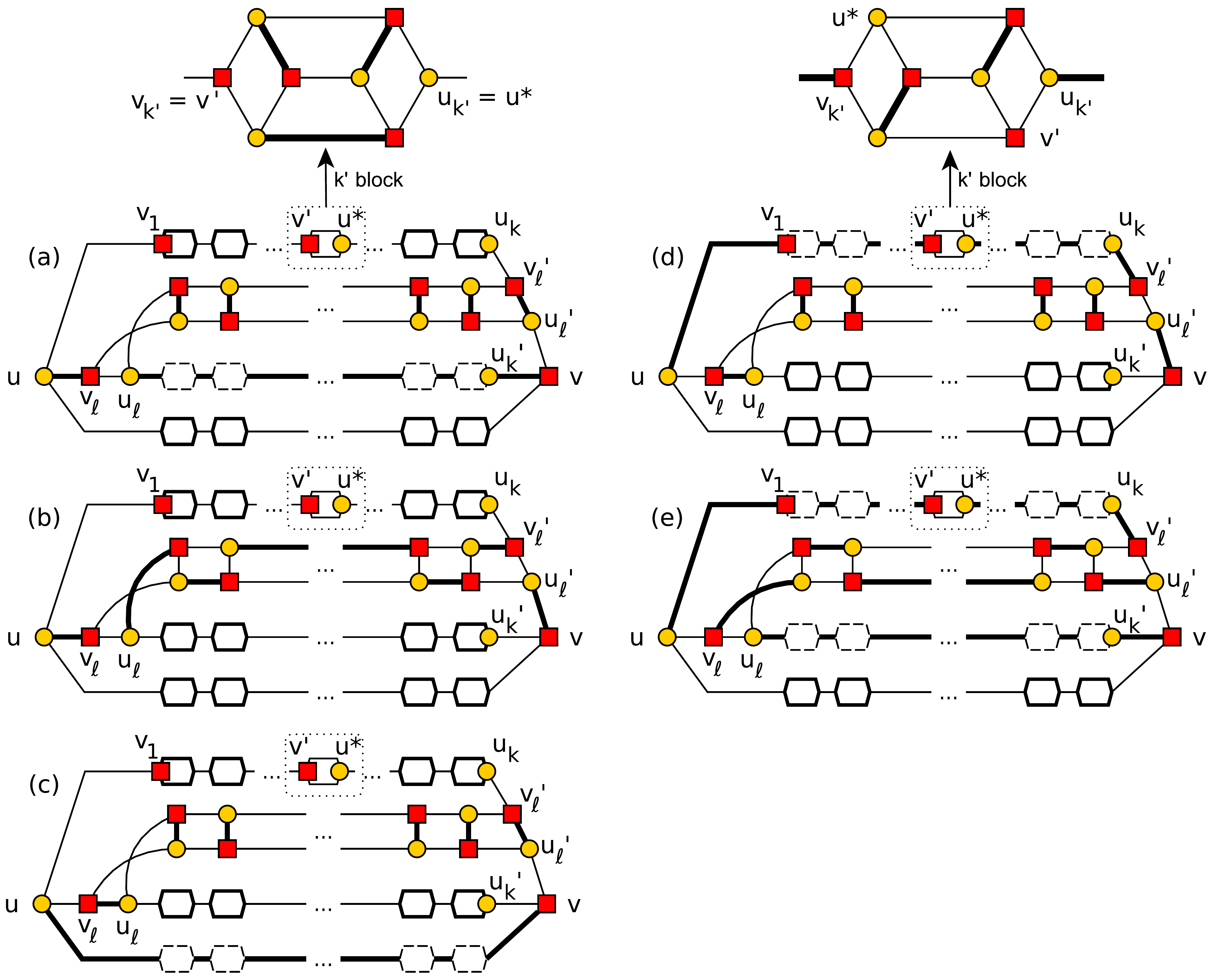}
\caption{Schematic figure of case distinctions for case~1}
\label{figure:CaseOneLadder}
\end{figure}
\item[case~2:]~$u^*\notin V'$ is in one block between~$b_1,\dots,b_k$ (see Figure~\ref{figure:CaseTwoLadder}). Since the number of vertices in each block is even, this case is only possible if exactly one of the edges~$(u_{k'-1},v_{k'})$ and~$(u_{k'},v_{k'+1})$ is matched. Since~$v'$ is not matched and the number of vertices in a block is even it remains the case that~$(u_{k'},v_{k'+1})$ is matched. We have to distinguish the cases that~$u^*$ is in block~$k^*<k'$ or~$k^*>k'$. If~$u^*$ occurs in block~$k^*$ we find that~$(u_{k^*-1},v_{k^*})$ is matched and~$(u_{k^*},v_{k^*+1})$ not. For~$k^*<k'$ we get that~$(u,v_1)$ and~$(u_k,v_{l}')$ are matched. Especially all blocks~$b_1,\dots,b_{k^*-1}$ and~$b_{k'+1},\dots,b_k$ possess three~$N_{v_j,u_j}$-near-perfect matchings. There can be two subcases. Either (a) edge~$(u_{l}',v)$ is matched or (b) edge~$(u_{k}',v)$ is matched. For (a) and (b) together we get~$|N_{u^*,v'}|\leq 6^{2k+k'}3^{k'}F_{k+2}+3^{k'+k}6^{k'+k} \leq 6^{2k+k'}3^{k'}F_{k+2}+6^{3k}~$. For~$k^*>k'$ we find that~$(u,v_1)$ and~$(u_k,v_{l}')$ are unmatched. Hence,~$(u,v_{l})$ is matched or~$(u,v_{1''})$ is matched. We distinguish for the first case between (c) that edge~$(u_{k}',v)$ is matched and (d) edge~$(u_{l}',v)$ is matched. For both cases together we find~$|N_{u^*,v'}|\leq 6^{2k}3^kF_{k+2}+6^{3k}$. For the second case (e) we find~$|N_{u^*,v'}|\leq 6^{2k}3^kF_{k+3}.$ In summary we get for case~2 that~$|N_{u^*,v'}|\leq 6^{2k+k'}3^{k'}F_{k+2}+6^{3k}$.
\begin{figure}[tbp]
\centering
\includegraphics[width=14cm]{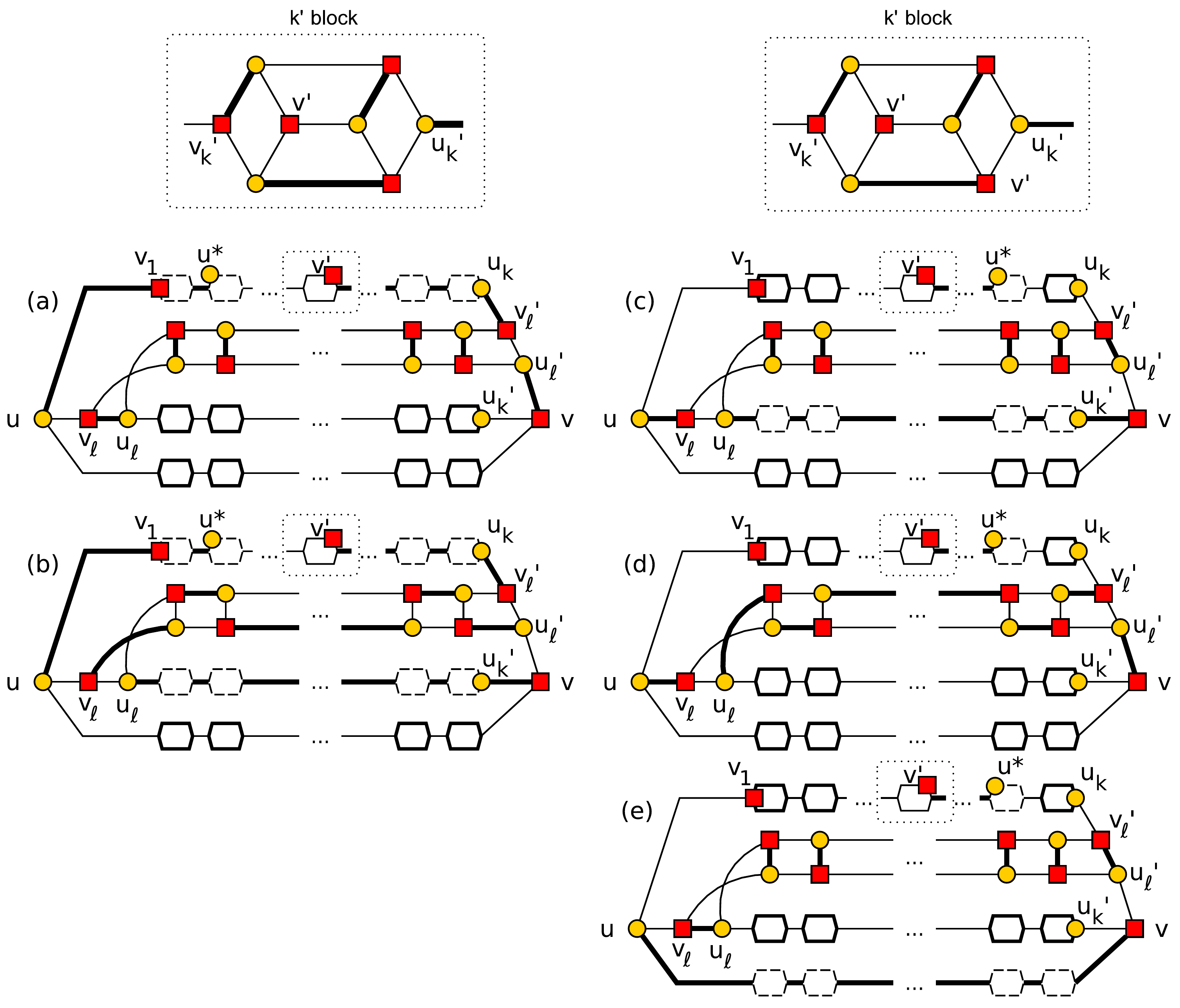}
\caption{Schematic figure of case distinctions in case~2}
\label{figure:CaseTwoLadder}
\end{figure}
\item[case~3:]~$u^*$ is in one block of~$b_{1}',\dots,b_k'$ or~$u^*=u$ (see Figure~\ref{figure:CaseThreeLadder}). If~$u^*$ occurs in block~$k^* \in \{b_1',\dots,b_k'\}$ then this enforces that the edge~$(u_{k^*-1}',v_{k^*}')$ is matched and the edge~$(u_{k^*}',v_{k^*+1}')$ not. Therefore, all blocks between~$b_1',\dots,b_{k^*-1}'$ have a near-perfect~$N_{v_j',u_j'}$-matching and all blocks between~$b_{k^*+1}',\dots,b_k'$ a perfect matching~$M_i$. The graph~$G_2$ contains either (a) one of the possible near-perfect~$N_{uv}$-matchings or (b) a possible perfect matching. In (a) there exist~$F_{k+1}$ possibilities for a perfect matching in the ladder with~$k$ rungs. Hence, we find for (a) and (b) together,~$|N_{u^*v'}|\leq 3^{k'}6^{2k+k'}F_{k+2}+3^{2k}6^k+3^{k+k'}6^{k+k'}$. The case~$u=u^*$ enforces that ~$(u,v_1)$ is unmatched and~$(u_k,v_{l}')$ is matched. In this scenario we have the cases (c), where~$(u_{l}',v)$ is matched or case (d), where~$(u_{k}',v)$ is matched. For (c) and (d) together we find~$|N_{u^*v'}|\leq 3^{k'}6^{2k+k'}F_{k+2}+3^k6^{2k} \leq 3^{k'}6^{2k+k'}F_{k+2}+6^{3k}$. In summary we get for case~3,~$|N_{u^*v'}|\leq 3^{k'}6^{2k+k'}F_{k+2}+6^{3k}$.
\begin{figure}[tbp]
\centering
\includegraphics[width=14cm]{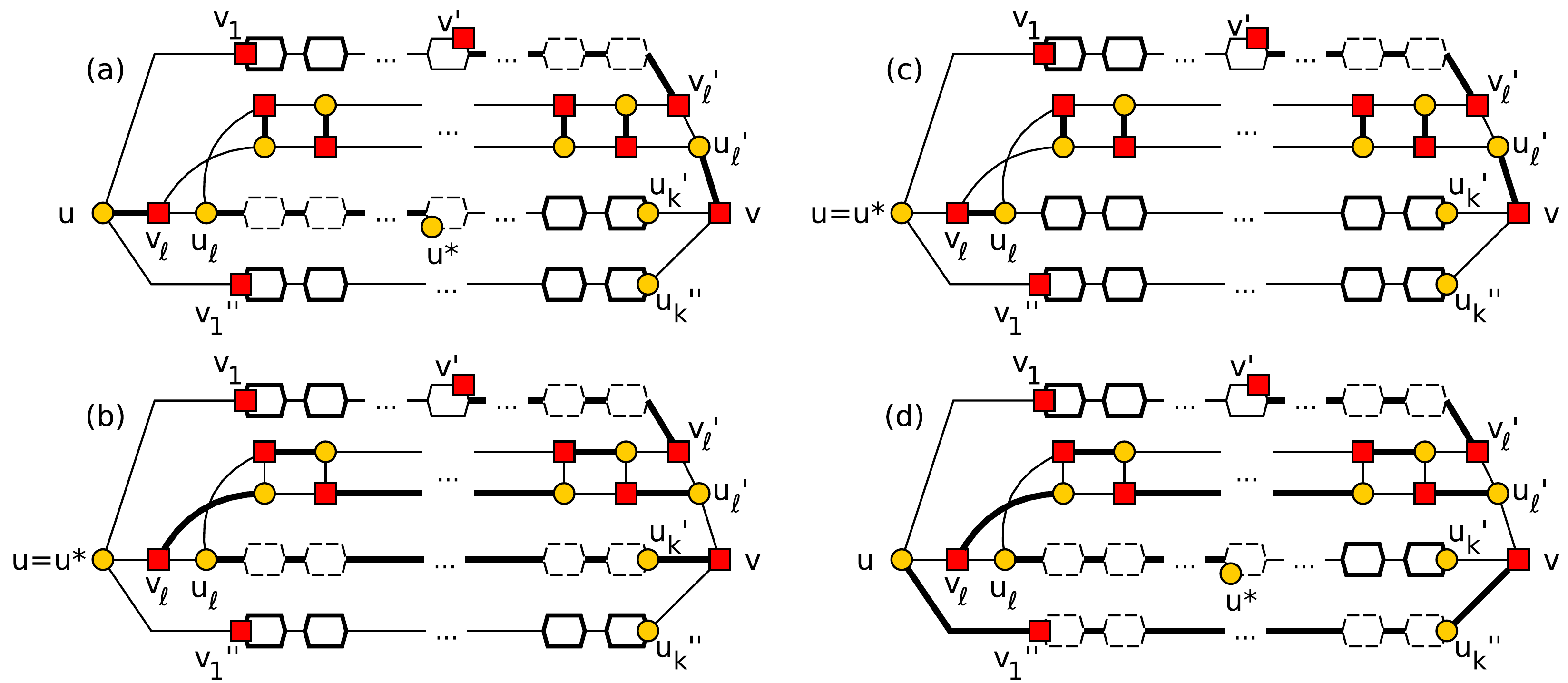}
\caption{Schematic figure of case distinctions in case~3}
\label{figure:CaseThreeLadder}
\end{figure}
\item[case~4:]~$u^*$ is in one block of~$b_1'',\dots,b_k''$ (see Figure~\ref{figure:CaseFourLadder}). This enforces that edge~$(u_{k^*-1}'',v_{k^*}'')$ is matched and edge~$(u_{k^*}'',v_{k^*+1}'')$ not. Therefore, all blocks between~$b_1'',\dots,b_{k^*-1}''$ have a near-perfect~$N_{v_j'',u_j''}$-matching and all blocks between~$b_{k^*+1}'',\dots,b_k''$ a perfect matching~$M_i$. This enforces that the edge~$(u,v_1'')$ is matched and the edge~$(u_{k}'',v)$ not. Two subcases occur, namely case (a), where edge~$(u_{l}',v)$ is matched and (b) where edge~$(u_{k}',v)$ is matched. For both cases together we get~$|N_{u^*v'}|\leq 3^{k'}6^{2k+k'}F_{k+2}+3^{k+k'}6^{k+k'} \leq 3^{k'}6^{2k+k'}F_{k+2}+6^{3k}$.
\begin{figure}[tbp]
\centering
\includegraphics[width=8cm]{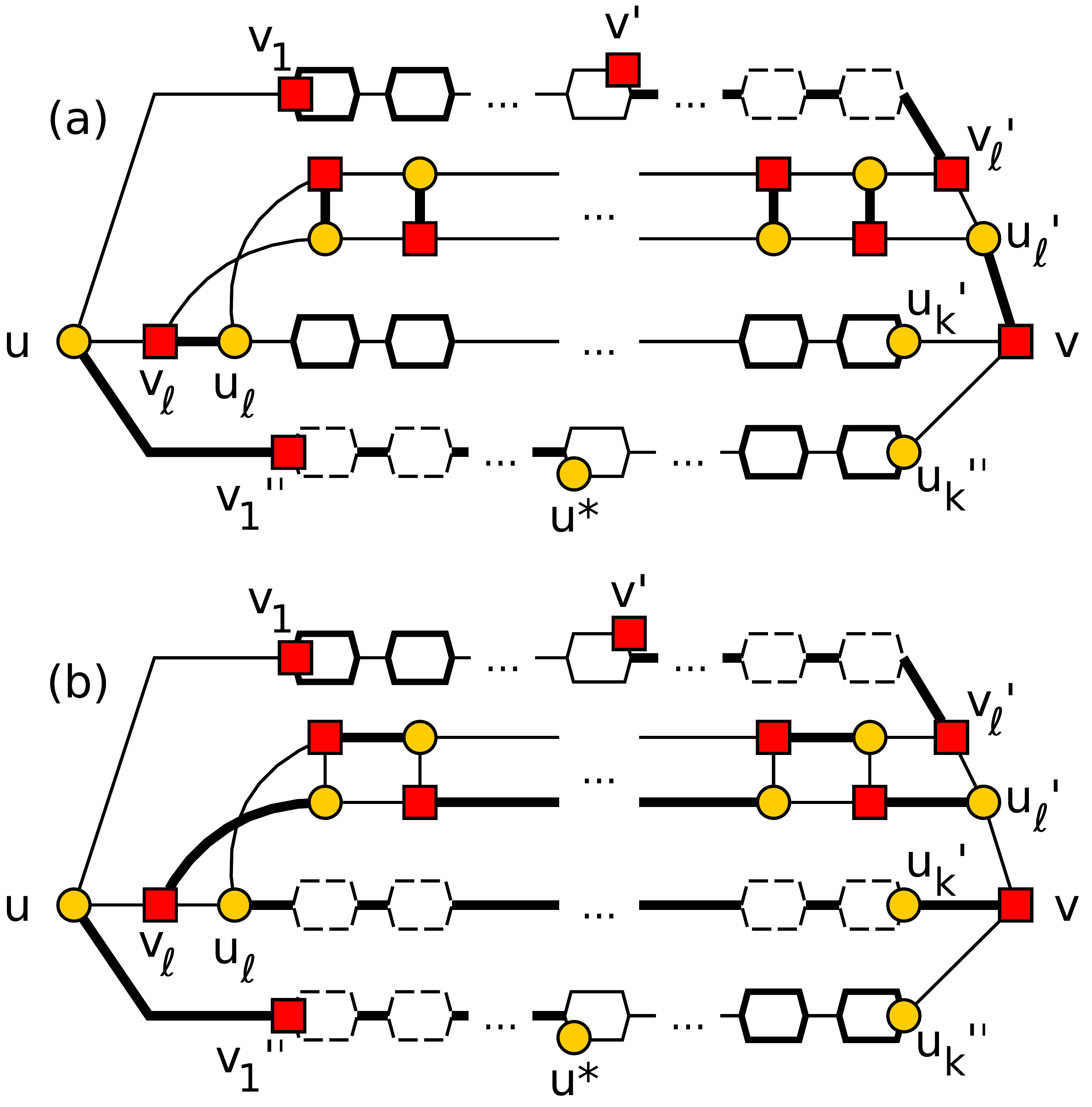}
\caption{Schematic figure of case distinctions in case~4}
\label{figure:CaseFourLadder}
\end{figure}
\item[case~5:]~$u^*$ in~$L_{k+2}$ (see Figure~\ref{figure:CaseFiveLadder}). If~$u^*=u_l$ then this enforces that either (a) edge~$(u,v_{l})$ is matched or (b) edge~$(u,v_{1}'')$ is matched. Considering the pictures we see that we get for both cases together,~$|N_{u^*v'}|\leq 3^{k'}6^{2k+k'}F_{k+1}+3^{k+k'}6^{k+k'} \leq 2 \cdot 3^{k'}6^{2k+k'}F_{k+1}$. For~$u^*\neq u_l$ we distinguish between the cases (c) with matched edge~$(u_{l}',v)$, (d) matched edge~$(u_{k}',v),$ and (e)  matched edge~$(u_{k}'',v)$. We find for all three cases together~$|N_{u^*v'}|\leq 3^{k'}6^{2k+k'}F_{k}+3^{k+k'}6^{2k+k'} F_k+ 3^{k+k'}6^{k+k'} F_k\leq 3 \cdot 3^{k'}6^{2k+k'}F_{k}$. In summary we find for case~5,~$|N_{u^*v'}|\leq 3 \cdot  3^{k'}6^{2k+k'}F_{k}$.
\begin{figure}[tbp]
\centering
\includegraphics[width=14cm]{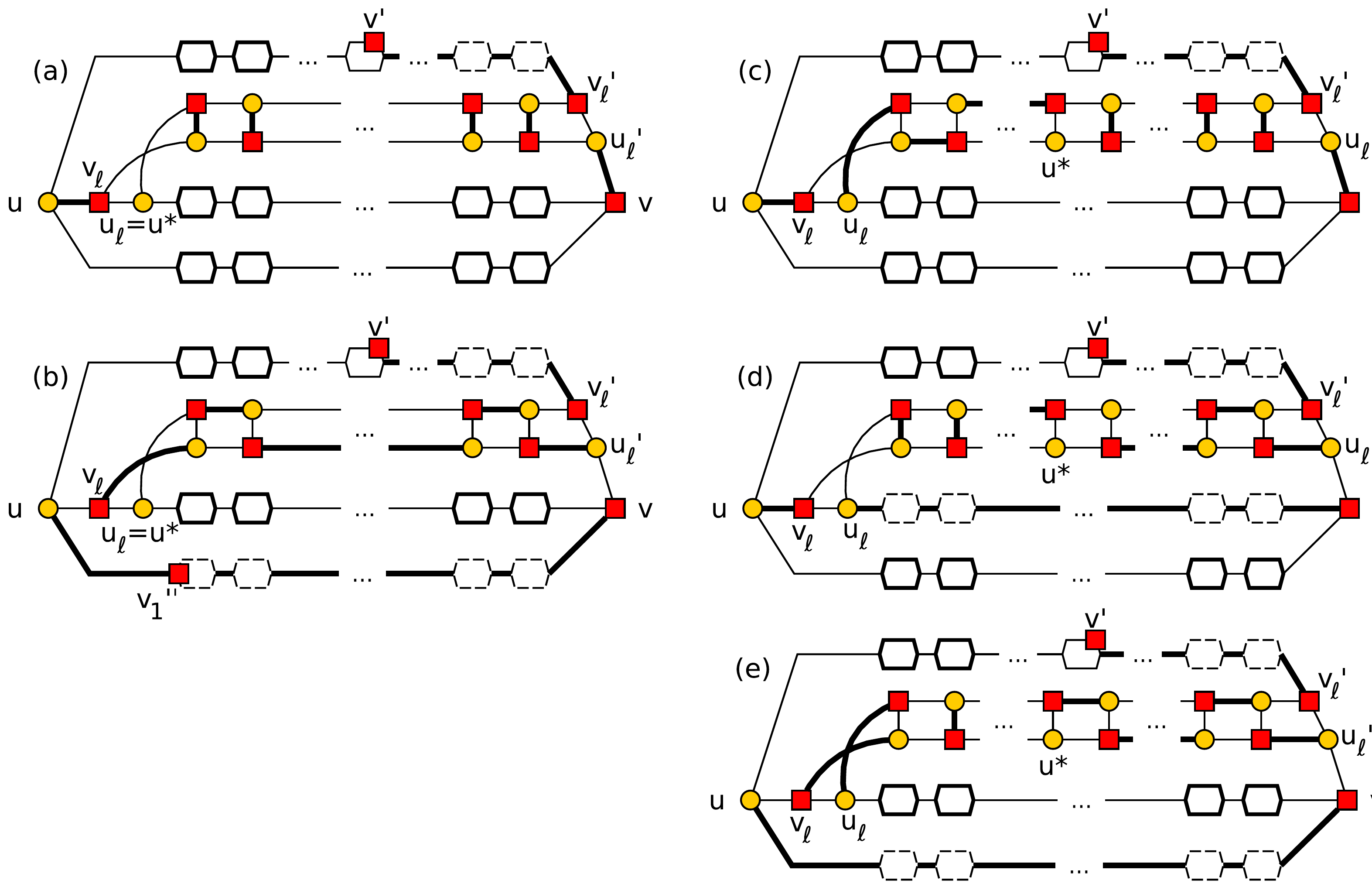}
\caption{Schematic figure of case distinctions in case~5}
\label{figure:CaseFiveLadder}
\end{figure}
\end{description}
If we put all five cases together we find with inequality (6),
$$|N_{u^*v'}|\leq 3\cdot 6^{2k+k'}3^{k'}F_{k+3}+2  \cdot 6^{3k}\leq 3 \cdot 2^4 \cdot 6^{3k}2^{\frac{k}{2}}+2 \cdot 6^{3k} \leq 50 \cdot 6^{3k} \cdot 2^{k/2}.$$
We get the upper bound~$|S|\leq 400 n 6^{3k}2^{\frac{k}{2}}$ if we take into account that there exist~$4n$ sets~$N_{u^*v'}$ and all~$4n$ sets~$N_{u'v^*}$ can be handled analogously. If we consider a set~$\mathcal{P}'$ of simple paths between all vertex pairs~$(M,M')$ of~$N_{uv}^1\times N_{uv}^2$ (exactly one between each pair), then we find that~$|\mathcal{P}'|=6^{6k-2}(F_{k+3})^2\cdot 9$ paths use an arc in~$\Gamma$ which is adjacent to the vertex set~$S$. Note that each vertex in~$S$ is adjacent to at most~$m$ arcs (loops are possible). Hence, there exists an arc~$a$ in~$\Gamma$ which has to be used from at least~$\frac{9 \cdot 6^{6k-2}F_{k+3}^2}{400 \cdot m \cdot n 6^{3k}2^{\frac{k}{2}}}=\frac{9 \cdot 6^{3k-2}F_{k+3}^2}{400mn \cdot 2^{\frac{k}{2}}}$ paths of~$\mathcal{P}'$.
With equation~(\ref{eqn:ContributionMaximumLoading}) we get for the maximum loading for an arbitrary set $\mathcal{P}$ of paths in $\Gamma,$
\footnotesize
$$\rho_1(\mathcal{P})\geq \rho_1(\mathcal{P}')\geq \frac{9 \cdot 6^{3k-2}F_{k+3}^2}{400 m n \cdot 2^{\frac{k}{2}}} \cdot \frac{m}{|\Omega|} \geq \frac{9 \cdot 6^{3k-2}F_{k+3}^2}{400n \cdot 2^{\frac{k}{2}}|N_{uv}|(n^2+1)}=\frac{F_{k+3}}{1600 n(n^2+1)\cdot 2^{\frac{k}{2}}}.$$
\normalsize
With inequality (\ref{eqn:Fibonacci}) we find
$$\frac{F_{k+3}}{2^{\frac{k}{2}}} \geq \frac{(1+\sqrt{5})^{k+3}}{2^{k+3}\sqrt{5}\sqrt{2^k}}=\frac{(1+\sqrt{5})^3}{2^3\sqrt{5}}\left(\frac{1+\sqrt{5}}{2\sqrt{2}}\right)^k.$$
Since $\frac{1+\sqrt{5}}{2 \sqrt{2}}>1$ we find that the maximum loading cannot be lower bounded by a polynomial. Together with Proposition~\ref{Proposition:lowerBoundMixingtime} we find that the graph class does not possess a polynomial mixing time.\qed
\end{proof}

We point out that planar graphs and threshold graphs belong to `difficult graph classes' when using Broder's chain in contrast to the existence of efficient exact samplers for these cases. Until now, we did not find a bipartite graph class where the fraction~$|N(G)|/|M(G)|$ is not  bounded polynomially for each graph~$G$ while the mixing time of Broder's chain is. We conjecture that the missing logical direction of the result in \cite[Corollary 3.13]{Sinclair:1993:ARG:140552} in Sinclair’s PhD-thesis can be formulated as follows.

\begin{conjecture}
For a bipartite graph class Broder's chain is rapidly mixing if and only if the fraction $|N(G)|/|M(G)|$ is bounded polynomially for each~$G$.
\end{conjecture}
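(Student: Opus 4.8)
The plan is to prove the two implications separately; the forward direction is a one-line consequence of known work, while the converse is the substantive open part that absorbs essentially all of the effort. For the ``if'' direction, suppose $\mathfrak{G}$ is a class with $|N(G)|/|M(G)|\le n^{k}$ for every $G\in\mathfrak{G}$ on $2n$ vertices, where $k$ is a fixed constant. Since Broder's chain has $w\equiv 1$, its stationary distribution is uniform on $\Omega=N(G)\cup M(G)$, so $\pi_{\min}^{-1}=|\Omega|$ and $\ln(|\Omega|\epsilon^{-1})=\ln\pi_{\min}^{-1}+\ln\epsilon^{-1}$. Feeding the hypothesis into the canonical-path bound $\tau(\epsilon)\le 16^{2}|E|^{2}\bigl(|N(G)|/|M(G)|\bigr)^{4}\ln(|\Omega|\epsilon^{-1})$ of Table~\ref{table:mixingtimes} (\cite{Sinclair:1993:ARG:140552}) and using $|E|\le n^{2}$ gives $\tau(\epsilon)\le 256\,n^{4+4k}\bigl(\ln\pi_{\min}^{-1}+\ln\epsilon^{-1}\bigr)$, so the single polynomial $p(n)=256\,n^{4+4k}$ witnesses rapid mixing on all of $\mathfrak{G}$; this direction is precisely \cite[Corollary 3.13]{Sinclair:1993:ARG:140552}.

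For the ``only if'' direction I would argue by contraposition, adapting the bottleneck technique of Propositions~\ref{Proposition:hexagonGraph}--\ref{Proposition:regularWithLadder}. Assume $|N(G)|/|M(G)|$ is bounded by no polynomial over $\mathfrak{G}$ and fix a sequence $G_{t}\in\mathfrak{G}$ on $2n_{t}$ vertices along which $r_{t}:=|N(G_{t})|/|M(G_{t})|$ outgrows every polynomial in $n_{t}$. Two elementary facts hold for every bipartite $G$ with a perfect matching: (a) the map $(M,e)\mapsto M\setminus e$ is injective from $\{(M,e):M\in M(G),\,e\in M\}$ into $N(G)$, hence $|N(G)|\ge n|M(G)|$ and $|\Omega|\le 2|N(G)|$; and (b) by pigeonhole over the $n^{2}$ blocks, some $N_{uv}(G)$ satisfies $|N_{uv}(G)|\ge |N(G)|/n^{2}$. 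The crux is a \emph{structural lemma}: inside such a dominant block there is a partition $N_{uv}=A\cup B$ with $|A|,|B|\ge |N_{uv}|/\mathrm{poly}(n)$ and a vertex set $W\subseteq V(G)\setminus\{u,v\}$ such that every transition of Broder's chain crossing between the $A$-side and the $B$-side has an endpoint in $S:=M(G)\cup\{M\in N(G):W\text{ contains an unmatched vertex of }M\}$, while $|S|\le |N(G)|/r^{\gamma}$ for some constant $\gamma>0$ (the presence of $M(G)$ in $S$ is harmless since $|M(G)|\le |N(G)|/r$ is itself super-polynomially small). Granting the lemma and fixing an arbitrary single-path system $\mathcal{P}$, every path $\mathcal{P}_{MM'}$ with $(M,M')\in A\times B$ traverses an arc incident to $S$; at most $|S|\,m$ arcs are incident to $S$, and by (\ref{eqn:ContributionMaximumLoading}) each such path contributes $m/|\Omega|$ to the loading of whichever arc it uses, so some arc $a$ satisfies
\[
\rho_{1}(\mathcal{P})\;\ge\;\frac{f_{1}(a)}{Q(a)}\;\ge\;\frac{|A|\,|B|}{|S|\,m}\cdot\frac{m}{|\Omega|}\;=\;\frac{|A|\,|B|}{|S|\,|\Omega|}\;\ge\;\frac{\bigl(|N(G)|/\mathrm{poly}(n)\bigr)^{2}}{\bigl(|N(G)|/r^{\gamma}\bigr)\cdot 2|N(G)|}\;\ge\;\frac{r^{\gamma}}{\mathrm{poly}(n)}.
\]
Since this holds for \emph{every} $\mathcal{P}$ and grows super-polynomially along the $G_{t}$, Proposition~\ref{Proposition:lowerBoundMixingtime} (with $\pi_{\min}^{-1}=|\Omega|\le 2^{n^{2}}$) shows that $\tau(\epsilon)$ is bounded by no polynomial, i.e.\ Broder's chain is not rapidly mixing on $\mathfrak{G}$.

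The entire argument rests on the structural lemma, which is exactly why the statement is only conjectured. Extracting a ``half-splitting'' feature of $N_{uv}$ is easy (average over the edges, or over the local configurations, occurring in its elements), and exhibiting a vertex that is unmatched only super-rarely is easy in isolation; the difficulty is forcing the \emph{same} local feature to discharge both tasks at once. In the chain-of-gadgets families of Section~\ref{sec:Broder'sChain} this succeeds because pinning one interior gadget to a fixed matching freezes a constant fraction of the structure on either side of it and thereby concentrates the ``unmatched mass'' on a handful of distinguished vertices; the open point is whether a super-polynomial ratio always compels such a bottleneck, or whether $N(G)$ can instead be enormous but ``diffuse'', with its unmatched vertices spread roughly evenly over all $\Theta(n^{2})$ blocks so that no small $S$ exists. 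A sensible plan of attack is to first settle the case $|M(G)|=O(1)$, which already covers every example in this paper, and then to reduce the general case to it by repeatedly deleting a matched edge of a fixed perfect matching while tracking how $|N|/|M|$ and the bottleneck structure transform. If even the bounded-$|M(G)|$ case fails for some exotic family, the conjecture is false, and the real task becomes to identify the correct graph parameter governing rapid mixing of Broder's chain.
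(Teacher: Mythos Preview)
The statement is a \emph{conjecture} in the paper, not a theorem: the paper offers no proof at all, only the remark that the ``if'' direction is Sinclair's \cite[Corollary 3.13]{Sinclair:1993:ARG:140552} and that the authors were unable to find a counterexample to the ``only if'' direction. So there is no paper proof to compare your proposal against.

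Your treatment of the ``if'' direction is correct and is exactly what the paper alludes to. Your program for the ``only if'' direction is a natural abstraction of the technique the paper uses in Propositions~\ref{Proposition:hexagonGraph}--\ref{Proposition:regularWithLadder}: find a dominant block $N_{uv}$, split it in half by a local feature, and argue that all crossing paths funnel through a small separator $S$. The auxiliary facts (a) and (b) are correct, and the derivation of a super-polynomial lower bound on $\rho_1(\mathcal{P})$ from the structural lemma is sound.

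The genuine gap is precisely the one you name yourself: the structural lemma is unproven, and you give no mechanism for producing the vertex set $W$ and the constant $\gamma>0$ from the hypothesis $|N(G)|/|M(G)|\to\infty$ alone. In the paper's examples the separator $W$ is always the vertex set of a \emph{geometrically central} gadget in a chain, and its smallness comes from the fact that forcing an unmatched vertex into that gadget rigidifies roughly half of the chain. Nothing in a bare super-polynomial ratio $|N|/|M|$ obviously supplies such a central gadget; as you say, the worry is a ``diffuse'' $N(G)$ with no small bottleneck. Your suggestion to first settle the case $|M(G)|=O(1)$ is reasonable, but note that even there the conjecture is open: the paper's hexagon and threshold families have $|M(G)|=1$, yet the argument still required an explicit hand-built separator rather than one extracted from the ratio. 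In short, your proposal is a faithful outline of what a proof \emph{would} look like, but it does not advance beyond the point at which the paper itself stops and declares the statement a conjecture.
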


\section{Experimental Insights}
\label{sec:experiments}

The best known upper bounds for mixing times shown in Table~\ref{table:mixingtimes} are far too large to be used in practice. We want to investigate the quality of these bounds to see if they fit the exact total mixing time well or if there is some potential for further improvement. The upper bounds in Table~\ref{table:mixingtimes} are general, that means they make only little use of the state graph's very special structure. We believe that knowing the structure of a state graph is essential for getting tight bounds. 
To judge the quality of the upper bounds we firstly compute the total mixing time of a Markov chain. Of course this quantity heavily depends on the size and structure of the given bipartite graph and is not known in general, so we have to compute the total mixing time for each bipartite graph individually. We also apply the methods for bounding the mixing time to each graph individually to see if considering the structure of a graph may improve the quality of a bound. 
For our experiments, we consider the spectral bound (see inequality~(\ref{eqn:spectral_bound})) as well as the multicommodity bound by flow function $f_2$ (see inequality~(\ref{eqn:flow2})). The multicommodity bound depends on a set of paths~$\mathcal{P}$. We investigate three different path sets.
\begin{description}
\item[Sinclair's canonical paths \boldmath$\mathcal{P}_1$:] Between each pair of matchings $M, M' \in \Omega$ exactly one path is constructed. Each path consists of three segments. An initial segment connects $M$ with the nearest perfect matching $\bar{M} \in M(G)$. A main segment connects $\bar{M}$ with a perfect matching $\bar{M}' \in M(G)$ with minimal distance to $M'$. The end segment connects $\bar{M}'$ to $M'$. For details see~\cite{Jerrum:1989:AP:76071.76077}.
\item[One shortest path \boldmath$\mathcal{P}_2$:] We send the flow $f_2(M,M')$ on exactly one shortest path between~$M$ and~$M'$. 
\item[All shortest paths \boldmath$\mathcal{P}_3$:] We divide the flow $f_2(M,M')$ uniformly between \emph{all} shortest paths between $M$ and $M'$.
\end{description}
For our experiments, we always set $\epsilon = 1 / (2e)$ as proposed by Sinclair \cite{SinclairLectures}. Note that this is a reasonable choice in many practical applications. In a hypothetical state graph with one thousand states a total variation distance of $\epsilon = 1 / (2e)$ implies a sampling probability which varies from stationary distribution by approximately $0.00037$ on average per state. 

\subsection{Experiments and Results}

\paragraph{First Experiment}
We enumerate all connected, non-isomorphic bipartite graphs with $2n=12$ vertices. From all the $212,780$ bipartite graphs, $89,242$ graphs contain at least one perfect matching. For each graph the corresponding state graph has been constructed. (Note that the number of bipartite graphs grows exponentially with $n$. With $2n=14$ this number exceeds 13 million, which is clearly too much for a complete experimental analysis.) Figure~\ref{fig:distr} shows the cumulative distribution of the state graph size~$|\Omega|$.

\begin{figure}[t]
  \begin{center}
    \includegraphics[width=0.5\textwidth]{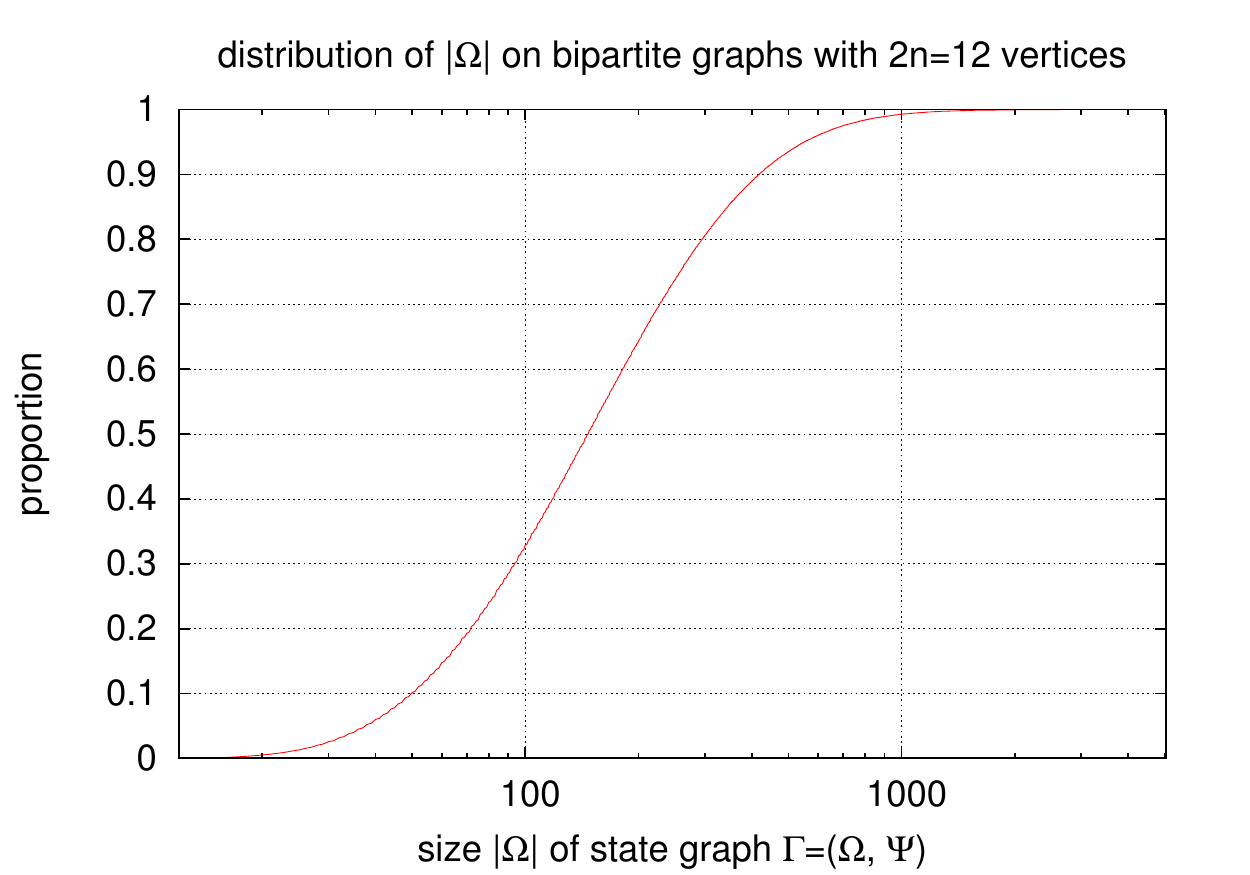}
    \caption{Cumulative distribution of state graph size~$|\Omega|$ over all connected bipartite graphs with 12 vertices and at least one perfect matching. Minimum:~12, maximum:~5040, average:~203.1, median:~148}
    \label{fig:distr}
  \end{center}
\end{figure}

Figure~\ref{fig:size_mixing} shows the size of a state graph versus the corresponding mixing time. These two dimension do not seem to correlate. A first surprising observation is that in case of the JSV-chain the largest state graph has the smallest total mixing time (see Figure~\ref{fig:size_mixing}). In both Markov chains, the maximum mixing time is taken at relatively small state graphs. This observation suggests that the structure of a state graph has a greater influence on the mixing time than its size.

\begin{figure}
	\centering
	\begin{subfigure}[b]{0.49\textwidth}
		\includegraphics[width=\textwidth]{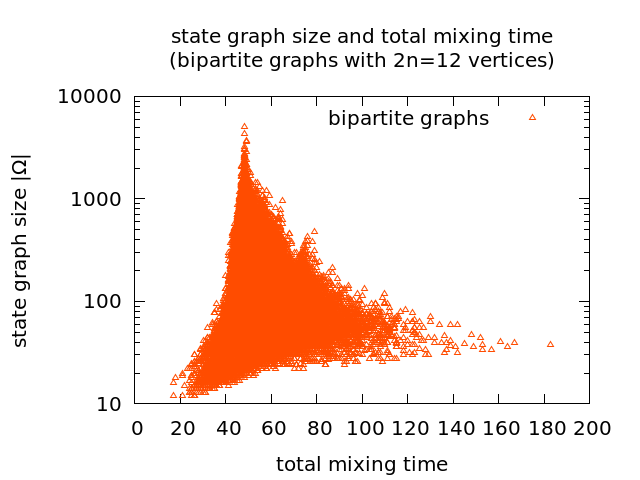}
		\caption{Broder's chain}
	\end{subfigure}
	\begin{subfigure}[b]{0.49\textwidth}
		\includegraphics[width=\textwidth]{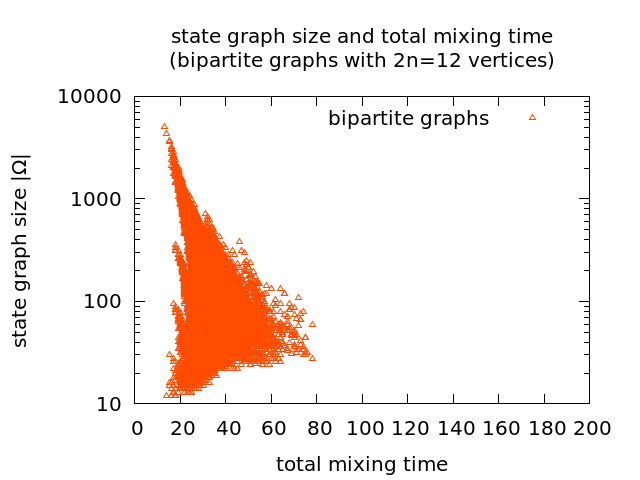}
		\caption{JSV-chain}
	\end{subfigure}
	\caption{State graph size versus mixing time for all connected bipartite graphs with $12$ vertices and at least one perfect matching}
	\label{fig:size_mixing}
\end{figure}

For each graph we compute the exact total mixing time $\tau(\epsilon)$, the spectral bound as well as three  multicommodity bounds using flow $f_2$ and the set of paths $\mathcal{P}_{1}$, $\mathcal{P}_{2}$ and $\mathcal{P}_{3}$. We investigate how far these bounds lie apart.

Comparing the total mixing time with its corresponding upper bounds, we gain the result shown in Figure~\ref{fig:experiment1}. We observe that the total mixing time is far smaller than the multicommodity bounds. That seems to be enough evidence for us to believe that theoretic upper bounds in Table~\ref{table:mixingtimes} are too pessimistic. Moreover, very different graph classes realize worst case total mixing times for the four bounding methods. We observe a clear ranking.
\begin{description}
\item[1. The spectral bound] is the most accurate one.
\item[2. The multicommodity bound \boldmath$\mathcal{P}_3$] is the most promising one after the spectral bound.
\item[3. The multicommodity bound \boldmath$\mathcal{P}_2$] cannot be better than the multicommodity bound $\mathcal{P}_3$, because the same flow is concentrated on a single path.
\item[4. The multicommodity bound \boldmath$\mathcal{P}_1$] has the poorest quality. Intuitively this can be explained by the fact that the paths of the set~$\mathcal{P}_1$ are at least as long as the paths in~$\mathcal{P}_2$. So, the same amount of flow is carried over more arcs.
\end{description}

\begin{figure}
	\centering
	\begin{subfigure}[b]{0.49\textwidth}
		\includegraphics[width=\textwidth]{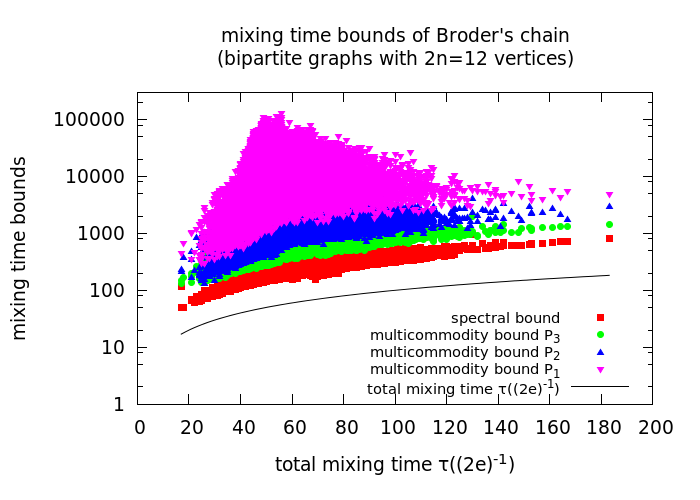}
	\end{subfigure}
	\begin{subfigure}[b]{0.49\textwidth}
		\includegraphics[width=\textwidth]{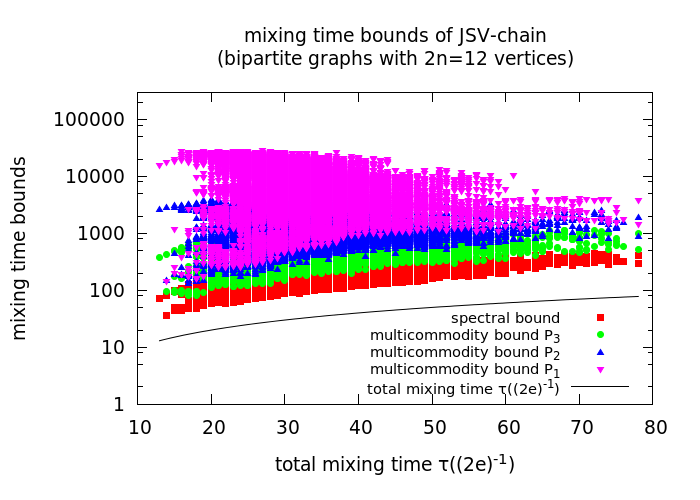}
	\end{subfigure}
	\\
	\centering
	\begin{subfigure}[b]{0.49\textwidth}
		\includegraphics[width=\textwidth]{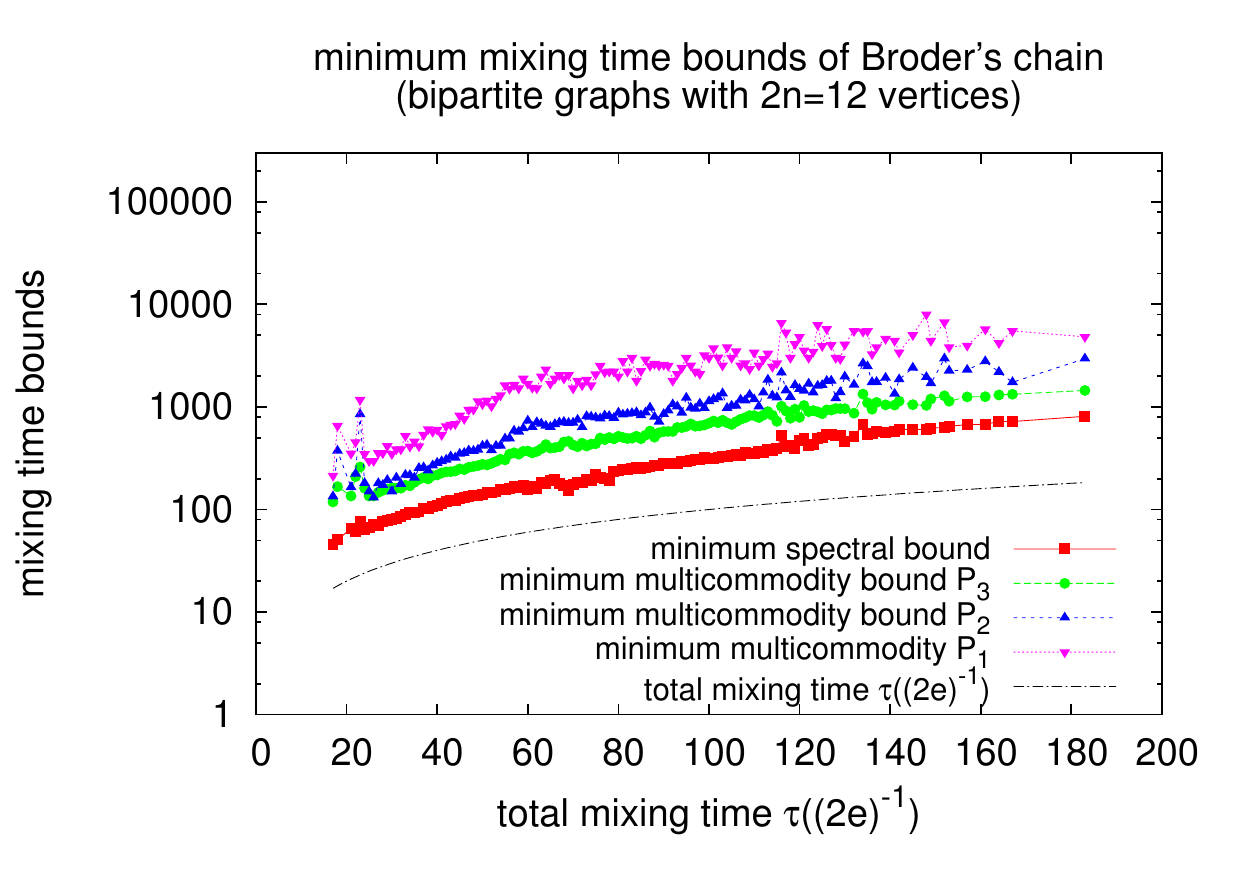}
	\end{subfigure}
	~
	\begin{subfigure}[b]{0.49\textwidth}
		\includegraphics[width=\textwidth]{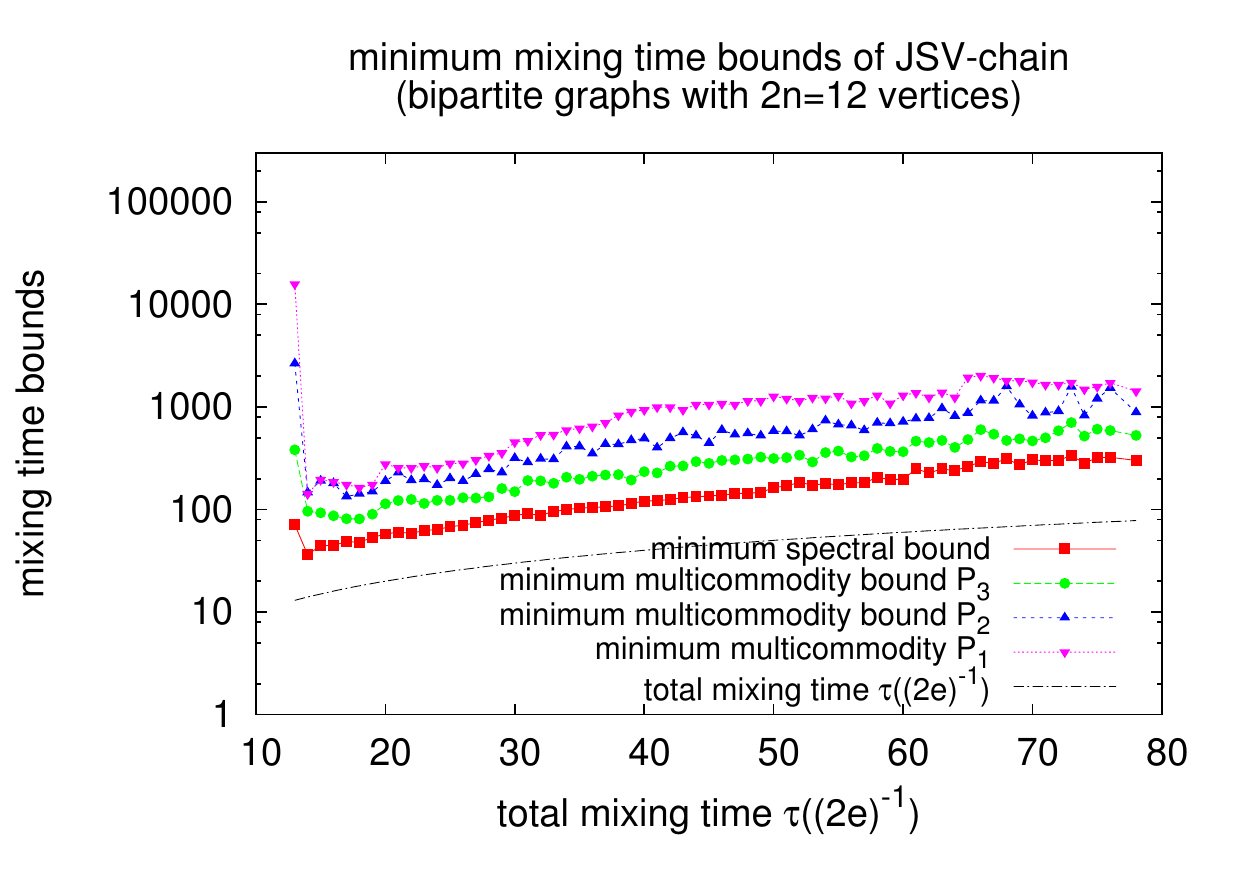}
	\end{subfigure}
	~
	\begin{subfigure}[b]{0.49\textwidth}
		\includegraphics[width=\textwidth]{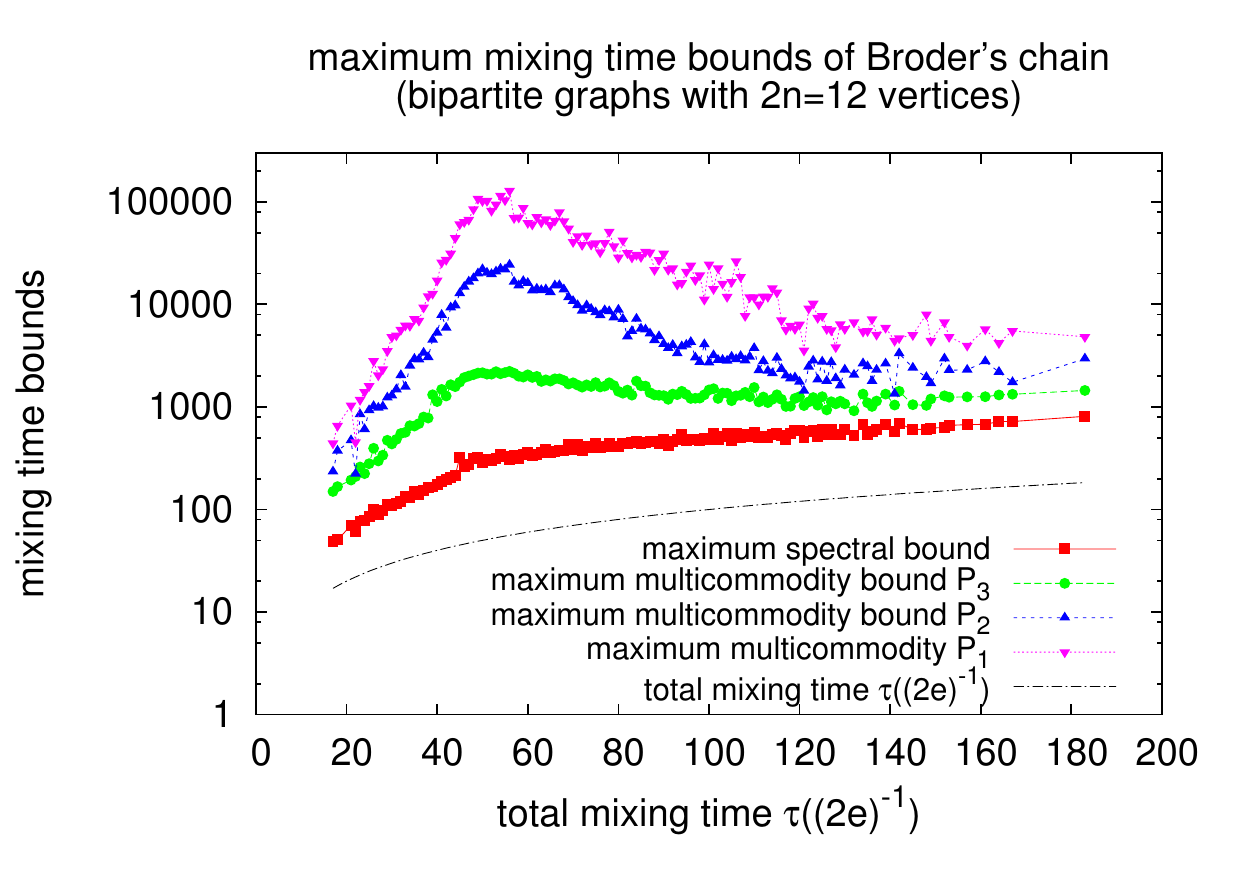}
	\end{subfigure}
	~
	\begin{subfigure}[b]{0.49\textwidth}
		\includegraphics[width=\textwidth]{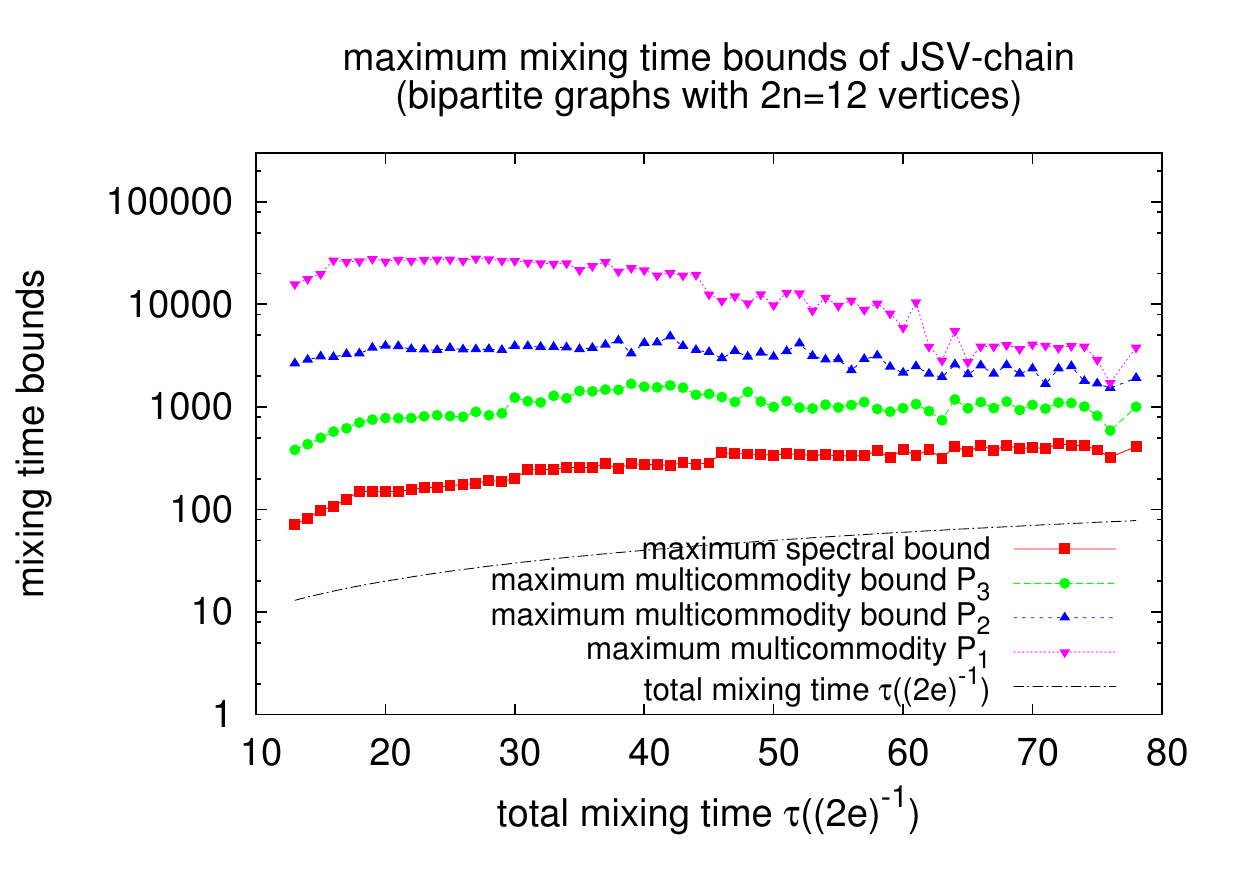}
	\end{subfigure}
	~
	\begin{subfigure}[b]{0.49\textwidth}
		\includegraphics[width=\textwidth]{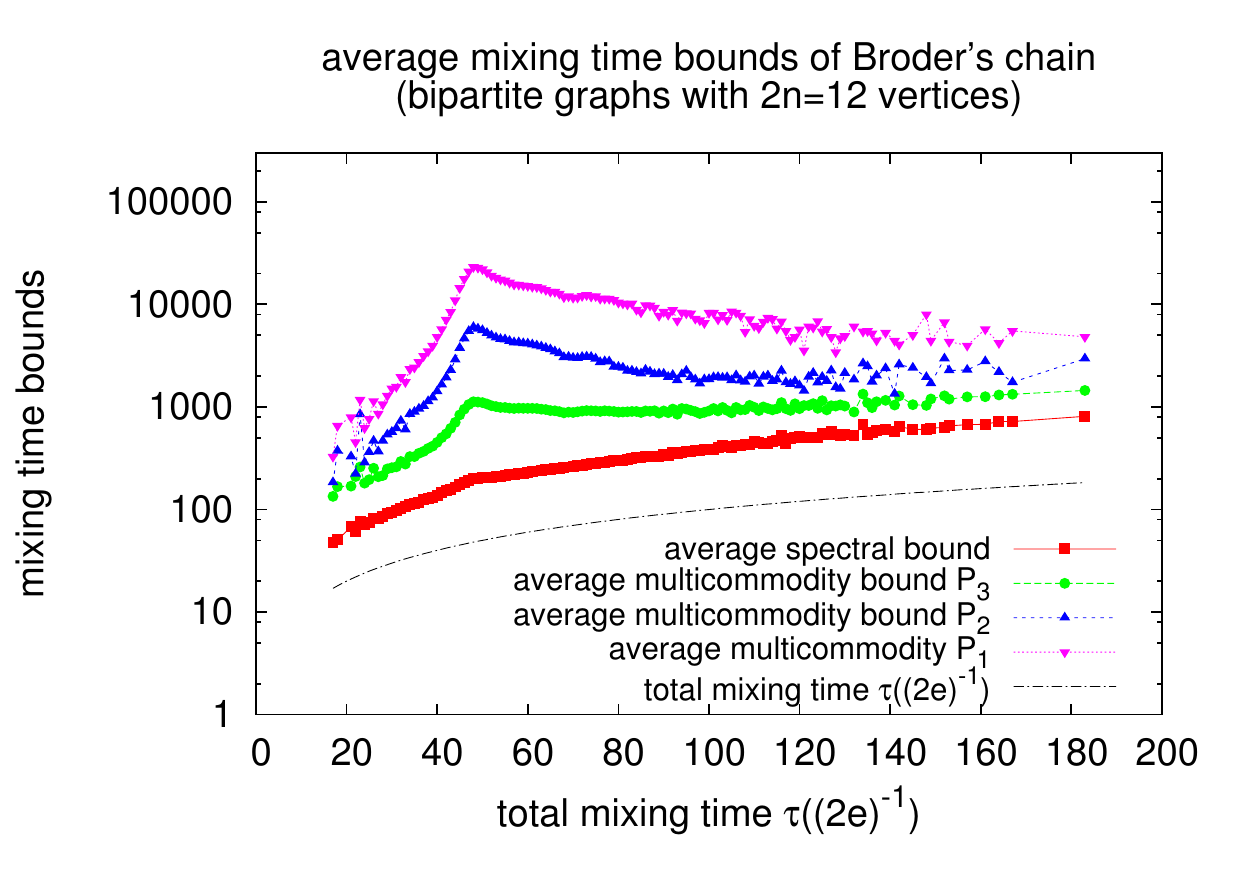}
		\caption{Broder's chain}
	\end{subfigure}
	~
	\begin{subfigure}[b]{0.49\textwidth}
		\includegraphics[width=\textwidth]{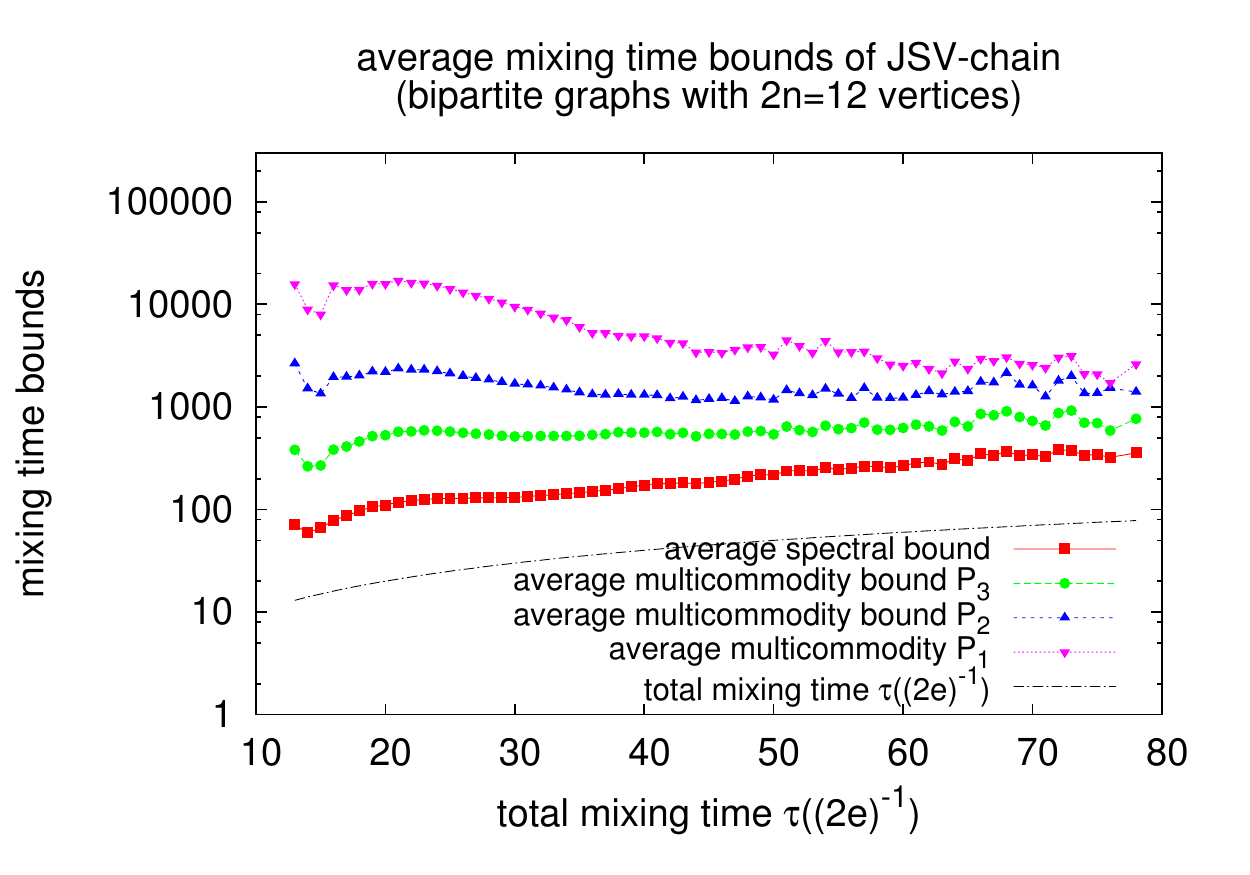}
		\caption{JSV-chain}
	\end{subfigure}
	\caption{Upper bounds on mixing time and minimum, maximum and average bounds compared to the total mixing time for all connected bipartite graphs with~$12$ vertices and at least one perfect matching}
	\label{fig:experiment1}
\end{figure}

\paragraph{Second Experiment} In a second experiment we want to study the difference between Broder's chain and the JSV chain and the quality of the upper bounds when the number of vertices $n$ grows. We consider the class of \emph{hexagon graphs} (see Figure~\ref{figure:hexagon}) for a different number $k$ of hexagons. For each hexagon graph we construct the corresponding state graph $\Gamma$ and compute the bounds as in the first experiment. With our system and implementation, we were able to compute the total mixing time and its bounds for the hexagon graph for $k=1, \ldots, 8$, investing approximately three days of computing time. Most of this time is spent on matrix multiplications during the computation of the exact total mixing time and for the computation of the multicommodity bound $\mathcal{P}_3$. 
Figure \ref{fig:experiment2} shows the results. Note the logarithmic y-axes and the growing gap between total mixing time and the upper bounds. As theory predicts, the mixing time of Broder's chain (not rapidly mixing, see Proposition~\ref{Proposition:hexagonGraph}) grows a lot faster than its equivalent in the JSV-chain (rapidly mixing \cite{JerrumSinclairVigoda04}). The gap between each of the upper bounds strongly increases with growing $k$ in both Markov chains. The ranking of the bounds from the first experiment is confirmed. 

Additional experiments with smaller $\epsilon$ (see Figures~\ref{fig:experiment1_eps1E9} and \ref{fig:experiment2_eps1E9} in the appendix) confirm the ranking of the bounds (which follows immediately from theory) but also show that the exact total mixing time comes closer to the upper bounds.

\begin{figure}
	\centering
	\begin{subfigure}[b]{0.49\textwidth}
		\includegraphics[width=\textwidth]{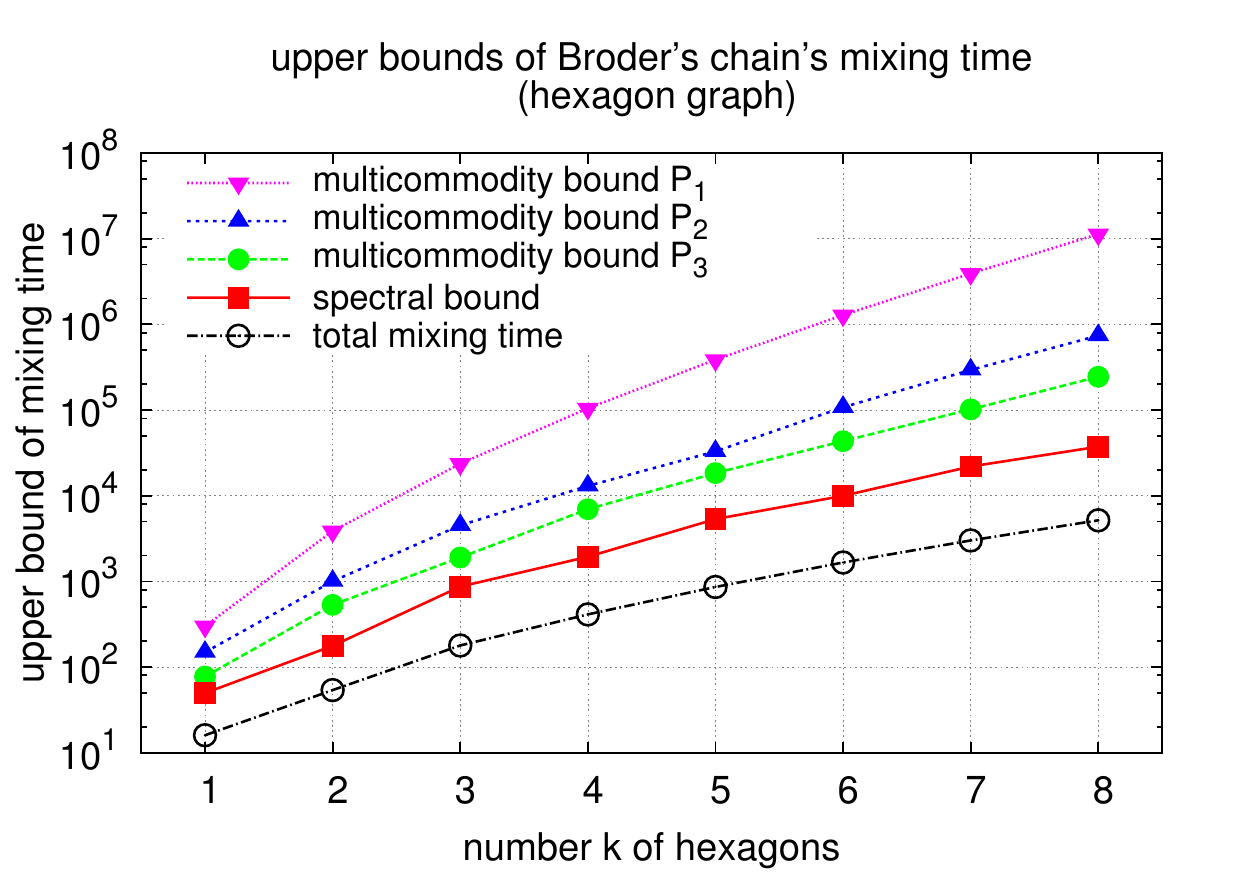}
		\caption{Broder's chain}
	\end{subfigure}
	\begin{subfigure}[b]{0.49\textwidth}
		\includegraphics[width=\textwidth]{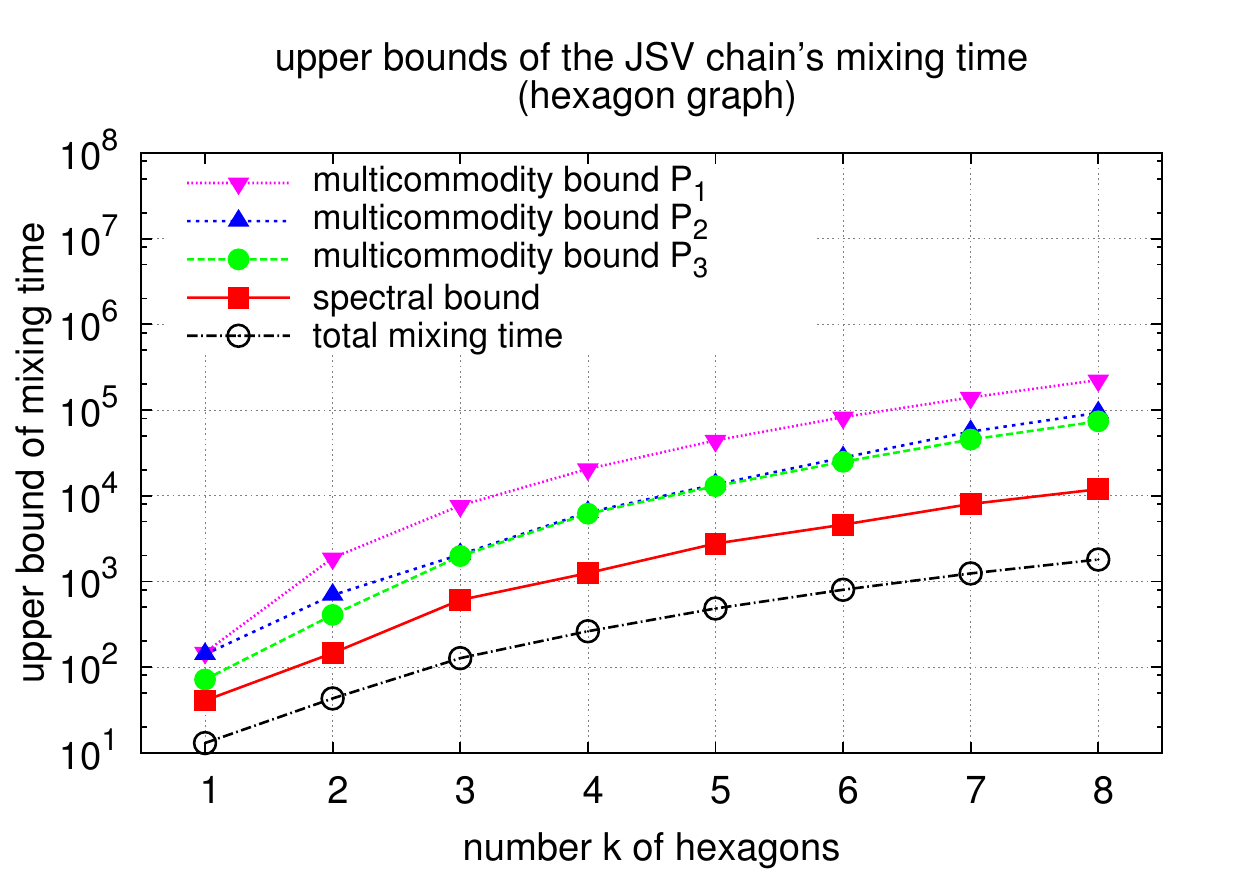}
		\caption{JSV's chain}
	\end{subfigure}
	\caption{Upper bounds of mixing time for hexagon graphs with small~$k$}
	\label{fig:experiment2}
\end{figure}

\subsection{Implementation Details}

\paragraph{Constructing the state graph}
We iteratively construct the state graph $\Gamma = (\Omega, \Psi)$ of a bipartite graph $G$ by running a full graph scan, starting with an arbitrary $M \in M(G)$. At each step of the construction phase, we consider a  non-visited matching $M$ and construct its set of out-neighbors $N^+(M) = \{ M' \colon P(M,M') > 0 \}$. We mark $M$ as visited and continue the construction by setting $M \gets M'$ for each neighbor $M' \in N^+(M)$. We stop, if every state is visited. During the process we count $|M(G)|$ and $|N_{u,v}(G)|$ for $(u,v) \in U \times V$. 
\paragraph{Graph enumeration}
We use the command line tool \emph{nauty} \cite{McKay201494} to enumerate all non-isomorphic connected bipartite graphs with $12$ vertices. 
\paragraph{Total mixing time}
At a fixed $t$, total variation distance $d(\pi, t)$ can be computed by considering $P^t$, the $t$h power of transition matrix $P$. Each row $(P^t(x_i,y))_{y \in \Omega}$ refers to $p_{x_{i}}^{t}$, the probability distribution at time $t$ with start distribution $p_{x_{i}}^{0}$, where the $i$-th component is one and all other components are zero. We compute the variation distances $d(\pi,p_{x_{i}}^{t})$ for all $i \in \{1,\dots,|\Omega|\}$ and~$d(\pi,t).$  Note that $d(\pi,t)$ is monotonically decreasing with $t$ \cite{SinclairLectures}, so we can use binary search to find $\tau(\epsilon)$, requiring only a logarithmic number of matrix multiplications. 
\paragraph{Computing the spectral bound}
The $(|\Omega| \times |\Omega|)$-transition matrix $P$ has at most $m|\Omega|$ non-zero entries. We use the sparse matrix functionality of the \emph{Python} package \emph{scipy} to compute the (in absolute terms) largest three eigenvalues.
\paragraph{Computing a multicommodity bound}
We compute the set of paths $\mathcal{P}_1$, $\mathcal{P}_2$, $\mathcal{P}_3$ for each state graph~$\Gamma$. Each path~$\mathcal{P}_{MM'} \in \mathcal{P}_i$, containing arc $a$, contributes an amount of $(f_2(\mathcal{P}_{MM'})/Q(a))$ to the loading $(f_2(a)/Q(a))$ of each arc $a \in \Psi$ (see equation~(\ref{eqn:maximumLoading})). We sum over all paths in $\mathcal{P}_i$.
\paragraph{Computing the congestion of path sets}
The computation of the congestion by $\mathcal{P}_1$ and $\mathcal{P}_2$ can be done by explicitly constructing exactly one path between each pair of states.
In contrast, the set~$\mathcal{P}_3$ contains all shortest paths between each state pair. Explicitly constructing this set requires an exponential amount of memory which we avoid by using the following approach.

For each state $x$ we compute a shortest-path-DAG rooted at~$x$ by depth first search. We use this DAG to compute the number of shortest paths $|\mathcal{P}_{x,y}|$ from $x$ to each state $y$. For each~$y\in\Omega$, the number $|\mathcal{P}_{x,y}|$ can be counted efficiently as the sum of $|\mathcal{P}_{x,v}|$ where $(v,y)$ is an arc of the shortest-path-DAG. By traversing the DAG in topological order we compute $|\mathcal{P}_{x,y}|$ for all $y\in\Omega$ in linear time. For each~$y$ we also compute the number of shortest paths $|\mathcal{P}_{v,y}|$ from each state~$v\in\Omega$ to~$y$. This numbers  can be computed analogously by traversing the DAG in counter topological order. 
In the next step we add an amount of $\delta := f_2(x,y) / |\mathcal{P}_{x,y}|$ to the loading of each arc $a=(u,v)$ for each shortest path in $\mathcal{P}_{x,y}$ which contains $a$. In other words, the loading of arc $a$ is increased by $m_a \cdot \delta$ where~$m_a$ is the number of paths in $\mathcal{P}_{x,y}$ using $a$. This number can be computed as the product of~$|\mathcal{P}_{x,u}|$ and~$|\mathcal{P}_{v,y}|$. By iterating over all~$x$ and~$y$ and traversing the DAG for each fixed pair of $x$ and $y$, we obtain an $O(|\Omega|^3)$ algorithm for computing the congestion by $\mathcal{P}_3$.
\\

Unless stated otherwise, the algorithms are implemented in \emph{Java}, version 6. For matrix multiplication we use the package \emph{JBLAS}, version 1.2. The experiments were conducted on a standard \emph{Ubuntu} 12.04 system with a \emph{Intel(R) Xeon(R) CPU X5570 $@$ $2.93$GHz} processor and~$48$ gigabyte of memory.

\section{Conclusion}

We proved that Broder's chain is not rapidly mixing for several graph classes. Multicommodity bounds are too weak to get practicable mixing times not only when knowledge about the structure of state graphs is missing but also when the structure is known explicitly. On the other hand the spectral bound doesn't differ so much from the exact total mixing time in our experiments. For future work we recommend to investigate the structure of state graphs and to develop suitable results in spectral graph theory.

\bibliography{references}

\newpage
\begin{appendix}
\section*{Appendix}

The proof of the~$\#P$-completeness of the computation of~$\frac{N(G)}{M(G)}$ can be done in three steps.

\begin{proposition}
\label{prop1}
Let~$G=(U \cup V, E)$ be a bipartite graph with~$|M(G)| > 0$. The computation of the fraction~$\frac{|N_{u,v}(G)|}{|M(G)|}$ is \emph{\#P-complete} for arbitrary~$u \in U$ and~$v \in V$.
\end{proposition}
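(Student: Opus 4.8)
The plan is to handle membership and $\#P$‑hardness separately. Membership is immediate: given a $\#P$ oracle one obtains the two integers $|M(G)|$ and $|N_{u,v}(G)|=|M(G-u-v)|$ (both are counts of perfect matchings) and divides, all in polynomial time. For hardness I would give a polynomial‑time Turing reduction from the evaluation of the permanent $\operatorname{per}(A)$ of a $0/1$ matrix, i.e.\ from counting perfect matchings in a bipartite graph, which is $\#P$‑complete \cite{Valiant79}. So let $H=(X\cup Y,F)$ with $|X|=|Y|=n$ be the input bipartite graph and $A$ its bi‑adjacency matrix; whether $|M(H)|>0$ is decided by a polynomial‑time matching computation, so we may assume $|M(H)|\ge 1$.

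The elementary building block is the following observation. Adjoin to $H$ two new vertices $u$ (on the $X$‑side) and $v$ (on the $Y$‑side); if $u$ is made adjacent only to a prescribed $y^{*}\in Y$ and $v$ only to a prescribed $x^{*}\in X$, then in the resulting graph $G$ the edges $\{u,y^{*}\}$ and $\{v,x^{*}\}$ occur in every perfect matching, so $|M(G)|=|M(H-x^{*}-y^{*})|$, whereas deleting $u$ and $v$ merely removes those edges, so $|N_{u,v}(G)|=|M(H)|$; choosing $x^{*},y^{*}$ as the endpoints of an edge of a perfect matching of $H$ keeps $|M(G)|>0$. A single query thus returns the rational number $|M(H)|/|M(H-x^{*}-y^{*})|$. (A convenient variant joins $u$ to all of $Y$, $v$ to all of $X$ and inserts the edge $\{u,v\}$; then $|N_{u,v}(G)|=|M(H)|$ and, splitting a perfect matching of $G$ on whether it uses $\{u,v\}$, $|M(G)|=|M(H)|+|N(H)|$ with $|N(H)|=\sum_{x,y}|M(H-x-y)|$, so the query returns $|M(H)|/(|M(H)|+|N(H)|)$.) One query is not enough, however, since the value of the fraction is unchanged when all matching counts are scaled by a common factor, so it cannot by itself reveal the absolute number $|M(H)|$.

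To get around this I would run the construction not on $H$ but on a polynomially long family of edge‑weighted copies $H^{(t)}$, $t=1,\dots,O(n)$, where a weight‑$t$ edge is simulated in the standard way by a bundle of $t$ internally vertex‑disjoint paths of length three; this turns every weighted perfect‑matching count of $H$ into an ordinary count in $H^{(t)}$, picking up a factor $t$ per used original edge. Using a small ``partner'' gadget that for each $i$ gives $x_{i}\in X$ a private partner and likewise for $y_{i}\in Y$, one arranges that $|N_{u,v}(G^{(t)})|=\sum_{S\subseteq\{1,\dots,n\}}t^{|S|}\operatorname{per}(A[S,S])$ — a polynomial in $t$ of degree $n$ with known constant term $\operatorname{per}(A[\varnothing,\varnothing])=1$ and leading coefficient $\operatorname{per}(A)=|M(H)|$ — while $|M(G^{(t)})|$ is likewise a polynomial in $t$ (over all perfect matchings, every used $F$‑edge contributes a factor $t$). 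Hence the oracle outputs $r_{t}=|N_{u,v}(G^{(t)})|/|M(G^{(t)})|$ are the evaluations of a fixed rational function of $t$, which rational‑function (Cauchy) interpolation recovers from $O(n)$ evaluations up to an overall scalar; since the numerator has the known value $1$ at $t=0$, that scalar is pinned down, and its leading coefficient then yields $|M(H)|$ in polynomial time.

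The step I expect to be the real work is the gadget engineering just invoked: one must attach a constant‑size rigid structure at $u$ and $v$ so that, simultaneously, (i) $|N_{u,v}(G^{(t)})|$ is exactly the subpermanent sum above — in particular its constant term is the known value $1$, which fixes the scale, (ii) every $G^{(t)}$ encountered still possesses a perfect matching, so the fraction is defined throughout, and (iii) the numerator and denominator polynomials in $t$ are coprime, so that the reconstruction actually determines the numerator (hence its leading coefficient) rather than only a proper factor of it. Each of (i)--(iii) is a local, bounded‑size condition, but meeting all of them at once is the crux; once that is done, the reduction — together with the trivial membership above — gives $\#P$‑completeness.
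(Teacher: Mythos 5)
Your hardness argument has a genuine gap, and it is precisely the step you flag yourself as ``the real work'': the whole interpolation reduction stands or falls with a gadget satisfying your conditions (i)--(iii) --- in particular (iii), coprimality of the numerator and denominator polynomials in $t$, without which rational-function interpolation only returns the reduced fraction, whose leading coefficient need not equal $|M(H)|$ --- and you never construct such a gadget or verify these conditions. Condition (i) (that the constant term is exactly $1$, pinning the scale) is likewise asserted, not established, and it interacts nontrivially with the path-bundle simulation of weights and with the ``partner'' vertices you introduce. As written, the proposal is a plan for a reduction, not a proof of \#P-hardness.

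Moreover, the detour through weighted copies and Cauchy interpolation is unnecessary, and the reason you give for taking it is not sound. You note that a single oracle answer is invariant under scaling all matching counts and conclude that one cannot extract an absolute count; but the oracle may be queried on \emph{different} graphs, and ratios along a chain of graphs multiply. This is exactly the paper's argument: fix a perfect matching $M=\{e_1,\ldots,e_n\}$ of $G$ with $e_i=\{u_i,v_i\}$, set $G_0=G$ and $G_i=G_{i-1}-u_i-v_i$; then $|N_{u_i,v_i}(G_{i-1})|=|M(G_i)|$, each $G_i$ still contains the perfect matching $M\setminus\{e_1,\ldots,e_i\}$ so every queried fraction is defined, and the final count is $|N_{u_n,v_n}(G_{n-1})|=1$. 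Hence the telescoping product
\[
|M(G)|=\prod_{i=1}^{n}\frac{|M(G_{i-1})|}{|N_{u_i,v_i}(G_{i-1})|}
\]
is obtained from $n$ oracle calls, contradicting the \#P-completeness of counting perfect matchings; your own ``elementary building block'' (attaching pendant $u,v$ so that the queried fraction equals $|M(H)|/|M(H-x^{*}-y^{*})|$), chained along a perfect matching in the same way, would give the same telescoping even if one insists that the queried pair be freshly added vertices. The scale ambiguity you worry about is resolved by the anchor value $1$ at the end of the chain --- no edge weights, path bundles, or rational interpolation are needed. (Your membership remark is fine in spirit, though, as in the paper, the completeness claim is really about Turing reductions with a ratio oracle rather than literal membership of a rational-valued function in \#P.)
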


\begin{proof}
We assume, we could efficiently compute the ratio~$|N_{u,v}(G)|/|M(G)|$ for a given bipartite graph~$G$. Let~$M = \{ e_1, \ldots, e_n \}$ be an arbitrary perfect matching in~$M(G)$ with~$e_i = \{u_i, v_i\}$. We show how to compute~$|M(G)|$ from a polynomial number of ratios~$|N_{u,v}(G)|/|M(G)|$ of suitable bipartite graphs. To this end, we introduce a set of auxiliary graphs~$G_0, \ldots, G_{n-1}$ with~$G_i~=~(U_i~\cup~V_i,~E_i)$ for~$ 0 \leq i < n$. The graph~$G_i$ is constructed from~$G_{i-1}$ by the following recurrence
\begin{eqnarray*}
G_0 &=& G \\
G_i &=& \left(U_{i-1} \cup V_{i-1} \setminus \{ u_i, v_i \}, E_{i-1} \setminus \big\{ \{ x, y \}~|~ x = u_i \text{ or } y = v_i \} \big\} \right).
\end{eqnarray*}

Note that~$M_i = M \setminus \{ e_1, \ldots, e_i \}$ is a perfect matching in~$G_i$ and so~$M(G_i) \not= 0$ for all~$0 \leq i < n$.
Furthermore, for all~$0 < i < n$ we find
\[ |M(G_i)| = |N_{u_i,v_i}(G_{i-1})|, \]
because the number of near-perfect matchings in~$G_{i-1}$ where~$u_i$ and~$v_i$ are the only unmatched vertices equals the number of perfect matchings in~$G_i$. There is exactly one near-perfect matching in the graph with zero edges, i.e.~$|N_{u_n,v_n}(G_{n-1})| = 1$.
Consider the telescope product

\begin{eqnarray*}
|M(G)| &=& \frac{|M(G_0)|}{|N_{u_n,v_n}(G_{n-1})|} \\
&=& \frac{|M(G_0)|}{|N_{u_1,v_1}(G_0)|} \cdot \frac{|N_{u_1,v_1}(G_0)|}{|N_{u_2,v_2}(G_1)|} \cdot \ldots \cdot \frac{|N_{u_{n-1},v_{n-1}}(G_{n-2})|}{|N_{u_n,v_n}(G_{n-1})|} \\
&=& \frac{|M(G_0)|}{|N_{u_1,v_1}(G_0)|} \cdot \frac{|M(G_1)|}{|N_{u_2,v_2}(G_1)|} \cdot \ldots \cdot \frac{|M(G_{n-1})|}{|N_{u_n,v_n}(G_{n-1})|}.
\end{eqnarray*}

Assume we could efficiently compute the ratio~$|M(G_i)| / |N_{u_{i+1},v_{i+1}}(G_i)|$ for each~$G_i$. This leads to an efficient way to compute~$|M(G)|$, which contradicts the \#P-completeness of this problem \cite{Valiant79}.
\qed
\end{proof}

\begin{proposition}
\label{prop2}
Let~$G=(U \cup V,E)$ be a bipartite graph with~$|M(G)|>0$. Let~$u^* \in U$ be an arbitrary vertex. The computation of~$\sum\limits_{v \in V}\frac{|N_{u^*,v}(G)|}{|M(G)|}$ is \emph{\#P-complete}.
\end{proposition}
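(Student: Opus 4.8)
The plan is to give a polynomial-time Turing reduction from the \#P-complete problem of Proposition~\ref{prop1} — computing $|N_{u,v}(G)|/|M(G)|$ — to the present one; membership is in the same (ratio) sense as in Proposition~\ref{prop1}, so the content is the hardness. Given an instance $(G,u^*,v^*)$ with $G=(U\cup V,E)$ and $|M(G)|>0$ (so $|U|=|V|=n$), I would build an auxiliary bipartite graph $\tilde G=(\tilde U\cup\tilde V,\tilde E)$ by keeping all of $G$ and, for every vertex $v_j\in V\setminus\{v^*\}$, adding a fresh vertex $w_j$ to the $U$-side and a fresh vertex $x_j$ to the $V$-side together with the two edges $w_jv_j$ and $w_jx_j$; thus each $x_j$ has degree one and each $w_j$ degree two.

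Two computations then finish the argument. First, since every $x_j$ has the unique neighbour $w_j$, the edges $w_jx_j$ are forced in any perfect matching of $\tilde G$, and deleting them leaves a perfect matching of $G$; hence $|M(\tilde G)|=|M(G)|>0$, so $\tilde G$ is a legitimate input. Second, I count the matchings of $\tilde G$ that saturate $\tilde U\setminus\{u^*\}$ — precisely those counted by $\sum_{v\in\tilde V}|N_{u^*,v}(\tilde G)|$. In such a matching each $w_j$ is matched to $x_j$ or to $v_j$, and at most one $w_j$ can pick $v_j$ (two would leave two $x$-vertices unsaturated). If none does, the restriction to $U\cup V$ is a near-perfect matching of $G$ with $u^*$ unmatched, which gives $\sum_{v\in V}|N_{u^*,v}(G)|$ possibilities; if exactly one, say $w_{j_0}$ matched to $v_{j_0}$, then $x_{j_0}$ is the unique unsaturated $\tilde V$-vertex and the restriction to $U\cup V$ is a perfect matching of $G-u^*-v_{j_0}$, which gives $\sum_{v\in V\setminus\{v^*\}}|N_{u^*,v}(G)|$ possibilities. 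Hence $\sum_{v\in\tilde V}|N_{u^*,v}(\tilde G)|=2\sum_{v\in V}|N_{u^*,v}(G)|-|N_{u^*,v^*}(G)|$, and dividing by $|M(\tilde G)|=|M(G)|$ gives
\[
\frac{|N_{u^*,v^*}(G)|}{|M(G)|}=2\sum_{v\in V}\frac{|N_{u^*,v}(G)|}{|M(G)|}-\sum_{v\in\tilde V}\frac{|N_{u^*,v}(\tilde G)|}{|M(\tilde G)|}.
\]
So two evaluations of the quantity under study — at $(G,u^*)$ and at $(\tilde G,u^*)$ — compute the \#P-complete value of Proposition~\ref{prop1}; since $\tilde G$ has $4n-2$ vertices, this is a polynomial reduction.

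I expect the only genuinely creative point to be the choice of the gadget: it has to be matching-neutral for the perfect matchings of $\tilde G$ (to keep $|M(\tilde G)|=|M(G)|$) while being asymmetric once $u^*$ is deleted, so that the column $v^*$ behaves differently from the others and $|N_{u^*,v^*}(G)|$ survives in the combination with a fixed coefficient. Everything else — the short case distinction that at most one $w_j$ escapes to $v_j$, and that the escaping index ranges exactly over $V\setminus\{v^*\}$ — is routine bookkeeping.
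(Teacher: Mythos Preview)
Your proof is correct. Both you and the paper reduce from Proposition~\ref{prop1} via two oracle calls and a gadget that isolates the single term $|N_{u^*,v^*}(G)|/|M(G)|$ from the sum, but the gadgets differ. The paper attaches a length-two path $(u^*,v',u'')$ to $u^*$ and another $(v^*,u',v'')$ to $v^*$, then queries the oracle at the new vertex $u''$ in $G'$; the identity obtained is
\[
\sum_{v\in V'}\frac{|N_{u'',v}(G')|}{|M(G')|}=\sum_{v\in V}\frac{|N_{u^*,v}(G)|}{|M(G)|}+1+\frac{|N_{u^*,v^*}(G)|}{|M(G)|}.
\]
You instead leave $u^*$ and $v^*$ untouched and attach a pendant $v_j\!-\!w_j\!-\!x_j$ to every \emph{other} $v_j\in V\setminus\{v^*\}$, querying at the same vertex $u^*$ in both $G$ and $\tilde G$; your identity is
\[
\sum_{v\in\tilde V}\frac{|N_{u^*,v}(\tilde G)|}{|M(\tilde G)|}=2\sum_{v\in V}\frac{|N_{u^*,v}(G)|}{|M(G)|}-\frac{|N_{u^*,v^*}(G)|}{|M(G)|}.
\]
So the paper singles out $v^*$ by \emph{marking} it with an extra path, while you single it out by \emph{omitting} it from the family of pendants. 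The paper's auxiliary graph is smaller ($2n+4$ versus your $4n-2$ vertices) and needs only one local modification per designated pair, whereas your construction has the mild conceptual advantage of keeping the designated $U$-vertex fixed across both oracle calls. Either route works.
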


\begin{proof}
We show that an efficient computation of~$\sum_{v \in V} |N_{u^*,v}(G)| / |M(G)|$ would lead to an efficient computation of~$|N_{u^*,v^*}(G)| / |M(G)|$ for each vertex~$v^* \in V$, in contradiction to Proposition \ref{prop1}. To do so, we fix an arbitrary vertex~$v^* \in V$ and append the paths~$P_{u^*} = (u^*,v',u'')$ and~$P_{v^*} = (v^*, u', v'')$ to~$u^*$ respectively~$v^*$, receiving the graph~$G' = (U' \cup V', E')$ with~$ U' = U \cup \{ u', u''\}$,~$V' = V \cup \{ v', v'' \}$ and~$E' = E \cup \big\{ \{ u^*, v' \}, \{ u', v'' \}, \{ v^*, u' \}, \{ v', u'' \} \big\}$ (see Figure \ref{figure:graphSharpP2}).
\begin{figure}[tbp]
\centering
\includegraphics[width=10cm]{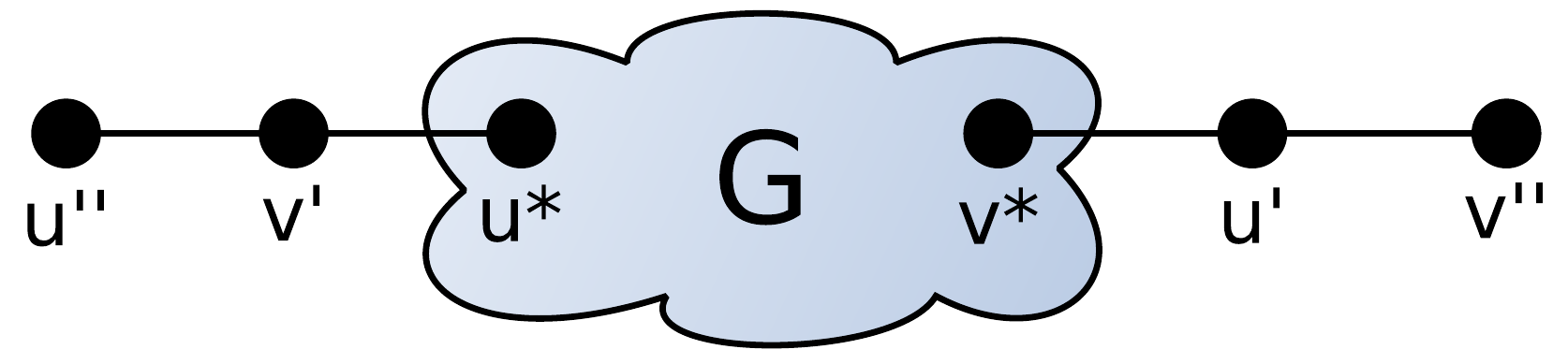}
\caption{Schematic picture for the construction of graph~$G'$}
\label{figure:graphSharpP2}
\end{figure}
The number of perfect matchings in~$G'$ equals the number of perfect matchings in~$G$, so~$|M(G')| = |M(G)|$.
We now consider~$ \sum_{v \in V'} |N_{u'',v}(G')|~$, the sum of all the near-perfect matchings in~$G'$ where the designated vertex~$u''$ remains unmatched. We find 
\[ \sum_{v \in V'} |N_{u'',v}(G')| = \sum_{v \in V} |N_{u'',v}(G')| + |N_{u'',v'}(G')| + |N_{u'',v''}(G)'|. \]
Each of the summands~$|N_{u'',v}(G')|$ for~$v \in V$ equals~$|N_{u^*,v}(G)|$, because~$u''$ being unmatched enforces~$v'$ to be matched with~$u^*$ to get a near-perfect matching in~$G'$. The second summand~$|N_{u'',v'}(G')|$ equals~$|M(G)|$ because every perfect matching in~$G$ plus the edge~$ \{ v', u'' \}$ makes a near-perfect matching in~$G'$ leaving~$u''$ and~$v'$ unmatched. The third summand~$|N_{u'',v''}(G')|$ is equal to~$|N_{u^*,v^*}(G)|$, because~$u''$ and~$v''$ being unmatched implies that~$u'$ and~$v'$ are matched with~$u^*$ and~$v^*$, respectively. We get
\[ \sum_{v \in V'} |N_{u'',v}(G')| = \sum_{v \in V} |N_{u^*,v}(G)| + |M(G)| + |N_{u^*,v^*}(G)|. \]
Dividing each side by~$|M(G)|$ leads to
\[ \sum_{v \in V'} \frac{|N_{u'',v}(G')|}{|M(G')|} = \sum_{v \in V} \frac{|N_{u^*,v}(G)|}{|M(G)|} + 1 + \frac{|N_{u^*,v^*}(G)|}{|M(G)|}. \]
An efficient way of calculating the ratios~$\sum_{v \in V'}\frac{ |N_{u'',v}(G')|}{|M(G')|}$ and~$\sum_{v \in V} \frac{ |N_{u^*,v}(G)|}{|M(G)|}$ leads to an efficient way of calculating~$\frac{|N_{u^*,v^*}(G)|}{|M(G)|}$.
\qed
\end{proof}

\begin{theorem}\label{SharpComplete}
Let~$G=(U \cup V,E)$ be a bipartite graph with~$|M(G)| > 0$. The computation of~$\frac{|N(G)|}{|M(G)|}$ is \emph{\#P-complete}.
\end{theorem}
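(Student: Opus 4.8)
The plan is to prove \#P-hardness by a polynomial-time oracle reduction from the problem shown to be \#P-complete in Proposition~\ref{prop2}, namely computing $\sum_{v\in V}|N_{u^*,v}(G)|/|M(G)|$ for a bipartite graph $G=(U\cup V,E)$ with $|M(G)|>0$ and a fixed vertex $u^*\in U$. Membership (in the same loose sense as Propositions~\ref{prop1} and~\ref{prop2}) is unproblematic, since both $|N(\cdot)|$ and $|M(\cdot)|$ are \#P functions; the content is the hardness.

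Given an instance $(G,u^*)$ of the Proposition~\ref{prop2} problem, I would form $G'$ by attaching a pendant path of length two at $u^*$: add a new $V$-vertex $p$ and a new $U$-vertex $q$ together with the edges $\{u^*,p\}$ and $\{p,q\}$, so that $q$ is a leaf (only neighbour $p$) and $p$ has exactly the neighbours $u^*$ and $q$; clearly $|M(G')|>0$ still. The reduction then rests on two elementary identities. First, $|M(G')|=|M(G)|$: in any perfect matching of $G'$ the leaf $q$ forces the edge $\{p,q\}$, and deleting it leaves a perfect matching of $G$, which gives a bijection. Second, I would compute $|N(G')|$ by a case analysis on the status of $p$ and $q$ in a near-perfect matching $M'$ of $G'$: (i) if $q$ is matched, it is matched to $p$, and $M'\setminus\{p,q\}$ is a near-perfect matching of $G$ whose unmatched pair is arbitrary, contributing $\sum_{u\in U,v\in V}|N_{u,v}(G)|=|N(G)|$; (ii) if $q$ is free and $p$ is matched, then $p$ is matched to $u^*$, and $M'\setminus\{u^*,p\}$ is a near-perfect matching of $G$ with $u^*$ unmatched, contributing $\sum_{v\in V}|N_{u^*,v}(G)|$; (iii) if $p$ and $q$ are both free, $M'$ restricts to a perfect matching of $G$, contributing $|M(G)|$. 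These cases are exhaustive and disjoint, whence $|N(G')|=|N(G)|+\sum_{v\in V}|N_{u^*,v}(G)|+|M(G)|$.

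Dividing by $|M(G')|=|M(G)|$ yields
\[
\frac{|N(G')|}{|M(G')|}=\frac{|N(G)|}{|M(G)|}+\frac{\sum_{v\in V}|N_{u^*,v}(G)|}{|M(G)|}+1 ,
\]
so two evaluations of the quantity $|N(\cdot)|/|M(\cdot)|$ — one on $G'$ and one on $G$ — recover $\sum_{v\in V}|N_{u^*,v}(G)|/|M(G)|$. Since computing the latter is \#P-complete by Proposition~\ref{prop2}, and the construction of $G'$ and the two oracle calls take polynomial time, computing $|N(G)|/|M(G)|$ is \#P-hard, which establishes Theorem~\ref{SharpComplete}.

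The only delicate step is verifying in the case analysis for $|N(G')|$ that the restriction and extension maps between near-perfect matchings of $G'$ and the relevant families of (near-)perfect matchings of $G$ are genuine bijections; this uses only that $q$ is a leaf and $p$ has degree two, so the edges incident to $p$ and $q$ are forced in each case, together with the fact that the sets $N_{u,v}(G)$ partition $N(G)$ according to the unmatched pair. Everything else is routine bookkeeping.
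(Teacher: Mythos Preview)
Your proof is correct and follows the same overall strategy as the paper: reduce from Proposition~\ref{prop2} by constructing an auxiliary graph whose ratio $|N(\cdot)|/|M(\cdot)|$, together with that of $G$, determines $\sum_{v\in V}|N_{u^*,v}(G)|/|M(G)|$. The difference is in the auxiliary construction. The paper attaches a length-two pendant path to \emph{every} vertex of $G$ except $u^*$, then carries out a nine-term case analysis on which of the added vertices are unmatched, eventually arriving at
\[
\frac{|N(\tilde{G})|}{|M(\tilde{G})|}=4\,\frac{|N(G)|}{|M(G)|}+(2n-1)-2\sum_{v\in V}\frac{|N_{u^*,v}(G)|}{|M(G)|}.
\]
You instead attach a single pendant path at $u^*$, and your three-case analysis immediately gives the cleaner identity $|N(G')|/|M(G')|=|N(G)|/|M(G)|+\sum_{v\in V}|N_{u^*,v}(G)|/|M(G)|+1$. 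Both arguments use two oracle calls and rely on the same ``leaf forces its edge'' mechanism, but your construction is more economical; the paper's extra pendants serve only to make the unmatched vertices of $G$ visible through doubled contributions, which your single pendant already accomplishes directly at $u^*$.
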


\begin{proof}
We construct the auxiliary graph~$\tilde{G} = (\tilde{U} \cup \tilde{V}, \tilde{E})$ by adding a path~$P_{v_i} = (v_i, u_i', v_i'')$ to every vertex~$v_i \in V$ and a path~$P_{u_i} = (u_i, v_i', u_i'')$ to every vertex~$u_i \in U \setminus \{ u^* \}$ for arbitrary~$u^*$. For a schematic example consider Figure~\ref{figure:graphSharpP}. Formally, we get the vertex sets~$\tilde{U} = U \cup U' \cup U''$ and~$\tilde{V} = V \cup V' \cup V''$ where
\begin{eqnarray*}
U' &=& \big\{ u' : v \in V \big\} \\
V' &=& \big\{ v' : u \in U \setminus \{ u^* \} \big\} \\
U'' &=& \big\{ u'' : v' \in V' \big\} \\
V'' &=& \big\{ v'' : u' \in U' \big\}.
\end{eqnarray*}

\begin{figure}[tbp]
\centering
\includegraphics[width=10cm]{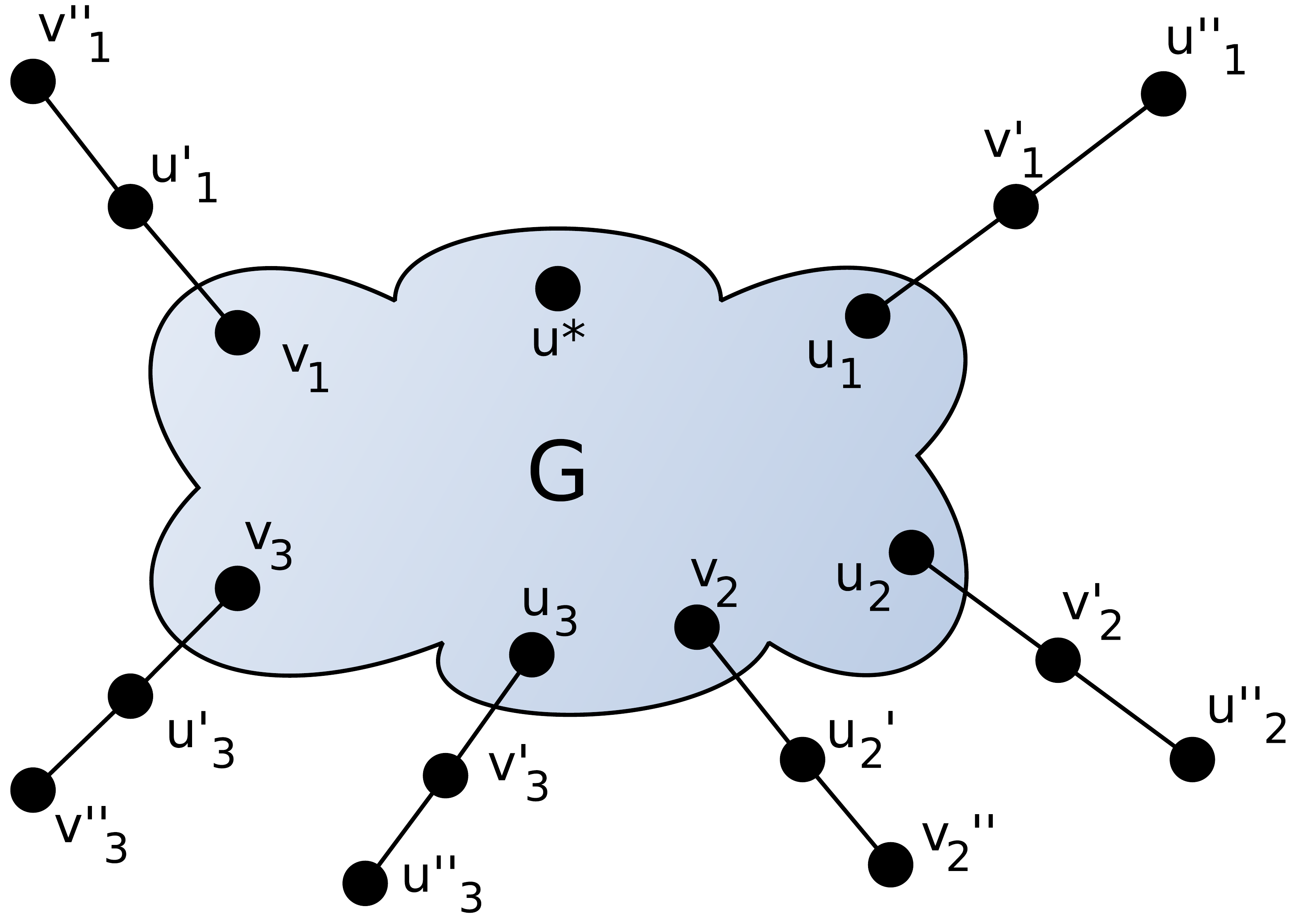}
\caption{Schematic picture for the construction of graph~$\tilde{G}$}
\label{figure:graphSharpP}
\end{figure}

Notice first that~$|M(\tilde{G})| = |M(G)|$. The number of near-perfect matchings of~$\tilde{G}$ is
\begin{eqnarray*}
|N(\tilde{G})| &=& \sum_{u \in \tilde{U}, v \in \tilde{V}} |N_{u,v}(\tilde{G})| \\
&=& \phantom{+} \sum_{u \in U, v \in V} |N_{u,v}(\tilde{G})| + \sum_{u \in U, v' \in V'} |N_{u,v'}(\tilde{G})| + \sum_{u \in U, v'' \in V''} |N_{u,v''}(\tilde{G})| \\
&& + \sum_{u' \in U', v \in V} |N_{u',v}(\tilde{G})| + \sum_{u' \in U', v' \in V'} |N_{u',v'}(\tilde{G})| + \sum_{u' \in U', v'' \in V''} |N_{u',v''}(\tilde{G})| \\
&& + \sum_{u'' \in U'', v \in V} |N_{u'',v}(\tilde{G})| + \sum_{u'' \in U'', v' \in V'} |N_{u'',v'}(\tilde{G})| + \sum_{u'' \in U'', v'' \in V''} |N_{u'',v''}(\tilde{G})|.
\end{eqnarray*}
Note that the 2nd, 4th and 5th summand are zero because an unmatched~$v_i' \in V'$ leads to an unmatched~$u_i'' \in U''$ and an unmatched~$u_i' \in U'$ to an unmatched~$v_i'' \in V''$. We now look closer to the remaining six non-zero summands. 
\begin{enumerate}[label={\alph*)}]
\item The 1st summand easily reduces to~$\sum_{u \in U, v \in V} |N_{u,v}(G)| = |N(G)|$.
\item The 3rd summand also equals~$|N(G)|$ because an unmatched~$v_i'' \in V_i''$ forces~$v_i \in V$ to be matched with~$u_i'$ so no matching in~$G$ can use~$v_i$.
\item Choosing~$u_i' \in U'$ and~$v_j'' \in V''$ in the 6th summand leads to two different cases:
\[ |N_{u_i',v_j''}(\tilde{G})| = 
	\begin{cases}
		0 & i \not= j \\
		|M(\tilde{G})| & i = j
	\end{cases}
\] There are~$n$ ways to choose~$i=j$ so the 6th summand reduces to~$n|M(G)|$.
\item The 7th summand is nearly symmetric to the 3rd, with the difference that vertex~$u^*$ has to be considered separately. So the 7th summand equals~$|N(G)| - \sum_{v\in V}|N_{u^*,v}(G)|$.
\item With the same argumentation as in case c) the 8th summand equals~$(n-1)|M(G)|$.
\item The 9th summand reduces to the 7th one and so to~$|N(G)| - \sum_{v\in V}|N_{u^*,v}(G)|$.
\end{enumerate}
Putting together all the terms we get 
\[ \frac{|N(\tilde{G})|}{|M(\tilde{G})|} =  4 \cdot \frac{|N(G)|}{|M(G)|} + (2n-1) - 2\cdot\sum_{v\in V} \frac{|N_{u^*,v}(G)|}{|M(G)|}. \]

Assume we could efficiently compute the fraction~$|N(G)| / |M(G)|$ for arbitrary bipartite graphs, so in particular for~$G$ and~$\tilde{G}$, we could efficiently compute~$\sum_{v \in V} \frac{|N_{u^*,v}(G)|}{|M(G)|}$ for arbitrary vertex~$u^*$ in contradiction to Proposition \ref{prop2}.\qed
\end{proof}

Figure~\label{fig:experiment1_eps1E9} shows the results of our first experiment for a far smaller $\epsilon$ of $10^{-9}$. The effects as discussed in Section~\ref{sec:experiments} are confirmed.
Figure~\label{fig:experiment2_eps1E9} shows the results of the second experiment for $\epsilon=10^{-9}$. Note, that the spectral bound becomes very tight for such a small $\epsilon$.

\begin{figure}
	\centering
	\begin{subfigure}[b]{0.49\textwidth}
		\includegraphics[width=\textwidth]{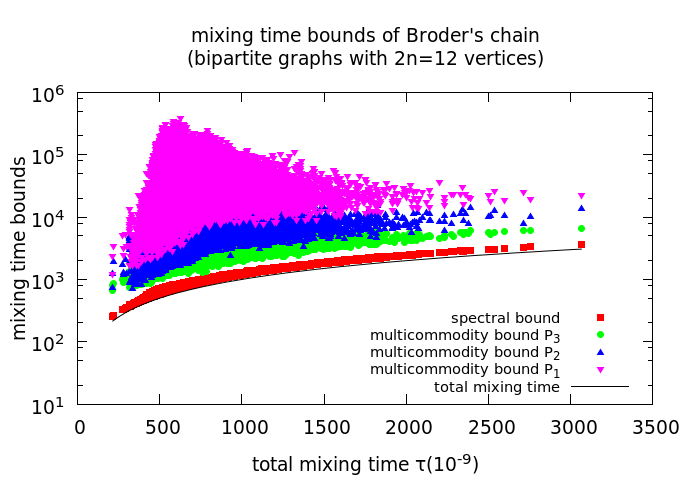}
	\end{subfigure}
	\begin{subfigure}[b]{0.49\textwidth}
		\includegraphics[width=\textwidth]{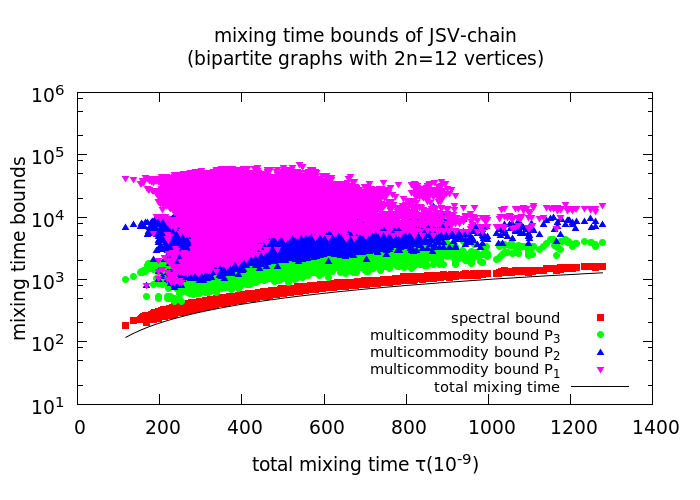}
	\end{subfigure}
	\caption{Results of the first experiment with $\epsilon=10^{-9}$}
	\label{fig:experiment1_eps1E9}
\end{figure}

\begin{figure}
	\centering
	\begin{subfigure}[b]{0.49\textwidth}
		\includegraphics[width=\textwidth]{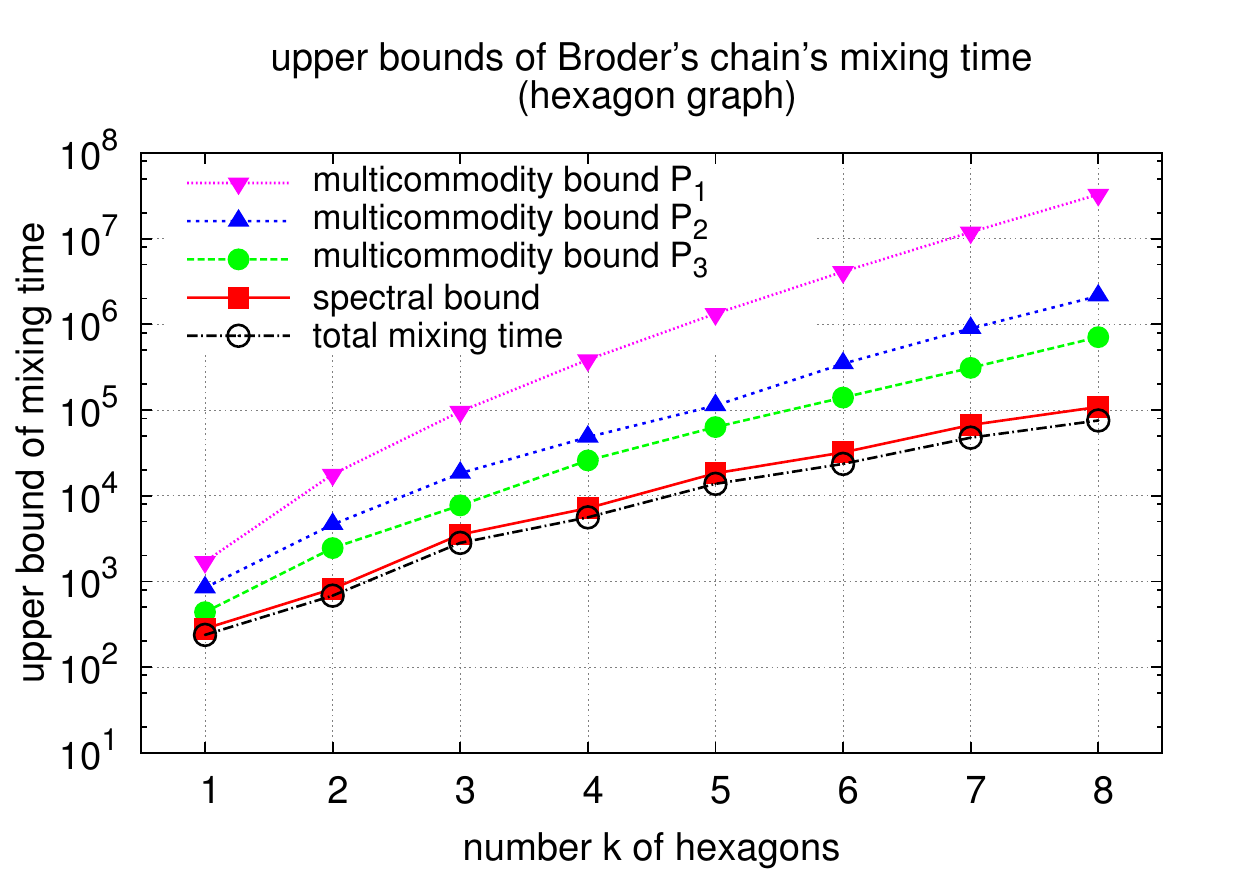}
		\caption{Broder's chain}
	\end{subfigure}
	\begin{subfigure}[b]{0.49\textwidth}
		\includegraphics[width=\textwidth]{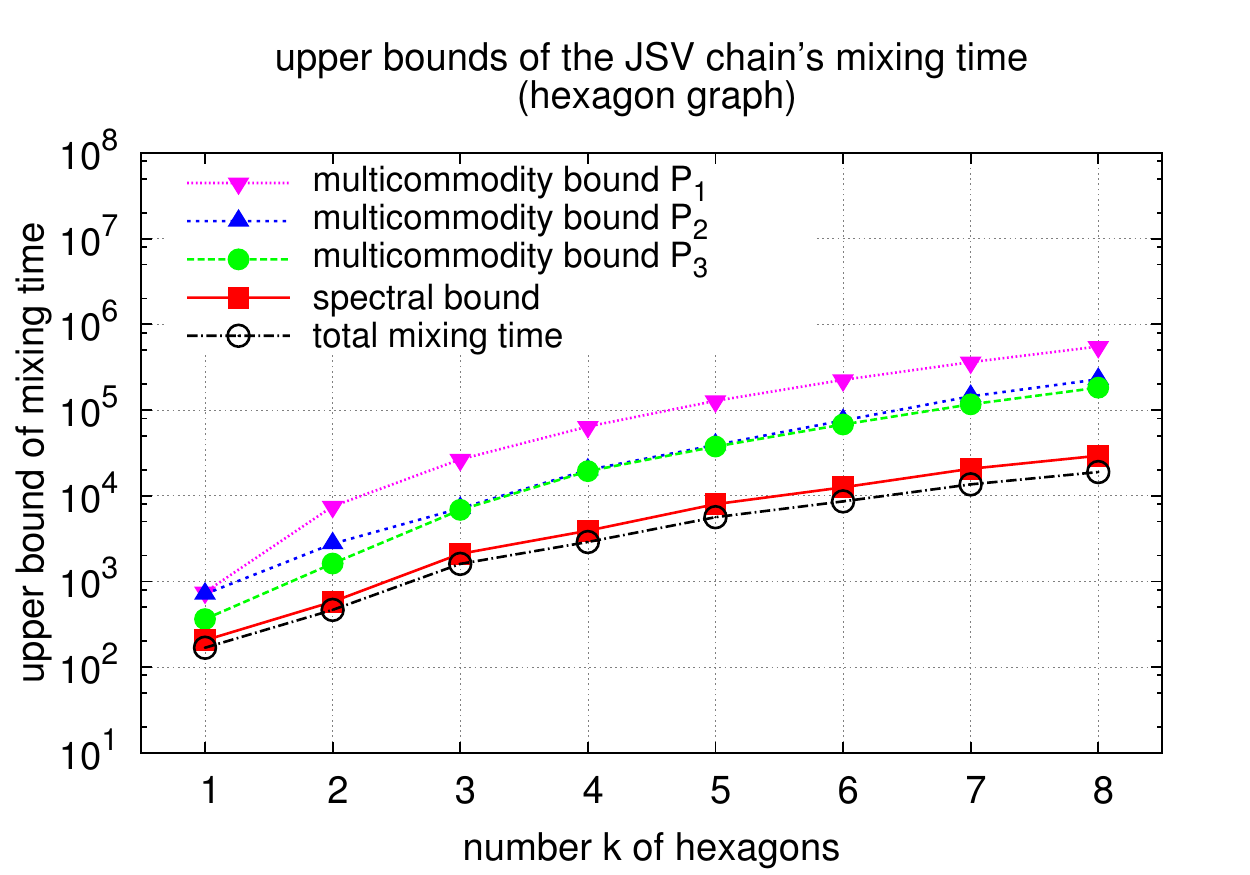}
		\caption{JSV's chain}
	\end{subfigure}
	\caption{Upper bounds of mixing time for hexagon graphs with small~$k$ for $\epsilon=10^{-9}$}
	\label{fig:experiment2_eps1E9}
\end{figure}

\end{appendix}

\end{document}